\algrenewcommand\algorithmicrequire{\textbf{Inputs:}}
\algrenewcommand\algorithmicensure{\textbf{Output:}}
\algnewcommand\algorithmicsteps{\textbf{Steps:}}
\algnewcommand\Steps{\item[\algorithmicsteps]}
\tikzstyle{block} = [rectangle, draw, rounded corners, text centered, minimum height=2em, minimum width=10em, fill=blue!10]
\tikzstyle{arrow} = [thick, ->, >=stealth]
\newtheorem{assumption}{Assumption}
\newtheorem{lemma}{Lemma}
\newtheorem{remark}{Remark}
\newtheorem{theorem}{Theorem}
\newtheorem{definition}{Definition}
\newtheorem{problem}{Problem}
\newtheorem{proposition}{Proposition}
\patchcmd{\maketitle}{\thispagestyle{plain}}{\thispagestyle{empty}}{}{}
\title{\bfseries Risk-Aware Safe Reinforcement Learning for Control of Stochastic Linear Systems}
\author{
    Babak Esmaeili, Nariman Niknejad, and Hamidreza Modares\thanks{Corresponding author} \\
    \\  
    \small Department of Mechanical Engineering, Michigan State University, MI, USA \\
    \small \texttt{\{esmaeil1, niknejad, modaresh\}@msu.edu}
}
\date{}
\begin{document}
\maketitle

\begin{abstract}
This paper presents a risk-aware safe reinforcement learning (RL) control design for stochastic discrete-time linear systems. Rather than using a safety certifier to myopically intervene with the RL controller, a risk-informed safe controller is also learned besides the RL controller, and the RL and safe controllers are combined together. Several advantages come along with this approach: 1) High-confidence safety can be certified without relying on a high-fidelity system model and using limited data available, 2)  Myopic interventions and convergence to an undesired equilibrium can be avoided by deciding on the contribution of two stabilizing controllers, and 3) highly efficient and computationally tractable solutions can be provided by optimizing over a scalar decision variable and linear programming polyhedral sets. 
To learn safe controllers with a large invariant set,  piecewise affine controllers are learned instead of linear controllers. To this end, the closed-loop system is first represented using collected data, a decision variable, and noise. The effect of the decision variable on the variance of the safe violation of the closed-loop system is formalized. The decision variable is then designed such that the probability of safety violation for the learned closed-loop system is minimized.  It is shown that this control-oriented approach reduces the data requirements and can also reduce the variance of safety violations. 
Finally, to integrate the safe and RL controllers, a new data-driven interpolation technique is introduced. This method aims to maintain the RL agent's optimal implementation while ensuring its safety within environments characterized by noise. The study concludes with a simulation example that serves to validate the theoretical results. 
\end{abstract}

\vspace{1em}
\noindent\textbf{Keywords:} Model-Free Control, Probabilistic Control, Reinforcement Learning, Convex Hull

\section{Introduction}
Reinforcement learning (RL) control has recently received a surge of interest due to its pivotal role in enabling autonomous control systems that must operate in dynamic and uncertain environments. In RL, as a branch of machine learning, an agent learns optimal control policies through its interactions with the environment to maximize cumulative rewards. RL has already demonstrated promising capabilities in complex control tasks, such as the control of the non-affine yaw channel of helicopters via off-policy RL methods \cite{zhang2025data} and decision-making for autonomous vehicles using iterative single-critic learning frameworks \cite{zhang2021adaptive}. However, while these successes highlight the potential of RL in real-world applications, most RL methods optimize performance without explicitly considering safety constraints. In safety-critical domains, ensuring that agents act safely during both learning and deployment is vital to prevent undesirable outcomes or catastrophic failures.


To avoid these undesirable outcomes and failures, safe RL holds the promise of enabling autonomous systems to make decisions that are both efficient and safe, opening avenues for applications across diverse domains, from autonomous vehicles and robotics to healthcare and industrial automation. Recent advances in Safe Reinforcement Learning (Safe RL) have sought to address these safety challenges by explicitly incorporating state or action constraints during both learning and deployment phases. The concept of safety in reinforcement learning has been interpreted in various ways across different research directions. One approach defines safe RL as providing risk-aware guarantees, where the likelihood of deviating from a nominal trajectory remains below a specified threshold \cite{zhang2024cvar}. Another common formulation treats safe RL as a constrained Markov decision process (CMDP), aiming to maximize cumulative rewards while keeping the expected cumulative cost under a set limit \cite{wachi2020safe}. However, many real-world scenarios require safety to be enforced continuously, not just on average. As a result, another line of research defines safe RL as optimizing performance while strictly satisfying safety constraints at every time step \cite{konighofer2023online}, typically by ensuring the agent's state remains within a predefined admissible set throughout the learning and deployment phases. In this paper, we formalize a safe RL that guarantees instantaneous satisfaction of safety constraints.

Safety certificates have been extensively employed to provide learning-enabled agents with verifiable safety assurances \cite{brunke2022safe,qin2023adaptive,zanon2020safe,grandia2021multi,mazouchi2021conflict,zhang2021safe,li2020robust}. These safety credentials typically harness control barrier functions (CBFs) to provide myopic fixes to the RL agent actions \cite{gao2022safety,yang2019self,marvi2021barrier,ahmadi2021risk,liu2024fully,zeng2021safety,seo2022safety,chern2021safe,agrawal2017discrete}. This myopic intervention with the RL actions can result in reaching undesired equilibrium points \cite{reis2020control} and yielding poor performance due to frequent interventions. 
Besides, CBF methodologies heavily rely on precise system models. This limitation makes the practical deployment of safe reinforcement learning methods particularly challenging in real-world systems such as autonomous vehicles, aerospace platforms, and robotic manipulators, where obtaining accurate models is difficult and stochastic disturbances are inherent. In such applications, safe control frameworks must not only provide formal guarantees but also operate reliably with noisy and incomplete empirical data. Consequently, there is a strong need for data-driven methods that can directly synthesize safe controllers from available data without requiring full system identification or restrictive modeling assumptions. When a system model is not available, data-driven control methods can be highly advantageous in reducing conservatism and adapting to the situations. Indirect data-driven control (i.e., model-based control) methods learn a system model first and then leverage it to design a control that reaches desired specifications. Direct data-driven control (i.e., model-free) methods bypass learning a system model and directly learn a controller from collected data. 
Nonetheless, indirect learning approaches may not be suited for safety-critical systems primarily for the following reasons. Firstly, they can only develop a system model once specific data conditions relating to state-input data richness are fulfilled. Since data collection is costly and risky in safety-critical settings, relaxing these data prerequisites is pivotal for the efficacy of future autonomous systems. Secondly, the variability of the learned open-loop system is contingent on the signal-to-noise ratio (SNR) of collected data and remains unaffected by control mechanisms. Hence, leveraging control-oriented learning approaches to reduce variability in safety breaches given the available data emerges as a necessity for enhancing safety. Lastly, model-based CBF techniques for stochastic systems are confined to scenarios where noise has a finite range \cite{samuelson2018safety,ahmadi2021risk}.

Direct data-driven methods have gained a surge of interest to devise safe or optimal control strategies \cite{bisoffi2020data,luppi2022data,bisoffi2022controller,de2021low,modares2023}. However, the current scope of research into direct data-driven safe control is restricted to deterministic systems or involves treating noise as either a bounded disturbance, leading to the creation of robust but conservative controllers for the system \cite{}, or as a measurable signal \cite{modares2023}. Unfortunately, the efficacy of robust control diminishes when confronted with systems where noise follows a distribution with infinite support. Additionally, noise is often not practically measurable in real-world scenarios. Notably, in references \cite{bisoffi2022controller,de2021low}, optimal controllers grounded in certainty equivalence principles are formulated for stochastic linear systems. Nevertheless, the analysis of stability and performance is carried out only in hindsight. Consequently, these guarantees pertain solely to the nominal model and predicted outcomes. Disregarding the noise variance in safety violations can engender performance fluctuations when implementing these controllers in practical systems.

Another challenge with direct data-driven safe control is that they mainly leverage set-theoretic control design tools \cite{bisoffi2020data,modares2023safe}. This method typically uses the concept of $\lambda$-contractivity to design controllers that make a given admissible set invariant for the closed-loop system while making the trajectories converge to the origin with a speed of $\lambda$. Set invariance guarantees that starting from inside the set, the system's states will not leave the set in some sense; thus, the set remains safe. However, as the complexity of the system and/or the admissible set increases, it becomes increasingly difficult to make the entire admissible set invariant using set-theoretic tools \cite{blanchini1999set}. In practice, admissible sets are sets for which the system's states are allowed to evolve inside of them and are often defined by the physical limitations of the system and its environment. As a result, designing controllers that can make any desired admissible set invariant is a daunting task \cite{blanchini2008set}. The invariant set is typically a subset of the admissible set, and its size depends on the data richness and the control structure. 

Partitioning complex polyhedral admissible sets into disjoint polyhedral sets is a promising approach for designing controllers for complex admissible sets that cannot be made entirely safe or invariant using just a linear feedback controller \cite{nguyen2022convex,hoai2023further}. These partitioning-based methods, however, are limited to deterministic systems with known dynamics. For systems under noise and uncertain dynamics, a probabilistic or high-confidence risk-informed safe controller must be designed. Besides, the size of the safe set inside the admissible set depends on the data quality and the risk level the system can tolerate. Therefore, it is of vital importance to design data-based controllers that minimize the risk of safety violations given only the available data. 

Motivated by the practical challenges discussed above, particularly the need for scalable safe learning frameworks that operate effectively under uncertainty and with limited data, we propose a novel approach to safe reinforcement learning that is both risk-aware and data-driven. The goal is to bridge the gap between theoretical safety guarantees and practical deployment requirements in stochastic control systems, where model inaccuracies, noise, and data collection limitations present significant obstacles. In this paper, we first introduce a safe feedback control policy that makes the convex hull of a known number of ellipsoids $\lambda$-contractive in expectation. By imposing a set containment condition to ensure that the convex hull set is inside the admissible set or covers it entirely if possible, safety is guaranteed for a maximum-size set inside the admissible set. It is shown that this approach is risk-neutral since it only guarantees safety in expectation. A risk-informed piecewise-affine safe control design is highly desirable due to its robustness guarantees, especially when the system model is uncertain, increasing the risk of safety violation. Therefore, next, a direct data-driven risk-informed piecewise-affine safe control approach is introduced to minimize safety violation variance and maximize the size of the safe set. To this end, a control-oriented approach is taken in which the closed-loop system model is directly characterized by data and a decision variable, and the control gains of the piecewise-affine controller (which are a function of the decision variable) are learned to certify safety with minimum variance on its violation directly. This control-oriented approach demands less data than existing indirect learning methods while offering reduced safety violation risk. Compared to traditional methods, the proposed framework offers several key advantages. It enables probabilistic safety with risk-awareness by minimizing the variance of safety violations, operates directly from empirical data without requiring explicit system identification, and constructs scalable safe sets over complex admissible spaces using convex hull methods. Additionally, the lightweight scalar optimization between the safe controller and the RL controller minimizes intervention frequency, preserving both safety and task performance in stochastic environments. The learned risk-informed safe controller is then integrated with any RL controller to certify its safety with high probability. Instead of only using a safety certificate to intervene with the RL actions, a learned safe controller is integrated with an RL controller, ensuring both safety and performance guarantees. A novel data-based optimization over a scalar is presented to determine the contribution of each controller at each point in time. The effectiveness of the proposed approach is demonstrated through a simulation example. \vspace{6pt}

\section{NOTATIONS AND PRELIMINARIES}

Throughout the paper, the Kronecker product is denoted by $\otimes$, and the identity matrix of appropriate dimension is represented as $I$. The set of positive semi-definite $n \times n$ matrices is represented by $\mathbb{S}^n$. For a matrix $A$, $A_i$ indicates its $i$-th row, and $A_{ij}$ represents the element in the $i$-th row and $j$-th column of $A$. For matrices or vectors $A$ and $B$ with the same dimensions, $A (\leq, \geq) B$ denotes a component-wise inequality, where ${A_{ij}} (\leq, \geq) {B_{ij}}$ holds for all $i$ and $j$. For a matrix $Q$, $Q (\preceq,\succeq) 0$ implies that $Q$ is negative or positive semi-definite. Given a set $\mathcal{S}$ and a scalar $\mu \ge 0$, $\mu \mathcal{S}$ is defined as the set of all $\mu x$ where $x$ belongs to $\mathcal{S}$. When dealing with symmetric matrices, the symbol ($*$) is used to denote each of the symmetric blocks within the matrix. The frontier of a given set $\mathcal{S}$ is denoted as $\operatorname{Fr}(\mathcal{S})$.

The convex hull formed by the sets $\mathcal{S}_1,\mathcal{S}_2,\ldots,\mathcal{S}_n$ is denoted as $\mathcal{S} = \operatorname{Co}(\mathcal{S}_1,\mathcal{S}_2,\ldots,\mathcal{S}_n)$. Any element of the convex hull, i.e., any $x \in \mathcal{S}$, can be expressed as a weighted combination of elements from the sets $\mathcal{S}_1,\mathcal{S}_2,\ldots,\mathcal{S}_n$. That is, 
\begin{equation}\label{eq.combination}
x=\alpha_1x_1+\alpha_2x_2+\ldots+\alpha_nx_n,
\end{equation}
for some $x_1 \in \mathcal{S}_1$, $x_2 \in \mathcal{S}_2,\ldots,x_n \in \mathcal{S}_n$, along with weights $\alpha_1,\alpha_2,\ldots,\alpha_n$ such that $\sum\limits_{i=1}^n \alpha_i = 1$ and $0 \leq \alpha_i \leq 1$.

\begin{definition}\label{def_0}
For any two positive integers $a$ and $b$, $\operatorname{mod}(a,b)$ denotes the remainder of their division. Given a set of elements with a fixed size $M$, the rotational indexing function $\operatorname{R_m}(i)$ maps an index $i$ to another index $j$ in a circular or cyclic manner. In this paper, the mapping function $\operatorname{R_m}(.)$ is defined as
\begin{equation}\label{eq.mapping}
j=\operatorname{R_m}(i)=\operatorname{mod}(i+M-2,M)+1.
\end{equation}
\end{definition}

Let the random variables be defined on a probability space denoted as $(\Gamma,\mathcal{F},\mathbb{P})$. Here, $\Gamma$ represents the sample space, $\mathcal{F}$ is the associated $\sigma$-algebra, and $\mathbb{P}$ denotes the probability measure. For a random variable $\nu : \Gamma \rightarrow \mathbb{R}^n$ defined on this probability space, the notation $\nu \in \mathbb{R}^n$ indicates its dimension. The mathematical expectation of $\nu$ is denoted as $\mathbb{E}[\nu]$, and if $\mathbb{E}[\nu]=\hat{\nu}$, the covariance of $\nu$ can be computed using the formula $\mathbb{E}[(\nu-\hat{\nu})(\nu-\hat{\nu})^T]$. For a random vector $\nu \in \mathbb{R}^{n \times 1}$, the following lemma holds.

\begin{lemma}\label{lem_1}
\cite{coppens2020datadriven} For a given random vector $\nu \in \mathbb{R}^{n \times 1}$ and a matrix $Q \in \mathbb{R}^{n \times n}$, one has
\begin{equation}\label{eq.Variance_1_2}
\mathbb{E}[\nu^T Q \nu] = \mathrm{Tr}(Q\mathbb{E}[\Tilde{\nu}\Tilde{\nu}^T])+\mathbb{E}[\nu]^T Q \mathbb{E}[\nu],
\end{equation}
where $\Tilde{\nu}=\nu-\mathbb{E}[\nu]$.
\end{lemma}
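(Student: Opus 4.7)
The plan is to expand $\nu$ around its mean and then use linearity of expectation together with the cyclic and linear properties of the trace. Write $\hat\nu := \mathbb{E}[\nu]$ and decompose $\nu = \hat\nu + \tilde\nu$ with $\tilde\nu = \nu - \hat\nu$, which by construction satisfies $\mathbb{E}[\tilde\nu]=0$. Substituting this decomposition into the quadratic form produces four terms:
\begin{equation*}
\nu^T Q \nu \;=\; \hat\nu^T Q \hat\nu \;+\; \hat\nu^T Q \tilde\nu \;+\; \tilde\nu^T Q \hat\nu \;+\; \tilde\nu^T Q \tilde\nu .
\end{equation*}

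Next I would take expectations term by term. The first term is deterministic, hence $\mathbb{E}[\hat\nu^T Q \hat\nu] = \mathbb{E}[\nu]^T Q\, \mathbb{E}[\nu]$, which is already the second summand in the claimed identity. The two cross terms vanish because $\hat\nu$ and $Q$ are deterministic and may be pulled out of the expectation, leaving a factor $\mathbb{E}[\tilde\nu]=0$. Only the purely centered quadratic term $\mathbb{E}[\tilde\nu^T Q \tilde\nu]$ remains to be rewritten.

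For that term I would use the standard trace trick: since $\tilde\nu^T Q \tilde\nu$ is a scalar, it equals its own trace, and by cyclicity
\begin{equation*}
\tilde\nu^T Q \tilde\nu \;=\; \mathrm{Tr}(\tilde\nu^T Q \tilde\nu) \;=\; \mathrm{Tr}(Q\, \tilde\nu \tilde\nu^T).
\end{equation*}
Linearity of expectation then commutes with the trace, yielding $\mathbb{E}[\tilde\nu^T Q \tilde\nu] = \mathrm{Tr}\!\left(Q\, \mathbb{E}[\tilde\nu \tilde\nu^T]\right)$, which matches the first summand in the claimed identity. Combining the three surviving contributions gives \eqref{eq.Variance_1_2}.

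There is no real obstacle here; the result is a routine mean/variance decomposition of a quadratic form. The only step that needs a moment of care is justifying the interchange of expectation and trace, which is immediate because trace is a finite linear combination of entries and each entry of $\tilde\nu\tilde\nu^T$ is assumed integrable (implicit in the existence of $\mathbb{E}[\nu^T Q \nu]$). No additional distributional assumption on $\nu$ is required.
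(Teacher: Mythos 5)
Your proof is correct and complete: the decomposition $\nu = \hat\nu + \tilde\nu$, the vanishing of the cross terms, and the trace-cyclicity argument on the centered quadratic term give exactly the claimed identity. The paper itself states this lemma as a cited result from the literature and provides no proof, so there is nothing to compare against; your derivation is the standard one and fills that gap correctly.
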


\vspace{-6pt}

The following definitions are provided to define sets that will be used in this paper to characterize admissible and safe sets.

\begin{definition}\label{def_1} 
\cite{blanchini2008set} A C-set is a set that is both convex and compact, and its interior contains the origin.
\end{definition} 
\vspace{-9pt}

\begin{definition}\label{def_2} 
\cite{blanchini2008set} A polyhedral C-set, denoted by ${\mathcal {S}} (F,g)$, is represented by
\begin{align}
{\mathcal {S}} (F,g) \, & = \{ x \in {\mathbb{R}^n}:Fx \le g\}  \nonumber \\
& = \{ x \in {\mathbb{R}^n}:{F_l}x \le g_l,\, \, \, l = 1,\ldots,q \},
\label{poly}
\end{align}
where $F \in {\mathbb{R}^{q \times n}}$ is a matrix with $q$ rows, i.e., 
${F_l}$ for $l = 1,\ldots,q$, and $g$ is a vector with elements $g_l$, $l = 1,\ldots,q$.
\end{definition}

\begin{definition}\label{def_3} 
\cite{blanchini2008set} For a given positive-definite matrix $P$, an ellipsoidal C-set is denoted by
\begin{align}\label{eq.safe_set_ellipsoid}
\mathcal{E}(P,1) \, & = \{ x \in {\mathbb{R}^n}:x^TP^{-1}x \leq 1\}.
\end{align}
\end{definition} 

\begin{lemma}\label{lem_2}
\cite{geng2019data} Assume that there is a joint chance constraint denoted by 
\begin{equation}\label{eq.joint_chance}
\mathbb{P}[Hx+Mw \leq g] \geq (1-\epsilon),
\end{equation}
where $x \in \mathbb{R}^n$ represents the decision variable, $w$ is a random variable with a normal distribution $\mathcal{N}(0,\Sigma)$, $H$ and $M$ are matrices with dimensions $q \times n$, and $g$ is a vector in $\mathbb{R}^q$. Now, if the constraints 
\begin{equation}\label{eq.joint_chance_j}
H_jx+M_j\mu \leq g_j-k_j\sqrt{M_j\Sigma M_j^T}
\end{equation}
are satisfied for all $j=1,\ldots,q$, where $H_j$ and $M_j$ are the $j$-th rows of matrices $H$ and $M$, respectively, $k_j=\sqrt{\frac{1-\epsilon_j}{\epsilon_j}}$, and $\sum \limits_{j} \epsilon_j \leq \epsilon$, then the original joint chance constraint \eqref{eq.joint_chance} is also satisfied.
\end{lemma}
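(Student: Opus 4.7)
The plan is to combine Boole's inequality (the union bound) with the one-sided Chebyshev (Cantelli) inequality. The outer step will reduce the joint chance constraint to $q$ individual ones, each with its own risk budget $\epsilon_j$, while the inner step will certify each individual constraint using only the mean and covariance of $w$.

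First, I would write the complement of the feasible event as a union of row-wise violations, $\{Hx+Mw\not\le g\}=\bigcup_{j=1}^q\{H_jx+M_jw>g_j\}$, and apply subadditivity to obtain
\begin{equation*}
\mathbb{P}[Hx+Mw\not\le g]\;\le\;\sum_{j=1}^q \mathbb{P}[H_jx+M_jw>g_j].
\end{equation*}
Since $\sum_j\epsilon_j\le\epsilon$, it then suffices to establish $\mathbb{P}[H_jx+M_jw>g_j]\le\epsilon_j$ for each $j$. For this inner bound, I would set $Y_j:=H_jx+M_jw$, so that its mean is $\bar Y_j=H_jx+M_j\mu$ (with $\mu=\mathbb{E}[w]=0$) and its variance is $\sigma_j^2=M_j\Sigma M_j^T$. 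Cantelli's inequality then gives, for any $t>0$,
\begin{equation*}
\mathbb{P}[Y_j-\bar Y_j\ge t]\;\le\;\frac{\sigma_j^2}{\sigma_j^2+t^2}.
\end{equation*}
Choosing $t:=g_j-\bar Y_j$, hypothesis \eqref{eq.joint_chance_j} reads exactly as $t\ge k_j\sigma_j$, i.e.\ $t^2\ge \frac{1-\epsilon_j}{\epsilon_j}\sigma_j^2$. Substituting into the Cantelli bound and simplifying yields $\mathbb{P}[Y_j>g_j]\le \epsilon_j$, which together with the union bound closes the argument.

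The only nontrivial design choice is picking the right tail inequality. Because $w$ is Gaussian, a Gaussian quantile would give a strictly sharper coefficient $k_j=\Phi^{-1}(1-\epsilon_j)$; however, the form $k_j=\sqrt{(1-\epsilon_j)/\epsilon_j}$ in the statement is precisely the Cantelli coefficient, so the lemma is really distribution-free and the Gaussianity of $w$ is incidental. Recognizing this is what pins down the inner step. A minor care point is that Cantelli requires $t>0$, which is ensured by $k_j\sigma_j\ge 0$ together with \eqref{eq.joint_chance_j}; the degenerate case $\sigma_j=0$ reduces to a deterministic inequality and is handled by the same condition.
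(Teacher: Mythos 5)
Your proof is correct. The paper does not actually prove this lemma --- it is imported verbatim from the cited reference --- so there is no in-text argument to compare against, but your two-step decomposition (Boole's inequality to split the joint constraint into $q$ individual constraints with risk budgets $\epsilon_j$, then Cantelli's one-sided inequality to certify each individual constraint via $\frac{\sigma_j^2}{\sigma_j^2+t^2}\le\frac{1}{1+k_j^2}=\epsilon_j$) is exactly the standard derivation of this result in the chance-constrained optimization literature, and every step checks out, including the monotonicity of the Cantelli bound in $t$ and the degenerate case $M_j\Sigma M_j^T=0$. Your closing observation is also apt and worth retaining: the coefficient $k_j=\sqrt{(1-\epsilon_j)/\epsilon_j}$ is the distributionally robust (moment-based) constant rather than the Gaussian quantile $\Phi^{-1}(1-\epsilon_j)$, so the lemma as stated holds for any $w$ with the given first two moments and is conservative for the Gaussian case actually assumed in the paper.
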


In Lemma 2, $k_j$ is a constant, and $\epsilon_j$ represents the accepted probability of violation of the constraint $H_jx+M_jw \leq g_j$.

\section{Problem Formulation}
Consider the following discrete-time linear time-invariant (LTI) system
\begin{equation}\label{eq.LTI}
x(t+1) = Ax(t)+Bu(t)+w(t),
\end{equation}
where $A \in \mathbb{R}^{n \times n}$ is the system matrix and $B \in \mathbb{R}^{n \times m}$ denotes the input matrix. Moreover, $x(t) \in \mathbb{R}^n$ and $u(t) \in \mathbb{R}^m$ represent the system states and control input at time-step $t$, respectively, and $w(t)$ is the system noise.

\begin{assumption}\label{Assumption_1}
The vector $w(t)=[w_1(t),\ldots,w_n(t)]^T$ representing the noise in the system \eqref{eq.LTI} is assumed to have a Gaussian distribution. It has a mean of zero and a variance of $\Sigma$, denoted as $w \sim \mathcal{N}(0,\Sigma)$ where $\mathbb{E}[w_i(t)w_j(t)]=0$ for $i \neq j$, and $\mathbb{E}[w_i^2(t)]=\sigma_i^2$ for $i=1,\ldots,n$.
\end{assumption}

\begin{assumption}\label{Assumption_2}
The unknown matrix pair $(A,B)$ is stabilizable.
\end{assumption}

Prior to describing the problem, we emphasize the significance of contractive sets as a primary technique for ensuring safety. To clarify this concept, the following definitions that establish a framework for maintaining the system within a predetermined set of states over time are first provided. This framework is crucial for applications that prioritize safety and assists in designing controllers capable of enforcing set boundaries.

\begin{definition}\label{def_4} 
(\textbf{Contractive Set for Deterministic Systems, i.e., the system \eqref{eq.LTI} with $w(t) \equiv 0$}):
If for every $x(t) \in \mathcal{S} \subseteq \mathbb{R}^n$, it holds that $x(t+1) \in \lambda\mathcal{S}$ for all $t \geq 0$, where $0 < \lambda \leq 1$, then $\mathcal{S}$ is referred to as a $\lambda$-contractive set.
\end{definition}

\begin{definition}\label{def_5} 
\cite{modares2023} (\textbf{Contractive Set in Expectation (CSiE)}): A set $\mathcal{S} \subseteq \mathbb{R}^n$ is called $\lambda$-contractive in expectation for the system \eqref{eq.LTI} if $x(t) \in \mathcal{S}$ implies that $\mathbb{E}[x(t+1)] \in \lambda\mathcal{S}$ $\forall t \geq 0$.
\end{definition}

\begin{definition}\label{def_6} 
\cite{modares2023} (\textbf{Contractive Set in Probability (CSiP)}): A set $\mathcal{S} \subseteq \mathbb{R}^n$ is called $\lambda$-contractive in probability for the system \eqref{eq.LTI} if $x(t) \in \mathcal{S}$ implies that $\mathbb{P}[x(t+1) \in \lambda\mathcal{S}] \geq (1-\epsilon)$ $\forall t \geq 0$, where $\epsilon$ is an acceptable risk level.
\end{definition}

\begin{definition}\label{definition_8}
(\textbf{Admissible set}): An admissible set is defined according to the permissible physical boundaries within which the system is allowed to operate.
\end{definition}


\begin{definition}\label{definition_9}
(\textbf{Safe set}): A subset of an admissible set is called a safe set if it is invariant in some sense. That is, starting from the safe set, the system's trajectories do not leave it in some sense. 
\end{definition}



It is shown in \cite{bisoffi2020data} that for a deterministic system, a $\lambda$-contractive set is an invariant set and thus is a safe set. That is, if a set $\mathcal{S}$ is $\lambda$-contractive, and if $x(0) \in \mathcal{S}$, then $x(t) \in \mathcal{S}, \,\,\, \forall t \ge 0$. For stochastic systems, it is shown in \cite{modares2023} that if the set $\mathcal{S}$ is CSiE (CSiP), then the set is safe in expectation (in probability). That is, if $x(0) \in \mathcal{S}$, then $\mathbb{E}[x(t)] \in \mathcal{S}, \,\,\, \forall t \ge 0$ \big($\mathbb{P}[x(t) \in \mathcal{S}] \geq (1-\epsilon), \,\,\, \forall t \geq 0$ for some risk level $\epsilon$\big). In this paper, safety in probability is considered since it provides more robustness compared to safety in expectation. The former is risk-aware, while the latter is risk-neutral.

While the system trajectories are allowed to evolve within the admissible set, it is not always possible to make the entire admissible set safe. The size of the safe set depends on the data quality and the control structure. Therefore, to improve safety, the controller must maximize the size of the safe set based on the available data and the control structure. Existing linear controllers limit the size of the safe set, which can significantly limit the maneuverability of the RL agent. Therefore, in this paper, data-based piecewise-affine nonlinear controllers are designed for safety. The following problem formalizes the safe optimal control problem.

\begin{problem}\label{Problem_1}
(\textbf{Safe Optimal Control}): Consider the given system \eqref{eq.LTI}. Our objective is to design a control policy $\pi(t)=u\big(x(t)\big)$ by solving the following constrained optimal control problem
\begin{align}\label{eq.Problem}
\begin{array}{l}
\mathop{\arg \min}\limits_{\pi} \,\,\, J\big(x(t),\pi(t)\big) \\
\mathrm{s.t.} \,\,\,\,\,\, \mathbb{P}[x(t) \in \mathcal{S}] \geq (1-\epsilon), \,\,\, \forall t \geq 0, \,\,
\end{array}
\end{align}
where the cost function $J\big(x(t),\pi(t)\big)$ is defined as
\begin{align}\label{eq.cost_RL}
J\big(x(t),\pi(t)\big) = \mathbb{E}\big[\sum_{t=0}^\infty {\gamma^{t} r\big(x(t),\pi(t)\big)}\big],
\end{align}

Here, $\mathcal{S}=\{x:h(x) \geq 0\}$ represents a pre-specified admissible set that includes hard constraints based on the safety function $h(x)$, and $\epsilon$ specifies an acceptable risk level. Additionally, $0 < \gamma \leq 1$ is a positive discount factor, and $r\big(x(t),\pi(t)\big)$ denotes the reward function that implicitly reflects the desired specifications.
\end{problem}


\begin{remark}
Problem~\ref{Problem_1} aligns with the formulation of Safe Reinforcement Learning (Safe RL), where the goal is to maximize performance while ensuring probabilistic safety constraints are satisfied. It requires that the system's state remains within the admissible set \( \mathcal{S} \) with high probability, while also minimizing the expected cumulative cost. However, due to the presence of probabilistic constraints over an infinite horizon, directly solving this problem is computationally intractable in general. This motivates the need for approximate solutions that decouple performance and safety, as discussed below. In our proposed approach, this decoupling is handled via a data-driven risk-aware safe controller that supervises the RL agent with minimal intervention, as later described in Section VIII.
\end{remark} 

\begin{assumption}\label{Assumption_3}
The admissible set is described as a polyhedral set that remains unchanged over time. It is represented as a polyhedral set ${\cal{S}}(F,g)$ defined in \eqref{poly}, for which the safety function $h(x)$ is also defined as $h(x) = g - Fx$.
\end{assumption}

Finding a feedback controller that solves Problem 1 is computationally intractable even for systems with known dynamics and even for the simplest case of using linear controllers for time-invariant C-set constraints defined by the function $h(x)$. Consequently, instead of directly addressing the optimization problem, existing RL algorithms separate safety and performance concerns: They first learn an unconstrained control policy $u^*$ that minimizes the cost function $J$ in \eqref{eq.cost_RL} without considering physical constraints (assuming $\mathcal{S}=R^n$). Subsequently, a model-based safety certifier or shield is utilized to make minimal adjustments to the RL's actions while ensuring safety. For deterministic systems, this implementation involves solving the following optimization problem where the constraints act as a shield, certifying the safety of the RL actions prior to deployment \cite{cheng2019end}.
\begin{align}\label{eq.shield}
& {u^s} = \mathop{\arg \min}\limits_{u} \,\,\, {(u-u^*)^T}(u-u^*) \nonumber \\
& \mathrm{s.t.} \,\,\,\,\,\,\,\,\,\,\, h\big(x(t+1)\big)-h\big(x(t)\big)+\rho h\big(x(t)\big) \geq 0, \,\, \forall t \geq 0, \,\,\, \rho \leq 1,
\end{align} 

\noindent in which the constraint refers to a barrier certification constraint that ensures the set $\mathcal{S}$ remains invariant, and $u^s$ signifies the safe optimal control input applied to the system.

Nevertheless, this approach has certain drawbacks. Firstly, it requires complete knowledge of the system dynamics, which may not always be available. Secondly, acquiring a model of the system using data can be data-intensive, creating a bottleneck in certifying safety under uncertainties. Additionally, in stochastic systems, constructing a robust controller using a worst-case model often leads to excessively conservative behavior, which in turn degrades performance as the safety mechanism tends to intervene frequently alongside the RL controller. Moreover, these interventions are often executed in a myopic manner, correcting RL actions locally without considering long-term task performance. This lack of foresight can inhibit the RL agent’s ability to explore optimally or to converge to high-performing policies. In contrast, our proposed approach addresses this limitation by learning an unconstrained RL policy and a risk-aware safe control policy separately and then merging them together to optimize the performance while ensuring safety. This approach allows learning for the sake of safety in a closed-loop manner, which reduces conservatism. Besides, in sharp contrast to the CBF approaches, which, in general, are non-convex for discrete-time systems, our approach requires only solving an online scalar convex optimization that interpolates the two policies. This allows the system to enforce probabilistic safety constraints in a more global and adaptive manner while minimizing interference with the RL controller’s autonomy. Besides, the proposed approach guarantees the stability of the system if the safe set is compact. Thus, the proposed method achieves a better balance between safety and performance than myopic CBF-based corrections. The stochastic counterpart of the CBF constraint in \eqref{eq.shield} is limited to noises with finite support \cite{samuelson2018safety,ahmadi2021risk}, further restricting its applicability in practical scenarios involving unbounded stochastic disturbances. Lastly, a study conducted in \cite{reis2020control} has shown that imposing both a control Lyapunov stability constraint and a CBF-based safety constraint can cause undesired convergence towards an equilibrium solution, highlighting potential conflicts between safety and performance objectives in such frameworks.

Similar to existing safe RL algorithms, we separate safety and optimality concerns. However, in sharp contrast to previous results, we learn two different control policies (i.e., a safe control policy and an RL control policy) and merge them together rather than learning only an RL control policy and myopically intervening with it. Our approach is RL-agnostic and will certify the safety of any RL algorithm. The safe controller is learned to avoid limiting the maneuverability of the RL controller as much as possible, thus significantly reducing conflict. This is because, in the optimization Problem 1, since the entire admissible set $\mathcal{S}$ cannot be made invariant or safe in general, the safe controller, when merged with the RL controller, will confine the RL system trajectories to a subset of $\mathcal{S}$ which is invariant in probability. That is, the constraint $ \mathbb{P}[x(t) \in \mathcal{S}] \geq (1-\epsilon)$ will actually be satisfied by ensuring $\mathbb{P}[x(t) \in \mathcal{S}_c] \geq (1-\epsilon)$ where $\mathcal{S}_c \subseteq \mathcal{S}$ is the safe set. Therefore, maximizing the size of the safe set $\mathcal{S}_c \subseteq \mathcal{S}$ is crucial to improving RL performance. 

To this end, two different approaches are presented to improve safety. First, a certainty equivalence-based direct learning technique is developed, enabling the acquisition of a risk-neutral safety backup policy. Second, a risk-informed piecewise-affine controller is learned for safety that maximizes the size of the safe set. This is in sharp contrast to existing learning-based safe controllers that are limited to linear controllers with restricted regions of attraction and are typically risk-neutral. This controller not only seeks to maximize the size of the safe set but also takes into account the quality of the available data. In particular, the quality of data impacts not just the estimation accuracy but also the variability of the closed-loop behavior. To explicitly account for this, the proposed controller not only ensures that the expected state lies within the safe set but also minimizes the variance of constraint violations. This is achieved through a variance-aware formulation that optimizes the spread of the state distribution using a data-driven characterization of the closed-loop dynamics. As a result, the controller ensures high-probability satisfaction of safety constraints under stochastic disturbances, rather than merely providing guarantees in expectation. By adapting the safe set's size based on data quality, we aim to find a balance between safety and performance, allowing the RL agent to perform in complex and dynamic environments. The RL and safe control policies are finally merged through linear programming optimization to determine their contributions over time.

\vspace{-6pt}

\section{Open-loop Safety using Piecewise-affine Controllers}
This section presents a solution to certify the largest safe set (i.e., invariant set) of a deterministic linear system inside an admissible set. This approach will then be leveraged to design controllers that maximize the size of the safe set inside an admissible set. To approximate the safe set, the concept of the convex hull of ellipsoids is leveraged, inspired by \cite{nguyen2022convex}. Compared to \cite{nguyen2022convex}, we extend the invariant sets to $\lambda$-contractive sets and provide more insight into how the trajectories traverse through the ellipsoids, which will be leveraged in the subsequent sections for data-based control design. The following problem formalizes finding the maximum safe set for an open-loop system. 

\begin{problem}\label{Problem_2}
\textbf{(Largest CSiE using the convex hull of ellipsoids):} Consider the following open-loop deterministic LTI system
\begin{equation}\label{eq.LTI_openLoop}
x(t+1)=Ax(t).   
\end{equation}

Let $\mathcal{E}(P_i,1)$ for $i=1,\ldots,n_v$ be a set of ellipsoids, where $n_v$ denotes the number of ellipsoids. Find the largest safe set within the polyhedral admissible set $\mathcal{S}$ defined in \eqref{poly} using the convex hull of ellipsoids, i.e., $\mathcal{S}_c=\operatorname{Co}\big(\mathcal{E}(P_1,1),\ldots,\mathcal{E}(P_{n_v},1)\big)$.
\end{problem}

By considering the fact that if $x(t) \in \mathcal{S}_c$, then, according to \eqref{eq.combination}, it can be expressed as
\begin{equation}\label{eq.x_k_proof}
x(t)=\sum\limits_{i=1}^{n_v}\alpha_i(t)\upsilon_i(t),
\end{equation}
for some $\upsilon_i(t) \in \mathcal{E}(P_i,1)$, and the time-varying parameters $\alpha_i(t)$ satisfy the conditions $\sum\limits_{i=1}^{n_v}\alpha_i(t)=1$ and $0 \leq \alpha_i \leq 1$ for $i=1,\ldots,n_v$.


\begin{theorem}\label{theorem_1}
Consider the system \eqref{eq.LTI_openLoop}. Let there exist matrices $P_i \in$ $\mathbb{S}^{n}$ and positive scalars $\mu_i$ satisfying the following optimization problem for $i=1,\ldots,n_v$ and $j=\operatorname{mod}(i+n_v-2,n_v)+1$

\begin{align}\label{eq.optimization_openLoop}
& \max\limits_{P_i, \mu_i}\left\{\sum\limits_{i=1}^{n_v} \mu_i\right\}, \\
& \mathrm{s.t.} \nonumber \\
& \begin{bmatrix}\label{eq.openLoop_condition_1}
P_i & AP_j \\
(*) & \lambda P_j
\end{bmatrix} \succeq 0, \,\,\, \forall i=1,\ldots,n_v, \\
& \begin{bmatrix}\label{eq.openLoop_condition_3}
P_i & P_iF_l^T \\
(*) & g_l^2
\end{bmatrix} \succeq 0,  \,\,\, \forall i=1,\ldots,n_v, \,\,\, \forall l=1,\ldots,q, \\
& \begin{bmatrix}\label{eq.openLoop_condition_5}
1 & \mu_i d_i^T \\
(*) & P_i
\end{bmatrix} \succeq 0, \,\,\, \forall i=1,\ldots,n_v.
\end{align}

Then, $\mathcal{S}_c=\operatorname{Co}\big(\mathcal{E}(P_1,1),\ldots,\mathcal{E}(P_{n_v},1)\big)$ represents the largest $\lambda$-contractive subset of the admissible set $\mathcal{S}$ for the system \eqref{eq.LTI_openLoop}. $d_i \in \mathbb{R}^n$ represent the reference direction for the $i$-th ellipsoid for $i=1,\ldots,n_v$.
\end{theorem}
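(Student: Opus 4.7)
The plan is to verify that the three LMI constraints of the optimization problem enforce three complementary properties: (i) $\lambda$-contractivity of $\mathcal{S}_c$ for the open-loop dynamics, (ii) the set containment $\mathcal{S}_c \subseteq \mathcal{S}$, and (iii) maximal extent of each ellipsoid along its reference direction, which when aggregated by the objective $\sum_i \mu_i$ yields the largest admissible contractive convex hull. The Schur complement is the workhorse in each step, and the convex representation \eqref{eq.x_k_proof} is the mechanism that lifts single-ellipsoid conclusions to the whole convex hull.

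For contractivity, I would apply the Schur complement to \eqref{eq.openLoop_condition_1} to obtain $\lambda P_i \succeq A P_j A^T$ with $j = \operatorname{R_m}(i)$. Via the standard equivalence $A P_j A^T \preceq M \Longleftrightarrow A^T M^{-1} A \preceq P_j^{-1}$, this yields $\upsilon_j^T A^T (\lambda P_i)^{-1} A \upsilon_j \leq \upsilon_j^T P_j^{-1} \upsilon_j \leq 1$, so $A$ maps every point $\upsilon_j \in \mathcal{E}(P_j,1)$ into the corresponding scaled copy of $\mathcal{E}(P_i, 1)$ dictated by the contraction factor. I would then substitute the convex representation $x(t) = \sum_i \alpha_i(t)\upsilon_i(t)$ into $x(t+1) = Ax(t)$, re-index via the cyclic map $\operatorname{R_m}$, and invoke convexity of the sum to conclude that $x(t+1) \in \lambda\mathcal{S}_c$, matching Definition~\ref{def_4}.

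For admissibility, Schur on \eqref{eq.openLoop_condition_3} yields $F_l P_i F_l^T \leq g_l^2$, and Cauchy-Schwarz in the inner product induced by $P_i^{-1}$ gives $(F_l \upsilon_i)^2 \leq (F_l P_i F_l^T)(\upsilon_i^T P_i^{-1} \upsilon_i) \leq g_l^2$, hence $F_l \upsilon_i \leq g_l$ for every $\upsilon_i \in \mathcal{E}(P_i, 1)$ and every row $l$; by convexity of $\mathcal{S}(F,g)$ this promotes to $\mathcal{S}_c \subseteq \mathcal{S}$. For maximality, Schur on \eqref{eq.openLoop_condition_5} yields $\mu_i^2\, d_i^T P_i^{-1} d_i \leq 1$, i.e., $\mu_i d_i \in \mathcal{E}(P_i, 1)$, so maximizing $\mu_i$ stretches the $i$-th ellipsoid as far as possible along $d_i$; maximizing $\sum_i \mu_i$ under all three LMIs thus enlarges the convex hull to the greatest extent consistent with both contractivity and admissibility.

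The hard part is the contractivity step: the cyclic pairing $j = \operatorname{R_m}(i)$ links ellipsoid $j$ at time $t$ to ellipsoid $i$ at time $t+1$, so I must argue that the \emph{same} convex weights $\alpha_i(t)$ can be transported through the linear map $A$ and, after cyclic relabeling, express $x(t+1)$ as a bona fide convex combination of points in the scaled ellipsoids rather than only establishing a pointwise ellipsoid-to-ellipsoid embedding. A secondary subtlety is that for ellipsoids $\lambda\mathcal{E}(P,1) = \mathcal{E}(\lambda^2 P,1)$, so the exact identification of the LMI's $\lambda$ with the contraction factor of Definition~\ref{def_4} must be spelled out carefully (possibly with a $\sqrt{\lambda}$ correction in the set-theoretic statement). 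Once these points are handled, the remaining verifications reduce to direct Schur-complement manipulations and elementary convex-hull arguments.
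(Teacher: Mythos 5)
Your proposal is correct and reaches the same conclusions from the same three Schur-complement reductions, but the contractivity step takes a genuinely different (and cleaner) route than the paper. You establish the pointwise embedding $A\,\mathcal{E}(P_{j},1)\subseteq\mathcal{E}(P_i,\lambda)$ for each cyclic pair and then observe that $x(t+1)=\sum_k\alpha_k(t)A\upsilon_k(t)$ is, by definition of the convex hull, a point of $\operatorname{Co}\big(\mathcal{E}(P_1,\lambda),\ldots,\mathcal{E}(P_{n_v},\lambda)\big)$; since the cyclic map is a bijection of the index set, no transport of the weights $\alpha_i(t)$ is actually needed --- the ``hard part'' you flag is immediate, because a convex combination of points drawn from a family of sets always lies in the convex hull of that family regardless of how the indices are permuted. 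The paper instead multiplies the congruence-transformed LMIs by $\alpha_i(t)$, sums them to form the single averaged ellipsoid $\mathcal{E}\big(\sum_i\alpha_i(t)P_i,\lambda\big)$, and then proves $\mathcal{E}\big(\sum_i\alpha_i(t)P_i,\lambda\big)\subseteq\lambda\mathcal{S}_c$ by a supporting-hyperplane contradiction. Your route buys two things: it avoids that separate containment lemma entirely, and it sidesteps a delicate point in the summation step, namely that the $(2,2)$ block of the summed LMI should read $\lambda\sum_i\alpha_i(t)P_{j(i)}^{-1}$ rather than $\lambda P_j^{-1}$ because the index $j$ varies with $i$. Your remark about the $\sqrt{\lambda}$ discrepancy is also well taken: since $\mathcal{E}(P,\lambda)=\sqrt{\lambda}\,\mathcal{E}(P,1)$, the LMIs as written certify $\sqrt{\lambda}$-contractivity in the sense of Definition~\ref{def_4} (one would impose the constraints with $\lambda^2$ to obtain rate exactly $\lambda$); the paper glosses over this same point, so it is a refinement rather than a defect of your argument. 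The admissibility step via Cauchy--Schwarz in the $P_i^{-1}$ inner product and the maximality step via $\mu_i d_i\in\mathcal{E}(P_i,1)$ coincide in substance with the paper's.
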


\begin{proof}
Inspired by \cite{nguyen2022convex}, one needs to show that if $x(t) \in \mathcal{S}_c$ then $x(t+1) \in \lambda \mathcal{S}_c$. Since the current state, i.e., $x(t)$, belongs to the convex hull of ellipsoids, it can be written as \eqref{eq.x_k_proof}. Now, according to \eqref{eq.x_k_proof}, if it is shown that $\upsilon_j(t) \in \mathcal{E}(P_j,1)$ leads to $\upsilon_j(t+1) \in \lambda\mathcal{S}_c$, then the proof is complete. To do so, by pre and post multiplying \eqref{eq.openLoop_condition_1} with
\begin{equation}\label{eq.temp_matrix_1_proof_1}
\begin{bmatrix}
I & 0 \\
0 & P_j^{-1}
\end{bmatrix},
\end{equation}
one gets
\begin{equation}\label{eq.openLoop_condition_1_changed}
\begin{bmatrix}
P_i & A \\
(*) & \lambda P_j^{-1}
\end{bmatrix} \succeq 0, \,\,\, \forall i=1,\ldots,n_v.
\end{equation}

Multiplying \eqref{eq.openLoop_condition_1_changed} by $\alpha_i(t)$ and summing them result in
\begin{equation}\label{eq.openLoop_condition_1_changed_2}
\begin{bmatrix}
\sum\limits_{i=1}^{n_v}\alpha_i(t)P_i & A \\
(*) & \lambda P_j^{-1}
\end{bmatrix} \succeq 0, \,\,\, \forall i=1,\ldots,n_v.
\end{equation}

In terms of the Schur complement, equation \eqref{eq.openLoop_condition_1_changed_2} is rewritten as
\begin{equation}\label{eq.Schur_proof_0}
A^T\big(\sum\limits_{i=1}^{n_v}\alpha_i(t)P_i\big)^{-1}A \leq \lambda P_j^{-1}.
\end{equation}

Now, due to the fact that $\upsilon_j(t+1)=A\upsilon_j(t)$, multiplying $\upsilon_j(t)$ and $\upsilon_j^T(t)$ on the right and left side of \eqref{eq.Schur_proof_0}, respectively, yields
\begin{equation}\label{eq.Schur_proof_0_2}
\upsilon_j^T(t+1)\big(\sum\limits_{i=1}^{n_v}\alpha_i(t)P_i\big)^{-1}\upsilon_j(t+1) \leq \lambda \upsilon_j^T(t)P_j^{-1}\upsilon_j(t),
\end{equation}
meaning that $\upsilon_j(t) \in \mathcal{E}(P_j,1)$ results in $\upsilon_j(t+1) \in \mathcal{E}(\sum\limits_{i=1}^{n_v}\alpha_i(t)P_i,\lambda)$.

We now show that $\mathcal{E}(\sum\limits_{i=1}^{n_v}\alpha_i(t)P_i,\lambda) \subseteq \lambda\mathcal{S}_c$. To do so, we will use a proof by contradiction. Let's assume the existence of a point $x_p$ in $\mathcal{E}(\sum\limits_{i=1}^{n_v}\alpha_i(t)P_i,\lambda)$ that is not within the convex hull of the ellipsoids. Without loss of generality, we can assume that $x_p$ lies on the boundary of $\mathcal{E}(\sum\limits_{i=1}^{n_v}\alpha_i(t)P_i,\lambda)$. Let $a_p \in \mathbb{R}^n$ be the supporting hyperplane of the set $\mathcal{E}(\sum\limits_{i=1}^{n_v}\alpha_i(t)P_i,\lambda)$ at the point $x_p$. Since both sets, $\mathcal{E}(\sum\limits_{i=1}^{n_v}\alpha_i(t)P_i,\lambda)$ and $\lambda\mathcal{S}_c$, are symmetric with respect to the origin, we have the following relationship

\begin{equation}\label{eq.proof_supporting_hyperplane}
|a_p^T x| < |a_p^T x_p|=b_p^2, \,\,\, \forall x \in \lambda\mathcal{S}_c
\end{equation}

Consequently, one has
\begin{equation}\label{eq.proof_supporting_hyperplane_2}
a_p^T \mathcal{E}(\sum\limits_{i=1}^{n_v}\alpha_i(t)P_i,\lambda) a_p = b_p^2
\end{equation}

Furthermore, based on \eqref{eq.proof_supporting_hyperplane}, we have
\begin{equation}\label{eq.proof_supporting_hyperplane_3}
|a_p^T x| < b_p^2, \,\,\, \forall x \in \lambda\mathcal{S}_c
\end{equation}

This inequality holds if and only if \cite{nguyen2020optimizing}
\begin{equation}\label{eq.proof_supporting_hyperplane_4}
a_p^T \lambda P_i a_p < b_p^2
\end{equation}

Hence, for all $\alpha_i(t) \geq 0$ and $\sum\limits_{i=1}^{n_v}\alpha_i(t)=1$, one has
\begin{equation}\label{eq.proof_supporting_hyperplane_5}
a_p^T \mathcal{E}(\sum\limits_{i=1}^{n_v}\alpha_i(t)P_i,\lambda) a_p < b_p^2
\end{equation}

This contradicts \eqref{eq.proof_supporting_hyperplane_2}. Therefore, we conclude that $\mathcal{E}(\sum\limits_{i=1}^{n_v}\alpha_i(t)P_i,\lambda) \subseteq \lambda\mathcal{S}_c$, and in accordance with \eqref{eq.Schur_proof_0_2}, this implies $\upsilon_j(t+1) \in \lambda\mathcal{S}_c$, or equivalently, $x(t+1) \in \lambda\mathcal{S}_c$.

We now provide conditions for inclusion of the contractive set inside the safe set. The ellipsoid $\mathcal{E}(P_i,1)$ is contained in the polytope $\mathcal{S}$ if and only if \cite{boyd1994linear}
\begin{equation}\label{eq.set_containment}
\max \{F_lx | x \in \mathcal{E}(P_i,1) \} \leq g_l,
\end{equation}
which is equivalent to \eqref{eq.openLoop_condition_3}. This is due to the fact that since the ellipsoidal set is within the polytope, one has
\begin{equation}\label{eq.set_containment_3}
F_lx \leq g_l.
\end{equation}
Multiplying \eqref{eq.set_containment_3} with its transpose gives
\begin{equation}\label{eq.set_containment_4}
F_lxx^TF_l^T \leq g_l^2.
\end{equation}
Also, according to the definition of ellipsoidal sets, one has $xx^T \leq P_i$. Thus, inequality \eqref{eq.set_containment_4} is equivalent to
\begin{equation}\label{eq.set_containment_5}
F_lP_iF_l^T \leq g_l^2.
\end{equation}
Now, applying the Schur complement to \eqref{eq.set_containment_5} results in the constraint \eqref{eq.openLoop_condition_3}.

Furthermore, to determine the largest convex hull among ellipsoids, the typical approach is to maximize the volume of the corresponding ellipsoids. Alternatively, one can aim to maximize the shape of the ellipsoids concerning specific reference directions or sets, as mentioned in \cite{hu2002analysis}. In this context, we will discuss optimizing the set with respect to a reference direction.

Let $d_i \in \mathbb{R}^n$ represent a reference direction for the ellipsoid $\mathcal{E}(P_i,1)$. The problem of optimizing $\mathcal{E}(P_i,1)$ with respect to $d_i$ is equivalent to maximizing $\mu_i$ under the constraint $\mu_i^2 d_i^T P_i^{-1}d_i \leq 1$ which, using the Schur complement, can be reformulated as \eqref{eq.openLoop_condition_5}. This completes the proof.
\end{proof}

The next proposition provides an insight into the proof of Theorem 1, and will be leveraged in probabilistic data-based control design.

\begin{proposition}\label{proposition_1}
Let the optimization problem \eqref{eq.optimization_openLoop}--\eqref{eq.openLoop_condition_5} be feasible for the open-loop system \eqref{eq.LTI_openLoop}. Also, let $x(t)$ be represented by \eqref{eq.x_k_proof}. Then, after every time-step, $\upsilon_i(t)$ traverses from one ellipsoid to its neighboring ellipsoid. That is, it shows a cyclic behavior w.r.t ellipsoids over time. partitioning of the obtained convex hull of ellipsoids.
\end{proposition}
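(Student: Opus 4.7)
The plan is to extract the cyclic ellipsoid-to-ellipsoid transfer directly from the Schur-complement form of \eqref{eq.openLoop_condition_1} that was already established inside the proof of Theorem~\ref{theorem_1}. First I would pre- and post-multiply \eqref{eq.openLoop_condition_1} by $\operatorname{diag}(I,P_j^{-1})$ and take the Schur complement with respect to the $(1,1)$-block to obtain, for each $i=1,\ldots,n_v$ with $j=\operatorname{R_m}(i)$, the LMI
\[
A^T P_i^{-1} A \preceq \lambda P_j^{-1}.
\]

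Next I would fix an arbitrary vertex $\upsilon_j(t)\in\mathcal{E}(P_j,1)$ appearing in the decomposition \eqref{eq.x_k_proof} and multiply this LMI on the left by $\upsilon_j^T(t)$ and on the right by $\upsilon_j(t)$. Using $\upsilon_j(t+1)=A\upsilon_j(t)$ this yields
\[
\upsilon_j^T(t+1)\,P_i^{-1}\,\upsilon_j(t+1)\;\le\;\lambda\,\upsilon_j^T(t)\,P_j^{-1}\,\upsilon_j(t)\;\le\;\lambda,
\]
so $\upsilon_j(t+1)\in\mathcal{E}(P_i,\lambda)\subseteq\mathcal{E}(P_i,1)$. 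Since Definition~\ref{def_0} makes $i$ exactly the cyclic successor of $j$ via $\operatorname{R_m}^{-1}$, this shows that after one time-step the vertex has migrated from its current ellipsoid to the immediately adjacent one in the cyclic ordering. Iterating the argument with the index shifted each time gives a period-$n_v$ orbit $\mathcal{E}(P_j,1)\to\mathcal{E}(P_{\operatorname{R_m}^{-1}(j)},1)\to\mathcal{E}(P_{\operatorname{R_m}^{-2}(j)},1)\to\cdots\to\mathcal{E}(P_j,1)$, and applying this simultaneously to every summand in \eqref{eq.x_k_proof} produces a valid representation of $x(t+1)$ whose vertices have all advanced one position along the cycle.

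The main obstacle I anticipate is the non-uniqueness of the representation in \eqref{eq.x_k_proof}: a point of $\mathcal{S}_c$ generally admits many convex-combination decompositions among the ellipsoids, so the proposition has to be read as asserting the \emph{existence} of a distinguished decomposition at each time step that exhibits the cyclic migration. I would handle this by emphasizing that the LMIs \eqref{eq.openLoop_condition_1} couple $P_j$ exclusively with $P_{\operatorname{R_m}^{-1}(j)}$; consequently, carrying the distinguished labelling forward in time is precisely the chain of inclusions forced by the feasibility of the optimization problem \eqref{eq.optimization_openLoop}--\eqref{eq.openLoop_condition_5}, and no other labelling is compatible with the coupled structure of those constraints.
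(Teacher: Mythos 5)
Your proposal is correct and follows essentially the same route as the paper's own proof: both reduce \eqref{eq.openLoop_condition_1} via congruence with $\operatorname{diag}(I,P_j^{-1})$ and a Schur complement to $A^TP_i^{-1}A \preceq \lambda P_j^{-1}$, then sandwich with $\upsilon_j(t)$ to conclude $\upsilon_j(t)\in\mathcal{E}(P_j,1)\Rightarrow \upsilon_j(t+1)\in\mathcal{E}(P_i,\lambda)$ with $j=\operatorname{R_m}(i)$. Your added remarks on the period-$n_v$ orbit and on fixing a distinguished decomposition from \eqref{eq.x_k_proof} are sound clarifications of points the paper leaves implicit, but they do not change the argument.
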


\begin{proof}
According to the proof of Theorem 1, an interesting result is achieved. Applying the Schur complement to equation \eqref{eq.openLoop_condition_1_changed} gives
\begin{equation}\label{eq.Schur_proof_1}
A^TP_i^{-1}A \leq \lambda P_j^{-1}.
\end{equation}

Now, due to the fact that $\upsilon_j(t+1)=A\upsilon_j(t)$, multiplying $\upsilon_j(t)$ and $\upsilon_j^T(t)$ on the right and left side of \eqref{eq.Schur_proof_1}, respectively, yields
\begin{equation}\label{eq.Schur_proof_1_2}
\upsilon_j^T(t+1)P_i^{-1}\upsilon_j(t+1) \leq \lambda \upsilon_j^T(t)P_j^{-1}\upsilon_j(t),
\end{equation}
meaning that $\upsilon_j(t) \in \mathcal{E}(P_j,1)$ results in $\upsilon_j(t+1) \in \mathcal{E}(P_i,\lambda)$ for $i=1,\ldots,n_v$ and $j=\operatorname{mod}(i+n_v-2,n_v)+1$.
\end{proof}

Illustrative explanation of this proposition is exhibited in figure \eqref{fig.proof}. This figure shows that both extreme points, denoted as $\upsilon_1(t)$ and $\upsilon_2(t)$, exhibit a cyclical movement between the level sets of two ellipsoids over time. Specifically, at time-step $t$, $\upsilon_1(t)$ resides within the boundaries of the blue ellipsoid. Subsequently, at time-step $t+1$, it transitions into the level set of the red ellipsoid, and then at time-step $t+2$, it enters the level set of the blue ellipsoid. This cyclic pattern persists until all extreme points converge to the origin, consequently leading to the convergence of the system trajectory towards the origin.

\begin{figure}
    \centering
    \includegraphics[width=0.5\columnwidth]{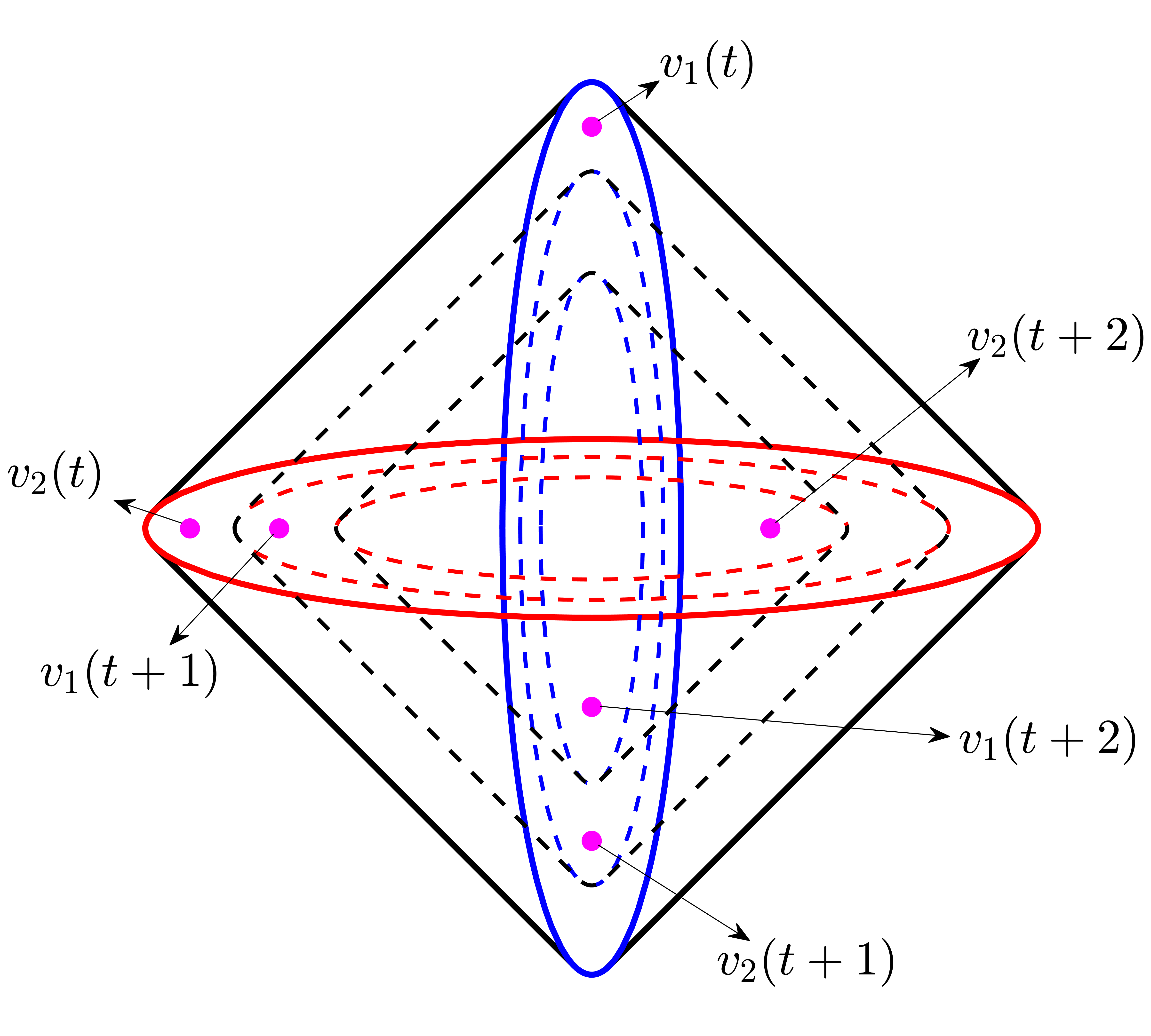}
    \caption{Illustrative diagram for the proof of Theorem 1.}
    \label{fig.proof}
\end{figure}




\begin{remark}\label{Remark_0}
If one considers only the case $i = j$ in equation \eqref{eq.openLoop_condition_1}, it implies that each ellipsoid must remain invariant by itself. However, this approach has two drawbacks:

1) The optimization problem presented in \eqref{eq.optimization_openLoop} must be solved $n_v$ times, leading to an increase in the computational cost of the control method. Additionally, the generated ellipsoids only differ in their orientations. In essence, this is equivalent to rotating a single ellipsoid toward different vertices of a polyhedral set. Based on our simulations, it is highly likely that all the generated ellipsoids will become identical and will not cover a significant portion of the admissible set.

2) The primary objective of safety backup controllers is to minimize interference with the optimal policy in order to maintain the system's optimal performance. However, if each ellipsoid is forced to remain invariant, the freedom of the system trajectory is severely restricted once the system states enter one of the ellipsoids. This is because the invariance property of ellipsoids prevents the trajectory from evolving within the convex hull and confines it to a specific ellipsoid. To address these issues, this paper considers the more general condition presented in equation \eqref{eq.openLoop_condition_1} and relaxes the requirement for ellipsoids to be invariant.
\end{remark}

\section{Probabilistic Safety Backup Policy Design: Model-based Approach}
In this section, we introduce a technique for creating a model-driven solution to design a safety backup policy for Problem 1. The presented method defines conditions to identify and generate the largest ellipsoidal sets, such that their convex hull is the maximum subset of the primary polyhedral admissible set of the system \eqref{eq.LTI}, ensuring $\lambda$-contractiveness. The provided theorem outlines these conditions, which guarantee that the probabilistic behavior of the system remains within a scaled version of the convex hull of the ellipsoids. By satisfying these conditions, the model-based policy can ensure both safety and stability, even in the presence of external factors such as noise.

\begin{problem}\label{Problem_3}
\textbf{(Largest CSiE for the closed-loop system using the convex hull of ellipsoids):} Consider the LTI system \eqref{eq.LTI} under Assumptions 1--3. Also, consider the admissible set $\mathcal{S}$. Design partitions $\mathcal{C}_1,\ldots,\mathcal{C}_{N_p}$ and a piecewise-affine controller in the form of
\begin{equation}\label{eq.general_controller}
u(t)=
\left\{\begin{array}{cc}
K_1^p x(t) \,\,\, \mathrm{if} \,\,\, x(t) \in \mathcal{C}_1 \\
\vdots \\
K_{N_p}^p x(t) \,\,\, \mathrm{if} \,\,\, x(t) \in \mathcal{C}_{N_p}
\end{array}\right.
\end{equation}
to maximize the size of $\mathcal{S}_c = \{\cup_{i=1}^{N_p} \mathcal{C}_i\} \subseteq \mathcal{S}$ such that $\mathcal{S}_c$ is CSiE for the closed-loop system, where $N_p$ denotes the number of partitions of the convex hull of ellipsoids.
\end{problem}


The number and boundaries of the piecewise-affine regions are determined by the ellipsoids used to construct the convex hull \( \mathcal{S}_c \); specifically, Algorithm~1 in Section~7 provides a systematic vertex-extraction and partitioning procedure based on solving a set of ellipsoidal boundary equations, followed by convex hull computation using the Quickhull algorithm \cite{barber1996quickhull}. Each region is then defined by the set of extreme points (including the origin) associated with neighboring ellipsoids, and the partitions emerge automatically without manual tuning.


Our approach focuses on utilizing the convex hull of ellipsoids as a foundational concept. Initially, using an optimization algorithm, we compute a state-feedback gain for each ellipsoid. Subsequently, in the next theorem, we design model-based state-feedback controllers with an emphasis on expectation. Following this, we present the data-based counterpart of Theorem 2 in terms of expectation (Theorem 3) and in terms of probability (Theorem 4). We then elaborate on the process of partitioning the derived convex hull and explain how to compute a state-feedback controller for each of these partitions. It is worth noting that since both the partitioning procedure and the computation of state-feedback controllers are applicable to both model-based and data-based scenarios, we present these aspects in Section VI for the sake of coherence.

\begin{theorem}\label{theorem_2}
Consider the system \eqref{eq.LTI} that satisfies assumptions 1--3. Let there exist matrices $P_i \in$ $\mathbb{S}^{n}$ and $S_i \succeq 0$, and positive scalars $\mu_i$ for $i=1,\ldots,n_v$ such that the following optimization problem is feasible

\begin{align}\label{eq.optimization_closedLoop}
& \max\limits_{P_i, \mu_i, S_i}\left\{\sum\limits_{i=1}^{n_v} \mu_i\right\}, \\
& \mathrm{s.t.} \nonumber \\
& \begin{bmatrix}\label{eq.closedLoop_condition_1}
P_i & AP_j+BS_j \\
(*) & \lambda P_j
\end{bmatrix} \succeq 0, \,\,\, \forall i=1,\ldots,n_v, \\
& \begin{bmatrix}\label{eq.closedLoop_condition_3}
P_i & P_iF_l^T \\
(*) & g_l^2
\end{bmatrix} \succeq 0,  \,\,\, \forall i=1,\ldots,n_v, \,\,\, \forall l=1,\ldots,q, \\
& \begin{bmatrix}\label{eq.closedLoop_condition_5}
1 & \mu_i d_i^T \\
(*) & P_i
\end{bmatrix} \succeq 0, \,\,\, \forall i=1,\ldots,n_v.
\end{align}

Then, $\mathcal{S}_c=\operatorname{Co}\big(\mathcal{E}(P_1,1),\ldots,\mathcal{E}(P_{n_v},1)\big)$ represents the largest CSiE subset of the admissible set $\mathcal{S}$ for the closed-loop system \eqref{eq.LTI}, and the controller gains are computed as $K_i=S_iP_i^{-1}$. Also, index $j$ is $j=\operatorname{mod}(i+n_v-2,n_v)+1$ for $i=1,\ldots,n_v$.
\end{theorem}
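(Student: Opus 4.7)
The plan is to mirror the structure of the proof of Theorem~1, substituting the closed-loop matrix $A+BK_j$ in place of $A$ wherever it appeared, exploiting the change of variables $S_j = K_j P_j$ to linearize the resulting matrix inequalities, and using the expectation operator to eliminate the zero-mean noise term $w(t)$ that is the only new ingredient compared to Theorem~1.

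First, I would recover the controller gains. From $S_j = K_j P_j$ it follows that $AP_j + BS_j = (A+BK_j)P_j$, so inequality \eqref{eq.closedLoop_condition_1} is \textit{exactly} \eqref{eq.openLoop_condition_1} for the closed-loop system $x(t+1) = (A+BK_j)x(t) + w(t)$. Next I would take conditional expectations: when $x(t) \in \mathcal{C}_j$, Assumption~1 gives $\mathbb{E}[x(t+1)\mid x(t)] = (A+BK_j)x(t)$, so the noise drops out of the contractiveness check that Definition~5 requires. I can then replay the Schur complement manipulation from Theorem~1 verbatim: pre- and post-multiplying \eqref{eq.closedLoop_condition_1} by $\mathrm{diag}(I, P_j^{-1})$ and applying Schur's complement yields
\begin{equation*}
(A+BK_j)^T P_i^{-1} (A+BK_j) \preceq \lambda P_j^{-1}.
\end{equation*}
Using the decomposition $x(t) = \sum_{i=1}^{n_v} \alpha_i(t)\upsilon_i(t)$ with $\upsilon_i(t) \in \mathcal{E}(P_i,1)$ and defining $\upsilon_j(t+1) \coloneqq (A+BK_j)\upsilon_j(t)$, the same contradiction argument involving supporting hyperplanes used in Theorem~1 then shows that $\upsilon_j(t+1) \in \mathcal{E}(P_i,\lambda)$ (with $i$ the cyclic neighbor of $j$ given by the mod-indexing), so $\mathbb{E}[x(t+1)] \in \lambda \mathcal{S}_c$, establishing the CSiE property per Definition~5.

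The set-containment constraint \eqref{eq.closedLoop_condition_3} is identical to \eqref{eq.openLoop_condition_3}, so the argument from Theorem~1 that each $\mathcal{E}(P_i,1) \subseteq \mathcal{S}(F,g)$ transfers unchanged, giving $\mathcal{S}_c \subseteq \mathcal{S}$. Maximality is enforced via the reference directions $d_i$: constraint \eqref{eq.closedLoop_condition_5} encodes $\mu_i^2\, d_i^T P_i^{-1} d_i \le 1$ through a Schur complement exactly as in Theorem~1, and maximizing $\sum_i \mu_i$ stretches every ellipsoid along its assigned reference direction as far as possible within the admissible set, ensuring the resulting convex hull is the largest such set compatible with all three LMI families.

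The step I expect to require the most care is the passage from the pointwise convex decomposition $x(t) = \sum_i \alpha_i(t)\upsilon_i(t)$ to the closed-loop propagation, because the piecewise-affine controller \eqref{eq.general_controller} applies a \emph{single} gain $K_j$ to the whole state $x(t)$ in its active partition, whereas the LMI argument reasons about the virtual components $\upsilon_i(t)$ separately. I would resolve this by appealing to Proposition~1's cyclic transition: in a given partition one gain $K_j$ governs the dynamics, the cyclic rule $j = \operatorname{mod}(i+n_v-2,n_v)+1$ selects precisely the neighboring ellipsoid that the virtual trajectory must land in, and linearity of expectation together with the fact that the LMI \eqref{eq.closedLoop_condition_1} is verified \emph{pairwise} for every $(i,j)$ pair guarantees the decomposition propagates consistently. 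This, combined with the observation that $S_j \succeq 0$ and $P_i \succ 0$ (implicit in feasibility of \eqref{eq.closedLoop_condition_5}) make $K_i = S_i P_i^{-1}$ well-defined, closes the proof.
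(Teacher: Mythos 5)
Your proposal is correct and follows essentially the same route as the paper's proof: the change of variables $S_j = K_jP_j$ to rewrite \eqref{eq.closedLoop_condition_1} as the open-loop condition for $A+BK_j$, taking expectations to eliminate the zero-mean noise, the congruence with $\mathrm{diag}(I,P_j^{-1})$ followed by a Schur complement, and then the Theorem~1 supporting-hyperplane containment argument to conclude $\mathbb{E}[x(t+1)]\in\lambda\mathcal{S}_c$, with \eqref{eq.closedLoop_condition_3} and \eqref{eq.closedLoop_condition_5} handled exactly as before. The only cosmetic difference is that the paper first forms the $\alpha_i(t)$-weighted sum of the LMIs and reasons about the combined ellipsoid $\mathcal{E}\big(\sum_i\alpha_i(t)P_i,\lambda\big)$, whereas you use the pairwise inequality directly; your closing remark on reconciling the single active gain of the piecewise-affine controller with the componentwise $\upsilon_i$ propagation is a point the paper leaves implicit.
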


\begin{proof}
It has to be shown that for $x(t) \in \mathcal{S}_c$, there exists a controller $u(t)$ such that $x(t+1) \in \lambda\mathcal{S}_c$. $x(t)$ is decomposed as \eqref{eq.x_k_proof}. Consider the following control law
\begin{equation}\label{eq.control_law}
u(t)=\sum\limits_{i=1}^{n_v}\alpha_i(t)u_i(t),
\end{equation}
where $\alpha_i(t)$ have been defined in \eqref{eq.x_k_proof}, and
\begin{equation}\label{eq.control_law_i}
u_i(t)=K_ix(t).
\end{equation}

Substituting \eqref{eq.control_law}, \eqref{eq.control_law_i}, and \eqref{eq.x_k_proof} into the expectation of next state, i.e., $\mathbb{E}[x(t+1)]=Ax(t)+Bu(t)$, results
\begin{align}\label{eq.x_k_2_proof}
\mathbb{E}[x(t+1)] = \; & \sum\limits_{i=1}^{n_v}\alpha_i(t)(A+BK_i)\upsilon_i(t) \nonumber \\
= \; & \sum\limits_{i=1}^{n_v}\alpha_i(t)\upsilon_i(t+1),
\end{align}
with 
\begin{equation}\label{eq.proof_2_temp}
\mathbb{E}[\upsilon_i(t+1)]=(A+BK_i)\upsilon_i(t).
\end{equation}

Now, by using \eqref{eq.closedLoop_condition_1}, it is shown that if $\upsilon_j(t) \in \mathcal{E}(P_j,1)$ then $\upsilon_j(t+1) \in \lambda\mathcal{S}_c$. Define $S_i=K_iP_i$. Hence, the condition \eqref{eq.closedLoop_condition_1} becomes
\begin{equation}\label{eq.closedLoop_condition_1_changed}
\begin{bmatrix}
P_i & (A+BK_j)P_j \\
(*) & \lambda P_j
\end{bmatrix} \succeq 0, \,\,\, \forall i=1,\ldots,n_v
\end{equation}

By pre and post multiplying \eqref{eq.closedLoop_condition_1_changed} with \eqref{eq.temp_matrix_1_proof_1}, one obtains
\begin{equation}\label{eq.closedLoop_condition_1_changed_2}
\begin{bmatrix}
P_i & (A+BK_j) \\
(*) & \lambda P_j^{-1}
\end{bmatrix} \succeq 0, \,\,\, \forall i=1,\ldots,n_v.
\end{equation}

Multiplying \eqref{eq.closedLoop_condition_1_changed_2} by $\alpha_i(t)$ and summing them result in
\begin{equation}\label{eq.closedLoop_condition_1_changed_3}
\begin{bmatrix}
\sum\limits_{i=1}^{n_v}\alpha_i(t)P_i & (A+BK_j) \\
(*) & \lambda P_j^{-1}
\end{bmatrix} \succeq 0, \,\,\, \forall i=1,\ldots,n_v.
\end{equation}

In terms of the Schur complement, equation \eqref{eq.closedLoop_condition_1_changed_3} is rewritten as
\begin{equation}\label{eq.Schur_proof_2}
(A+BK_j)^T\big(\sum\limits_{i=1}^{n_v}\alpha_i(t)P_i\big)^{-1}(A+BK_j) \leq \lambda P_j^{-1}.
\end{equation}

Now, due to the fact that $\mathbb{E}[\upsilon_j(t+1)]=(A+BK_j)\upsilon_j(t)$, multiplying $\upsilon_j(t)$ and $\upsilon_j^T(t)$ on the right and left side of \eqref{eq.Schur_proof_2}, respectively, yields
\begin{equation}\label{eq.Schur_proof_2_2}
\upsilon_j^T(t+1)\big(\sum\limits_{i=1}^{n_v}\alpha_i(t)P_i\big)^{-1}\upsilon_j(t+1) \leq \lambda \upsilon_j^T(t)P_j^{-1}\upsilon_j(t).
\end{equation}

The rest of the proof is analogous to that of Theorem 1 and is omitted here.
\end{proof}

\section{Probabilistic Safety Backup Policy Design: Data-based Approach}
The purpose of this section is to introduce a data-driven alternative to condition \eqref{eq.closedLoop_condition_1} that removes the requirement for a system model in the safe-controller. Initially, a certainty equivalence-based direct learning technique that enables the acquisition of a risk-neutral safety backup policy based on the definition 6 is developed. Subsequently, by leveraging the minimum-variance approach outlined in \cite{modares_MV}, a direct probabilistic learning version of the previous method is presented to guarantee that the convex hull of ellipsoids is CSiP. The aim is to decrease the variance of the closed-loop system with respect to the safe set generated by the convex hull of ellipsoids and mitigate the risk of safety violations in noisy environments.

To accomplish this, let us begin by assuming that an input sequence of $u(0),u(1),\ldots,u(N-1)$ is applied to the system \eqref{eq.LTI}, and $N$ samples of states are collected. Subsequently, these samples are organized in the following manner:
\begin{align}
& U_0 = [u(0),u(1),\ldots,u(N-1)], \label{eq.Data_U_0} \\
& X_0 = [x(0),x(1),\ldots,x(N-1)], \label{eq.Data_X_0} \\
& X_1 = [x(1),x(2),\ldots,x(N)]. \label{eq.Data_X_1} 
\end{align}

Also, the noise sequence is as follows
\begin{equation}\label{eq.Data_W_0}
W_0 = [w(0),w(1),\ldots,w(N-1)].
\end{equation}

\begin{assumption}\label{Assumption_4}
The data matrix $X_0$ in \eqref{eq.Data_X_0} is full row rank, with sample count being at least $n+1$.
\end{assumption}

\begin{remark}
For an indirect data-based control of the LTI system \eqref{eq.LTI}, which involves identifying the matrices $A$ and $B$, it is essential for the data matrix denoted by
\begin{align} \label{eq.rank}
\left[ \begin{array}{l}
{U_0} \\
{X_0}
\end{array} \right]
\end{align}
to possess a full row rank. However, when aiming to directly learn a safe controller, as shown later, only Assumption 4 is needed, which requires a smaller number of samples as it is only necessary to ensure that the matrix $X_0$ possesses a full row rank.
\end{remark}

Subsequently, the collected data are utilized to derive data-driven versions of condition \eqref{eq.closedLoop_condition_1} from both risk-neutral and risk-aware perspectives. The resulting condition can be directly employed in designing a safe control policy, eliminating the need for the system model.

The first step is to provide a data-based representation of the closed-loop system. Hence, inspired by \cite{bisoffi2020data}, based on the data collected in \eqref{eq.Data_U_0}--\eqref{eq.Data_X_1} and the stochastic linear system \eqref{eq.LTI}, one has
\begin{align}\label{eq.LTI_DD_form}
X_1-W_0 = A X_0 + BU_0.
\end{align}

According to Assumption 3, there exists a right inverse for $X_0$ such that
\begin{align}\label{eq.G_K}
 X_0 G_K=I
\end{align}

Thus, multiplying both sides of \eqref{eq.LTI_DD_form} by $G_{K}$ from right yields
\begin{align}\label{eq.LTI_DD_form_2}
(X_1-W_0)G_K = A + B U_0G_K.
\end{align}

By defining the controller gain as $K=U_0G_K$, the closed-loop system can be written as
\begin{align}\label{eq.LTI_DD_form_3}
A + BK = (X_1-W_0)G_K.
\end{align}

Hence,
\begin{align}\label{eq.LTI_DD_form_4}
x(k+1)=(X_1-W_0)G_Kx(k)+w(k).
\end{align}





\begin{problem}\label{Problem_4}
\textbf{(Data-based safe control design with largest CSiP inside the admissible set):} Consider the LTI system \eqref{eq.LTI} under Assumptions 1--4. Let $\mathcal{E}(P_i,1)$ for $i=1,\ldots,n_v$ be a set of ellipsoids. Find the largest CSiP  within the admissible set $\mathcal{S}$ by designing data-based state-feedback controllers in the form of $u_i(t)=K_ix(t)$ for $i=1,\ldots,n_v$, and the piecewise-affine safe controller as defined in \eqref{eq.general_controller}: \\
\textbf{I.} First, by assuming that noise is measurable. \\
\textbf{II.} Second, by relaxing the noise measurement assumption.
\end{problem}

\subsection{Certainty Equivalence Perspective}
In this subsection, a data-driven risk-neutral certainty-equivalence direct learning method is introduced. It aims to acquire a state-feedback gain within each ellipsoid that generates the convex hull. After establishing the following hypothesis, the results of this method are condensed in the subsequent theorem.

\begin{assumption}\label{Assumption_5}
The noise sequence $w(k)$ can be measured and collected as a data matrix for $N$ samples, as shown in \eqref{eq.Data_W_0}.
\end{assumption}

\begin{theorem}\label{theorem_3}
Consider the system \eqref{eq.LTI} that satisfies Assumptions 1--5. Data are collected and arranged as equations \eqref{eq.Data_U_0}--\eqref{eq.Data_X_1}. Let there exist matrices $P_i \in$ $\mathbb{S}^{n}$ and $Y_i \succeq 0$, and positive scalars $\mu_i$ for $i=1,\ldots,n_v$ such that the following optimization problem is feasible

\begin{align}\label{eq.optimization_closedLoop_DD}
& \max\limits_{P_i, Y_i, \mu_i}\left\{\sum\limits_{i=1}^{n_v} \mu_i\right\}, \\
& \mathrm{ s.t.} \nonumber \\
& \begin{bmatrix}\label{eq.closedLoop_condition_DD_1}
P_i & (X_1-W_0) Y_j \\
(*) & \lambda P_j
\end{bmatrix} \succeq 0, \,\,\, \forall i=1,\ldots,n_v, \\
& \begin{bmatrix}\label{eq.closedLoop_condition_DD_3}
P_i & P_iF_l^T \\
(*) & g_l^2
\end{bmatrix} \succeq 0,  \,\,\, \forall i=1,\ldots,n_v, \,\,\, \forall l=1,\ldots,q, \\
& \begin{bmatrix}\label{eq.closedLoop_condition_DD_5}
1 & \mu_i d_i^T \\
(*) & P_i
\end{bmatrix} \succeq 0, \,\,\, \forall i=1,\ldots,n_v, \\
& X_0 Y_i = P_i, \,\,\, \forall i=1,\ldots,n_v. \label{eq.closedLoop_condition_DD_6}
\end{align}

Then, $\mathcal{S}_c=\operatorname{Co}\big(\mathcal{E}(P_1,1),\ldots,\mathcal{E}(P_{n_v},1)\big)$ represents the largest CSiE subset of the admissible set $\mathcal{S}$ for the closed-loop system \eqref{eq.LTI}. Moreover, the controller gains for ellipsoids are calculated as $K_i=U_0Y_iP_i^{-1}$. Also, index $j$ is computed as $j=\operatorname{mod}(i+n_v-2,n_v)+1$ for $i=1,\ldots,n_v$.
\end{theorem}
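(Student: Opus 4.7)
The plan is to reduce Theorem~\ref{theorem_3} to Theorem~\ref{theorem_2} by showing that, under the auxiliary equality \eqref{eq.closedLoop_condition_DD_6}, the data-based LMI \eqref{eq.closedLoop_condition_DD_1} is nothing more than a reparametrization of the model-based LMI \eqref{eq.closedLoop_condition_1} with $S_j \leftrightarrow U_0 Y_j$. The other two LMIs \eqref{eq.closedLoop_condition_DD_3} and \eqref{eq.closedLoop_condition_DD_5} and the objective $\sum_i \mu_i$ coincide verbatim with their model-based counterparts, so the only real work is to reinterpret the contractivity LMI in a model-free way.

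\textbf{Key algebraic identity.} First I would invoke the relation \eqref{eq.LTI_DD_form}, namely $X_1 - W_0 = A X_0 + B U_0$, which holds exactly because Assumption~5 makes $W_0$ a matrix of measurements. Postmultiplying by $Y_j$ and using $X_0 Y_j = P_j$ from \eqref{eq.closedLoop_condition_DD_6} yields
\begin{equation*}
(X_1 - W_0)\,Y_j \;=\; A X_0 Y_j + B U_0 Y_j \;=\; A P_j + B\,U_0 Y_j .
\end{equation*}
Setting $S_j := U_0 Y_j$ and $K_j := S_j P_j^{-1} = U_0 Y_j P_j^{-1}$, the top-right block of \eqref{eq.closedLoop_condition_DD_1} becomes $A P_j + B S_j = (A + B K_j) P_j$, i.e.\ precisely the top-right block of \eqref{eq.closedLoop_condition_1}. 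Hence any tuple $(P_i,Y_i,\mu_i)$ feasible for \eqref{eq.optimization_closedLoop_DD}--\eqref{eq.closedLoop_condition_DD_6} produces a tuple $(P_i,S_i,\mu_i)$ feasible for \eqref{eq.optimization_closedLoop}--\eqref{eq.closedLoop_condition_5} with the same objective.

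\textbf{Concluding via Theorem~\ref{theorem_2}.} With the correspondence in hand, I would cite Theorem~\ref{theorem_2} directly: the piecewise-affine controller \eqref{eq.general_controller} with gains $K_i = U_0 Y_i P_i^{-1}$ makes $\mathcal{S}_c = \operatorname{Co}\big(\mathcal{E}(P_1,1),\ldots,\mathcal{E}(P_{n_v},1)\big)$ a CSiE of the closed-loop system \eqref{eq.LTI}, and the set-containment LMI \eqref{eq.closedLoop_condition_DD_3} guarantees $\mathcal{S}_c \subseteq \mathcal{S}$. Since the objective $\sum_i \mu_i$ is untouched by the reparametrization, maximizing it under the reference-direction LMI \eqref{eq.closedLoop_condition_DD_5} delivers the largest such convex hull in the chosen reference directions, completing the argument. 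The CSiE (rather than CSiP) character is inherited from Theorem~\ref{theorem_2}: since $\mathbb{E}[w(t)] = 0$, the expectation propagates as $\mathbb{E}[x(t{+}1)] = (A + B K_j) x(t)$ on the piecewise region active at time $t$.

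\textbf{Main obstacle.} The substantive point is ensuring that the constraint $X_0 Y_i = P_i$ is compatible with the optimization, which is where Assumption~4 enters: $X_0$ having full row rank guarantees a right inverse, so for any $P_i$ that the LMIs select one can always produce a $Y_i$ satisfying the equality. The subtler issue I would flag is that the resulting program is \emph{sufficient} for CSiE but not equivalent to the model-based program, because the achievable $S_i = U_0 Y_i$ lives in the range of $U_0$ restricted to the preimage $\{Y_i:X_0 Y_i = P_i\}$; whether every model-based $(P_i,S_i)$ is recoverable as some $(P_i,U_0 Y_i)$ depends on the joint informativity of $(X_0,U_0)$, not merely on $X_0$. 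This is intrinsic to the certainty-equivalence perspective of this subsection and motivates the risk-aware variant developed next in the paper.
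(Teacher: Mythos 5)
Your proposal is correct and follows essentially the same route as the paper: both reduce the data-based program to Theorem~\ref{theorem_2} by showing that, under $X_0 Y_j = P_j$, the block $(X_1-W_0)Y_j$ equals $(A+BK_j)P_j$ with $K_j = U_0 Y_j P_j^{-1}$ (the paper phrases this via the right inverse $G_{K,j}$ of $X_0$ and the identity $A+BK_j=(X_1-W_0)G_{K,j}$, whereas you postmultiply $X_1-W_0=AX_0+BU_0$ by $Y_j$ directly — a cosmetic difference). Your closing remark that the reduction is strictly a sufficiency argument unless $[U_0^T\; X_0^T]^T$ is jointly informative is a fair caveat that the paper's claim of ``equivalence'' glosses over, but it does not affect the validity of the stated theorem.
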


\begin{proof}
We show that the constraints \eqref{eq.closedLoop_condition_DD_1} and \eqref{eq.closedLoop_condition_DD_6} together provide an equivalent data-based form of the constraint \eqref{eq.closedLoop_condition_1} in Theorem 2. This concludes the equivalence of (58)-(62) and (35)-(38), as other constraints are common in the two optimizations. We first provide a data-based representation of the closed-loop systems obtained by control gains of each ellipsoids. Since there is a decision variable for every ellipsoids (corresponding to every control gain $K_i$), data-based representations \eqref{eq.G_K} and \eqref{eq.LTI_DD_form_3} amount to $X_0G_{K,i}=I$ and $A+BK_i=(X_1-W_0)G_{K,i}$ for the $i$-th ellipsoid, with $K_i=U_0 G_{K,i}$. Furthermore, since the rank of $X_0$ is $n$, the right inverse $G_{K,i}$ exists, and since at least $n+1$ samples are collected, $G_{K,i}$ is not unique and thus it can be considered as a decision variable. 

To show the equivalence of \eqref{eq.closedLoop_condition_DD_1} and \eqref{eq.closedLoop_condition_DD_6} with \eqref{eq.closedLoop_condition_1}, define first $Y_j=G_{K,j}P_j$. Then, under \eqref{eq.closedLoop_condition_DD_6}, one has $(X_1-W_0)Y_j=(A+BK_j)P_j$, and thus the 
constraints \eqref{eq.closedLoop_condition_DD_1} and \eqref{eq.closedLoop_condition_DD_6} become
\begin{align}\label{eq.closedLoop_condition_DD_1_changed}
\begin{bmatrix}
P_i & (A+BK_j)P_j \\
(*) & \lambda P_j
\end{bmatrix} \succeq 0, \,\,\, \forall i=1,\ldots,n_v.
\end{align}
 which is transformed to \eqref{eq.closedLoop_condition_1} using $S_j=K_jP_j$ defined in the poof of Theorem 2.  This completes the proof. 
 
\end{proof}

The safety backup controller learned according to Theorem 3 suffers from a drawback in that it necessitates the measurement of noise, which is not feasible in practice. To overcome this challenge, a data-driven safe controller based on minimum variance is designed in the subsequent part of the paper. The objective of the designed controller in the upcoming subsection is to eliminate the requirement for noise measurement and enhance the practical applicability of the safe controller. This approach involves collecting data from the system and utilizing this data to establish conditions that not only compute the controller gains but also minimize the variance of the closed-loop system. By utilizing this approach, a state-feedback controller is constructed for each of the ellipsoids forming the convex hull, guaranteeing the stability of the closed-loop system without requiring noise measurement.

\subsection{Probabilistic Perspective}
In this subsection, a minimum variance-based approach is presented to alleviate the restrictive assumption related to the availability of noise measurements. The objective of this approach is to address the limitations associated with this assumption considered in \cite{modares2023}. In conventional indirect learning methods \cite{krishnan2021direct}, predetermined high-confidence sets are assigned to the dynamics $A$ and $B$. Consequently, the controller gain $K$ can only impact the variance associated with the $BK$ component of the closed-loop dynamics. In contrast, with the proposed minimum variance-based direct learning approach, the entire closed-loop dynamics $A+BK$ is learned, and the control gain $K$ can be designed to decrease the variance for the entire closed-loop dynamics. 

In addition to learning the closed-loop dynamics directly, the proposed approach minimizes the variance of the state distribution to ensure high-probability safety guarantees. Instead of treating noise as bounded or unstructured, this variance-aware formulation explicitly shapes the distribution of the next state by designing control gains that reduce its spread. The resulting controller is designed to satisfy a probabilistic constraint of the form \( \mathbb{P}[x(t) \in \mathcal{S}_c] \geq 1 - \epsilon \), where \( \mathcal{S}_c \subseteq \mathcal{S} \) is a learned safe set. This ensures that constraint violations due to stochastic disturbances remain within an acceptable risk threshold.

The following Lemma is brought up for the $j$th ellipsoid's state, i.e., $\upsilon_j(t)$.

\begin{lemma}\label{lemma_3}
Consider the system \eqref{eq.LTI}. Let Assumptions 1--5 be satisfied. Let the controller be  $u_j(t)=K_j\upsilon_j(t)=U_0G_{K,j}\upsilon_j(t)$ for $j$th ellipsoid, where $X_0G_{K,j}=I$. Then, with probability $1-\delta$, the next state $\upsilon_j(t+1)$ is steered into the following confidence ellipsoid
\begin{align}\label{eq.confidence_ellips}
& \mathcal{E}(V_j,1) = \nonumber \\
& \Big\{\upsilon_j:\big(\upsilon_j-X_1G_{K,j}\upsilon_j(t)\big)^T V_j^{-1} \big(\upsilon_j-X_1G_{K,j}\upsilon_j(t)\big) \leq 1 \Big\},
\end{align}
where  
\begin{align}\label{eq.variance}
V_j=\bigg(n+2 \sqrt{n\,\, \operatorname{log}{\frac{1}{\delta}}}+2\operatorname{log}{\frac{1}{\delta}}\bigg) \bigg(\operatorname{Tr}\big(G_{K,j}P_j^{-1}G_{K,j}^T\big)\Sigma+\Sigma\bigg).
\end{align} 
\end{lemma}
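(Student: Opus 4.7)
The plan is to characterize the next-state uncertainty around the nominal data-based prediction $X_1 G_{K,j}\upsilon_j(t)$ by identifying the residual as a zero-mean Gaussian vector, and then apply a chi-squared tail bound to obtain the confidence ellipsoid $\mathcal{E}(V_j,1)$. Starting from \eqref{eq.LTI_DD_form_4} with controller $u_j(t)=U_0 G_{K,j}\upsilon_j(t)$, the one-step update reads $\upsilon_j(t+1)=(X_1-W_0)G_{K,j}\upsilon_j(t)+w(t)$, so the residual
\begin{equation*}
e_j:=\upsilon_j(t+1)-X_1 G_{K,j}\upsilon_j(t)=-W_0 G_{K,j}\upsilon_j(t)+w(t)
\end{equation*}
is a linear combination of the i.i.d.\ Gaussian columns of $W_0$ together with the fresh, independent Gaussian $w(t)$; hence $e_j$ is itself zero-mean Gaussian.

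Next, writing $g:=G_{K,j}\upsilon_j(t)\in\mathbb{R}^{N}$, mutual independence of $\{w(0),\ldots,w(N-1),w(t)\}$ gives $\operatorname{Cov}(e_j)=(g^{T}g)\Sigma+\Sigma$. Using the cyclic property of trace and Lemma~\ref{lem_1} applied to the quadratic form $\upsilon_j(t)^{T}G_{K,j}^{T}G_{K,j}\upsilon_j(t)$ under the ellipsoidal constraint $\upsilon_j(t)\in\mathcal{E}(P_j,1)$, I would bound this scalar coefficient by $\operatorname{Tr}\bigl(G_{K,j}P_j^{-1}G_{K,j}^{T}\bigr)$, producing the bracketed covariance factor appearing in \eqref{eq.variance}.

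With $\operatorname{Cov}(e_j)$ dominated by the matrix $\bigl(\operatorname{Tr}(G_{K,j}P_j^{-1}G_{K,j}^{T})+1\bigr)\Sigma$, the normalized squared norm of $e_j$ is stochastically dominated by a $\chi_n^{2}$ random variable. The Laurent--Massart concentration inequality then yields, for a standard normal vector $Z\in\mathbb{R}^{n}$, the bound $\mathbb{P}\!\left[\|Z\|^{2}\le n+2\sqrt{n\log(1/\delta)}+2\log(1/\delta)\right]\ge 1-\delta$. Undoing the whitening transformation gives $\mathbb{P}[e_j^{T}V_j^{-1}e_j\le 1]\ge 1-\delta$, which is exactly the claim $\mathbb{P}[\upsilon_j(t+1)\in\mathcal{E}(V_j,1)]\ge 1-\delta$ with the ellipsoid centered at $X_1 G_{K,j}\upsilon_j(t)$.

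The main obstacle I anticipate is the second step: translating the deterministic set inclusion $\upsilon_j(t)\in\mathcal{E}(P_j,1)$ into the precise trace expression in \eqref{eq.variance} requires combining Lemma~\ref{lem_1} with a worst-case argument over the ellipsoid and carefully tracking the inner weighting matrix. A secondary subtlety is the assumed independence of $w(t)$ from $W_0$, on which the additive decomposition of $\operatorname{Cov}(e_j)$ hinges; otherwise cross-terms between the learning-time and deployment-time noise would have to be controlled separately.
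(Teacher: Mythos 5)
Your proposal follows essentially the same route as the paper's proof: the same residual decomposition $\tilde{\upsilon}_j(t+1)=-W_0G_{K,j}\upsilon_j(t)+w(t)$, the same covariance computation via independence of the learning-time and deployment-time noise, the same worst-case bound using $\upsilon_j(t)\upsilon_j(t)^T\preceq P_j$ over the ellipsoid, and the same $\chi^2$-type tail bound (Laurent--Massart in your write-up, an equivalent sub-Gaussian norm-concentration result cited from the bandit literature in the paper). One small note: the worst-case trace bound that actually follows from $\upsilon_j(t)\in\mathcal{E}(P_j,1)$ is $\operatorname{Tr}\big(G_{K,j}P_jG_{K,j}^T\big)$ --- which is what the paper's proof derives --- so the $P_j^{-1}$ appearing inside $V_j$ in the lemma statement is evidently a typo that you inherited when matching your bound to \eqref{eq.variance}.
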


\begin{proof}
Similar to the proof of Theorem 3 and based on the general data-based model \eqref{eq.LTI_DD_form_4}, for the $j$th ellipsoid, one has
\begin{equation}\label{eq.ellips_state}
 \upsilon_j(t+1)=X_1G_{K,j}\upsilon_j(t)-W_0G_{K,j}\upsilon_j(t)+w(t).   
\end{equation}

Based on \eqref{eq.LTI_DD_form_3}, the nominal model of $A+BK_j$ is $X_1G_{K,j}$. Now define the nominal next state in the $j$th ellipsoid as
\begin{align}\label{eq.ellips_state_nominal}
\bar{\upsilon}_j(t+1)=X_1G_{K,j}\upsilon_j(t).
\end{align}

Then, for the random variable $\tilde{\upsilon}_j(t+1)=\upsilon_j(t+1)-\bar{\upsilon}_j(t+1)=-W_0G_{K,j}\upsilon_j(t)+w(t)$, its covariance satisfies
\begin{align}\label{eq.ellips_state_variance}
& \mathbb{E}[\tilde{\upsilon}_j(t+1)\tilde{\upsilon}_j(t+1)^T]=\mathbb{E}\Big[W_0G_{K_j}\upsilon_j(t)\upsilon_j(t)^TG_{K,j}^TW_0^T\Big]+\Sigma,
\end{align}
which is concluded by using \eqref{eq.Variance_1_2}. Furthermore, since $\upsilon_j(t)^T P_j^{-1} \upsilon_j(t) \leq 1$, using the Schur complement, one gets $\upsilon_j(t)\upsilon_j(t)^T \leq P_j$. Thus,
\begin{align}\label{eq.ellips_state_variance_2}
& \mathbb{E}[\tilde{\upsilon}_j(t+1)\tilde{\upsilon}_j(t+1)^T] \leq \mathbb{E}\Big[W_0G_{K,j}P_jG_{k,j}^T W_0^T\Big]+\Sigma \nonumber \\
& =\operatorname{Tr}(G_{K,j}P_jG_{K,j}^T)\Sigma+\Sigma=\bar{V}_j.
\end{align}

Therefore, since also $\mathbb{E}[\tilde{\upsilon}_j(t+1)]=0$, $\tilde{\upsilon}_j(t+1)$ is a sub-Gaussian random vector with covariance $\bar{V}_j$. Thus, with probability at least $1-\delta$, one has \cite{lattimore2020bandit}
\begin{align}\label{eq.confidence_ellips_2}
\tilde{\upsilon}_j(t+1)^T \bar{V}_j^{-1} \tilde{\upsilon}_j(t+1) \leq n+2 \sqrt{n\operatorname{log}{\frac{1}{\delta}}}+2\operatorname{log}{\frac{1}{\delta}}=\delta_n.
\end{align}

Equivalently, with probability at least $1-\delta$, one has
\begin{align}
\tilde{\upsilon}_j(t+1)^T V_j^{-1} \tilde{\upsilon}_j(t+1) \leq 1.
\end{align}

This completes the proof.
\end{proof}

\begin{theorem}\label{theorem_4}
Consider the system \eqref{eq.LTI} that satisfies assumptions 1--4. Also, data are collected and arranged as equations \eqref{eq.Data_U_0}--\eqref{eq.Data_X_1}. Let there exist matrices $P_i \in$ $\mathbb{S}^{n}$, $Y_i \succeq 0$, and $H_i \succeq 0$, and positive scalars $\mu_i$, $\eta_i$, and $\zeta_i$ for $i=1,\ldots,n_v$ such that the following optimization problem is feasible for some $\tau_i$

\begin{align}\label{eq.optimization_closedLoop_DD}
& \max\limits_{P_i, Y_i, H_i, \mu_i, \eta_i, \zeta_i}\left\{\sum\limits_{i=1}^{n_v} (\mu_i-\eta_i-\zeta_i)\right\}, \\
& \mathrm{s.t.} \nonumber \\
& \begin{bmatrix}\label{eq.closedLoop_condition_DD_MV_1}
P_i & X_1Y_j & \eta_j\Sigma^{\frac{1}{2}} \\
(*) & (\lambda-\tau_j)P_j & 0 \\
(*) & (*) & \frac{\tau_j}{\delta_n}I
\end{bmatrix} \succeq 0, \,\,\, \forall i=1,\ldots,n_v \\
& \begin{bmatrix}\label{eq.closedLoop_condition_DD_MV_3}
P_i & P_iF_l^T \\
(*) & g_l^2
\end{bmatrix} \succeq 0,  \,\,\, \forall i=1,\ldots,n_v, \,\,\, \forall l=1,\ldots,q \\
& \begin{bmatrix}\label{eq.closedLoop_condition_DD_MV_5}
1 & \mu_i d_i^T \\
(*) & P_i
\end{bmatrix} \succeq 0, \,\,\, \forall i=1,\ldots,n_v \\
& X_0 Y_i = P_i, \,\,\, \forall i =1,\ldots,n_v \label{eq.closedLoop_condition_DD_MV_6} \\
& \begin{bmatrix}\label{eq.closedLoop_condition_DD_MV_7}
H_i & Y_i \\
(*) & P_i
\end{bmatrix} \succeq 0, \,\,\, \forall i=1,\ldots,n_v \\
& \begin{bmatrix}\label{eq.closedLoop_condition_DD_MV_8}
\zeta_i+1 & \eta_i \\
(*) & 1
\end{bmatrix} \succeq 0, \,\,\, \forall i=1,\ldots,n_v \\
& \mathrm{Tr}(H_i) \leq \zeta_i, \,\,\, \forall i=1,\ldots,n_v \label{eq.closedLoop_condition_DD_MV_9}
\end{align}

Then, $\mathcal{S}_c=\operatorname{Co}\big(\mathcal{E}(P_1,1),\ldots,\mathcal{E}(P_{n_v},1)\big)$ represents the largest CSiP subset of the admissible set $\mathcal{S}$ for the closed-loop system \eqref{eq.LTI} with the risk level $\delta$. Moreover, the controller gains for ellipsoids are calculated as $K_i=U_0Y_iP_i^{-1}$. Also, index $j$ is computed as $j=\operatorname{mod}(i+n_v-2,n_v)+1$ for $i=1,\ldots,n_v$.
\end{theorem}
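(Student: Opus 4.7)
The plan is to mirror the architecture of the proof of Theorem~3 but replace the certainty-equivalence identity $(X_1-W_0)G_{K,j}=A+BK_j$ with the probabilistic counterpart from Lemma~3, and augment the contractivity LMI to absorb the noise-induced variance via an S-procedure term. The starting point is the decomposition $x(t)=\sum_{i=1}^{n_v}\alpha_i(t)\upsilon_i(t)$ with $\upsilon_i(t)\in\mathcal{E}(P_i,1)$, together with the same piecewise gain $K_j=U_0 G_{K,j}$ and the ansatz $Y_j=G_{K,j}P_j$ so that \eqref{eq.closedLoop_condition_DD_MV_6} reduces to $X_0 Y_j = P_j$, exactly as in Theorem~3. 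Under this substitution the block $X_1Y_j$ in \eqref{eq.closedLoop_condition_DD_MV_1} becomes the data-driven image of $P_j$ under the \emph{nominal} closed-loop map $X_1 G_{K,j}$, which from \eqref{eq.ellips_state_nominal} is the mean of $\upsilon_j(t+1)$ and the object that must contract toward $\mathcal{S}_c$.

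First, I would Schur-complement \eqref{eq.closedLoop_condition_DD_MV_1} on the $(3,3)$ block to eliminate the disturbance coordinate, yielding an inequality of the form
\begin{equation*}
\begin{bmatrix} P_i & X_1 Y_j \\ (*) & (\lambda-\tau_j)P_j \end{bmatrix} - \frac{\delta_n \eta_j^2}{\tau_j}\begin{bmatrix} \Sigma & 0 \\ 0 & 0\end{bmatrix}\succeq 0,
\end{equation*}
which, after a second Schur complement and a pre/post multiplication by $\mathrm{diag}(I,P_j^{-1})$ analogous to \eqref{eq.temp_matrix_1_proof_1}, produces the inequality
\begin{equation*}
(X_1 G_{K,j})^T P_i^{-1}(X_1 G_{K,j})\preceq (\lambda-\tau_j)P_j^{-1}
\end{equation*}
plus an additive noise penalty controlled by $\tfrac{\delta_n \eta_j^2}{\tau_j}\Sigma$. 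Multiplying by $\alpha_i(t)$ and summing over $i$ (as in Theorems~1--3) converts the left-hand side to a contraction in the averaged Lyapunov matrix $\sum_i \alpha_i(t) P_i$, and the supporting-hyperplane argument from the proof of Theorem~1 then lifts the single-ellipsoid contraction to containment in $\lambda\mathcal{S}_c$.

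Second, I would show that the auxiliary constraints \eqref{eq.closedLoop_condition_DD_MV_7}--\eqref{eq.closedLoop_condition_DD_MV_9} encode precisely the variance bound $V_j$ from Lemma~3. A Schur complement on \eqref{eq.closedLoop_condition_DD_MV_7} gives $H_j \succeq Y_j P_j^{-1} Y_j^T = G_{K,j} P_j G_{K,j}^T$, so that \eqref{eq.closedLoop_condition_DD_MV_9} yields $\operatorname{Tr}(G_{K,j} P_j G_{K,j}^T)\le \zeta_j$, while \eqref{eq.closedLoop_condition_DD_MV_8} enforces $\eta_j^2\le 1+\zeta_j$. Combining these, $\eta_j^2 \Sigma$ upper-bounds the trace-weighted noise variance $\bar V_j=\operatorname{Tr}(G_{K,j}P_jG_{K,j}^T)\Sigma+\Sigma$ appearing in \eqref{eq.ellips_state_variance_2}, so the $(1,3)$ block in \eqref{eq.closedLoop_condition_DD_MV_1} dominates the actual covariance of $\tilde\upsilon_j(t+1)$. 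Invoking Lemma~3, the event $\{\tilde\upsilon_j(t+1)^T V_j^{-1}\tilde\upsilon_j(t+1)\le 1\}$ holds with probability at least $1-\delta$, and the S-procedure combination above then transfers this event to $\{\upsilon_j(t+1)\in\lambda\mathcal{S}_c\}$, giving $\mathbb{P}[x(t+1)\in\lambda\mathcal{S}_c]\ge 1-\delta$ by linearity of the convex combination \eqref{eq.x_k_proof}. The constraints \eqref{eq.closedLoop_condition_DD_MV_3} and \eqref{eq.closedLoop_condition_DD_MV_5} are identical to those in Theorem~2/3 and enforce $\mathcal{S}_c\subseteq\mathcal{S}$ and the reference-direction shape maximization, while the objective $\sum_i(\mu_i-\eta_i-\zeta_i)$ expands each ellipsoid along $d_i$ while simultaneously shrinking the noise-penalty budget.

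The main obstacle will be executing the S-procedure step cleanly so that the LMI \eqref{eq.closedLoop_condition_DD_MV_1} is simultaneously (i) necessary and sufficient for the nominal contraction with slack $\tau_j$ and (ii) tight against the probabilistic ellipsoid supplied by Lemma~3. In particular, $\tau_j$ must be treated as a fixed but freely chosen scalar parameter satisfying $0<\tau_j<\lambda$ (otherwise the $(2,2)$ block is indefinite), and the bookkeeping relating $\eta_j^2\Sigma$ to the actual covariance via the constants $\delta_n$ and $\tau_j$ in the $(3,3)$ block requires verifying that the Schur-complement-induced penalty $\tfrac{\delta_n\eta_j^2}{\tau_j}\Sigma$ indeed majorizes $V_j$ once the variance bound from \eqref{eq.closedLoop_condition_DD_MV_7}--\eqref{eq.closedLoop_condition_DD_MV_9} is plugged in. A secondary subtlety is that the containment $\mathcal{E}(V_j,1)\subseteq\lambda\mathcal{S}_c$ must be established without resorting to a model-based argument, so the supporting-hyperplane contradiction step from Theorem~1 must be re-applied to the \emph{shifted} ellipsoid centered at the nominal next state, rather than at the origin; handling this shift while keeping the optimization jointly convex in $(P_i,Y_i,H_i,\mu_i,\eta_i,\zeta_i)$ is the technical crux.
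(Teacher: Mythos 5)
Your proposal is correct and follows essentially the same route as the paper's proof: the confidence ellipsoid of Lemma~3, an S-procedure with the multiplier $\tau_j$ to enforce containment of that (shifted) ellipsoid in $\mathcal{E}(P_i,\lambda)$, the bound $\upsilon_j(t)\upsilon_j(t)^T \preceq P_j$, the substitution $Y_j = G_{K,j}P_j$ with $X_0Y_j = P_j$, and the linearization of the trace term via $H_j$, $\eta_j$, $\zeta_j$ through \eqref{eq.closedLoop_condition_DD_MV_7}--\eqref{eq.closedLoop_condition_DD_MV_9}. The only cosmetic difference is direction: you unpack the LMI \eqref{eq.closedLoop_condition_DD_MV_1} back into the probabilistic containment, whereas the paper derives the LMI forward from the containment, and the "shifted-ellipsoid" subtlety you flag is resolved exactly by the S-procedure step you already propose.
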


\begin{proof}
First and foremost, according to Proposition 1, the CSiP property for the state, i.e., $x(t) \in \mathcal{S}_c \Rightarrow \mathbb{P}[x(t+1) \in \lambda\mathcal{S}_c] \geq (1-\delta)$ is equivalent to the following constraint
\begin{equation}\label{eq.equivalence_MV}
\upsilon_j(t) \in \mathcal{E}(P_j,1) \Rightarrow \mathbb{P}[\upsilon_j(t+1) \in \mathcal{E}(P_i,\lambda)] \geq (1-\delta),
\end{equation}
where $j=\operatorname{mod}(i+n_v-2,n_v)+1$ for $i=1,\ldots,n_v$. Probabilistic $\lambda$-contractivity with the risk level $\delta$ is satisfied if \eqref{eq.equivalence_MV} holds.
Satisfaction of the right-hand side amounts to assure that the set of possible next states with probability $1-\delta$ is a subset of the safe set. That is, based on Lemma 3,
\begin{align}
\mathbb{P}\big[\upsilon_j(t+1) \in \mathcal{E}(P_i,\lambda)\big] \geq (1-\delta),
\end{align}
is satisfied if
\begin{align}
& \Big \{\upsilon_j:\big(\upsilon_j-X_1G_{K,j}\upsilon_j(t)\big)^T V_j^{-1} \big(\upsilon_j-X_1G_{K,j}\upsilon_j(t)\big) \leq 1 \Big\} \nonumber \\ 
& \subseteq \Big \{\upsilon_j:\upsilon_j^T P_i^{-1}\upsilon_j \leq \lambda \Big \},
\end{align}

Equivalently,
\begin{align}
&  1-\big(\upsilon_j-X_1G_{K,j}\upsilon_j(t)\big)^T V_j^{-1} \big(\upsilon_j-X_1G_{K,j}\upsilon_j(t)\big) \geq 0 \nonumber \\  
& \Rightarrow \lambda-\upsilon_j^T P_i^{-1} \upsilon_j \geq 0.
\end{align}

Using the S-procedure, this is equivalent to the condition that there exists a $\tau_i$ that satisfies
\begin{align}\label{eq.S_procedure_1}
& \lambda-\upsilon_j^T P_i^{-1} \upsilon_j-\tau_j \times \nonumber \\
& \big[1-\big(\upsilon_j-X_1G_{K,j}\upsilon_j(t)\big)^T V_j^{-1} \big(\upsilon_j-X_1G_{K,j}\upsilon_j(t)\big) \big] \geq 0, \,\, \forall \upsilon_j
\end{align}
The $\upsilon_j$ that minimizes this expression is
\begin{align}
\upsilon_j=-\tau_j(P_i^{-1}-\tau_j V_j^{-1})^{-1} V_j^{-1} X_1 G_{k,j} \upsilon_j(t).
\end{align}
Replacing this expression into \eqref{eq.S_procedure_1} yields
\begin{align}\label{eq.S_procedure_2}
\lambda-\tau_j-\upsilon_j^T(t) G_{K,j}^T X_1^T \big(P_i-\frac{1}{\tau_j} V_j \big)^{-1} X_1 G_{K,j} \upsilon_j(t) \geq 0.
\end{align}

Using the Schur complement on \eqref{eq.S_procedure_2} gives
\begin{align}
\begin{bmatrix}
P_i-\frac{1}{\tau_j}V & X_1 G_{K,j}\upsilon_j(t)  \\
(*) & \lambda-\tau_j
\end{bmatrix} \succeq 0, \,\,\, \forall \upsilon_j(t) \in \mathcal{E}(P_j,1).
\end{align}

Using Schur complement again and using $V_j$ in \eqref{eq.variance} yields
\begin{align}\label{eq.Schur_2}
& P_i-\frac{1}{\lambda-\tau_j} X_1 G_{K,j}\upsilon_j(t)\upsilon_j^T(t) G_{K,j}^T X_1^T- \nonumber \\
& \frac{\delta_n}{\tau_j}\bigg(\operatorname{Tr}(G_{K,j} P_j G_{K,j}^T)\Sigma+\Sigma\bigg) \succeq 0, \,\,\, \forall \upsilon_j(t) \in \mathcal{E}(P_j,1).
\end{align}

Since $\upsilon_j(t) \in \mathcal{E}(P_j,1)$, one has $\upsilon_j(t) \upsilon_j(t)^T \leq P_j$. Therefore, a sufficient condition for the satisfaction of \eqref{eq.Schur_2} is
\begin{align}\label{eq.Schur_3}
& P_i-\frac{\delta_n}{\tau_j}\Sigma-\frac{1}{\lambda-\tau_j} X_1 G_{K,j}P_jG_{K,j}^T X_1^T- \nonumber \\ 
& \frac{\delta_n}{\tau_j}\operatorname{Tr}\big(G_{K,j}P_j G_{K,j}^T\big)\Sigma \succeq 0.
\end{align}

Defining $Y_j=G_{K,j}P_j$, one has 
\begin{align}\label{eq.Schur_4}
& P_i-\frac{\delta_n}{\tau_j}\Sigma-\frac{1}{\lambda-\tau_j}X_1 Y_j P_j^{-1} Y_j^T X_1^T-\frac{\delta_n}{\tau_j}
\operatorname{Tr}\big(Y_j P_j^{-1} Y_j^T\big)\Sigma \succeq 0.
\end{align}

A sufficient condition for the satisfaction of this inequality is
\begin{align}
& P_i-\frac{\delta_n \eta_j^2}{\tau_j}\Sigma-\frac{1}{\lambda-\tau_j} X_1 Y_j P_j^{-1} Y_j^T X_1^T \succeq 0, \label{eq.Schure_5_1} \\
& 1+\operatorname{Tr}\big(Y_j P_j^{-1} Y_j^T\big) \leq \eta_j^2. \label{eq.Schure_5_2}
\end{align}
for which $\eta_j$ can be minimized to maximize its satisfaction. Using Schur complement, the inequality \eqref{eq.Schure_5_1} yields the inequality \eqref{eq.closedLoop_condition_DD_MV_1}. Moreover, using $Y_j=G_{K,j}P_j$, $X_0 G_{K,j}=I$ amounts to $X_0 Y_j=P_j$, which gives the equality \eqref{eq.closedLoop_condition_DD_MV_6}. 

Now, consider a matrix $H_j$ such that 
\begin{equation}\label{eq.Schur_final_1}
Y_j P_j^{-1} Y_j^T \preceq H_j.
\end{equation}
which results in
\begin{equation}\label{eq.Schur_final_2}
\operatorname{Tr}\big(Y_j P_j^{-1} Y_j^T\big) \leq \operatorname{Tr}(H_j) \leq \eta_j^2-1. 
\end{equation}

A sufficient condition for the satisfaction of the inequality \eqref{eq.Schur_final_2} is
\begin{equation}\label{eq.Schur_final_3}
\eta_j^2-1 \leq \zeta_j. 
\end{equation}
for which $\zeta_j$ can be minimized to maximize its satisfaction.

Applying the Schur complement on \eqref{eq.Schur_final_1} and \eqref{eq.Schur_final_3} yield the LMIs \eqref{eq.closedLoop_condition_DD_MV_7} and \eqref{eq.closedLoop_condition_DD_MV_8}, respectively, with respect to the constraint \eqref{eq.closedLoop_condition_DD_MV_9}. Based on Lemma 1, Problem 1 is solved. It should be noted that since index $j$ is the circular form of index $i$ and it belongs to the same domain, indices of the decision variables of constraints \eqref{eq.closedLoop_condition_DD_MV_6}--\eqref{eq.closedLoop_condition_DD_MV_9} have been denoted by $i$ for simplicity.
\end{proof}

\begin{remark}
While the acceptable risk level \( \epsilon \) is specified as a fixed parameter during controller synthesis, it does not directly control the true probability of constraint satisfaction in the presence of stochastic disturbances. In practice, the actual risk of safety violations is influenced by the variance of the noise distribution. An increase in noise variance leads to a broader dispersion of the state trajectories, which can elevate the probability of violating safety constraints, even if \( \epsilon \) remains unchanged. This underscores the importance of incorporating the noise covariance structure into the controller design to ensure that the intended probabilistic safety guarantees are reliably achieved.
\end{remark}

\section{Set Partitioning and State-Feedback Gains Calculation}
Up to this point, we have shown the maximization of the convex hull of ellipsoids within the admissible set, rendering it either CSiE or CSiP. Now, we aim to elucidate the process of partitioning and computing state-feedback controllers for these partitions, a procedure that is common for both model-based and data-driven scenarios. For partitioning of the obtained convex hull of ellipsoids, this section generalizes the method given in \cite{hoai2023further}, which is described for second-order systems, to higher-order systems by providing an algorithmic approach. To do so, the following definition is first given.



\begin{definition}
A point $v^*$ belonging to the boundary of $\mathcal{C}$, i.e., $\operatorname{Fr}(\mathcal{C})$, stands as an extreme point of $\mathcal{C}$ if it cannot be expressed as a combination formed through convex combinations of other points within $\mathcal{C}$.
\end{definition}

First step is to find the vertices of the convex hull, which can be achieved by solving the following set of equations for $i=1,\ldots,n_v$  \cite{hoai2023further}
\begin{equation}\label{eq.vertices}
v^T P_i v=1
\end{equation}
where $v \in \mathbb{R}^n$ is the solution of \eqref{eq.vertices}. Not all solutions to the aforementioned equation necessarily represent vertices of the convex hull. Those solutions that do correspond to vertices are denoted by $v^*$. Algorithm 1 summarizes the partitioning method which is performed offline.
\begin{algorithm}
\caption{Set partitioning algorithm}\label{alg.set}
\begin{algorithmic}[1]
\Require $P_i$: A set of matrices defining the equations for the convex hull; $n_v$: Number of vertices.

\Ensure $v^*$: Vertices that form the convex hull.
\Steps
\State \noindent \textbf{for} $i_1=1:n_v-(n-1)$ \newline
\indent \textbf{for} $i_2=i_1+1:n_v-(n-2)$ \newline
\indent \indent $\vdots$ \newline
\indent \indent \textbf{for} $i_n=i_{n-1}+1:n_v-(n-i_n)$ \textbf{do} \newline
\indent \indent \text{$\vartriangleright$ Solve the following set of equations:}
\begin{align}
\phi^T P_{i_1} & \phi^T=1 \nonumber \\
\phi^T P_{i_2} & \phi^T=1 \nonumber \\
& \vdots \nonumber \\
\phi^T P_{i_n} & \phi^T=1 \nonumber
\end{align}
\indent \indent \text{$\vartriangleright$ Obtain all possible vertices of each iteration:}
\begin{align}
v_{i_1} = \; & P_{i_1}\phi \nonumber \\
v_{i_2} = \; & P_{i_2}\phi \nonumber \\
& \vdots \nonumber \\
v_{i_n} = \; & P_{i_n}\phi \nonumber
\end{align}
\indent \indent \text{$\vartriangleright$ Stack all possible vertices of each iteration:}
\begin{equation}
v_{com}=[v_{i_1},v_{i_2},\ldots,v_{i_n}] \nonumber
\end{equation}
\indent \indent \text{$\vartriangleright$ Stack all possible vertices of all iterations:}
\begin{equation}
v_{all}=[v_{all},v_{com}] \nonumber
\end{equation} 
\indent \indent \textbf{end for} \newline
\indent \textbf{end for} \newline
\textbf{end for} \newline
\State \textbf{Use Quickhull algorithm \cite{barber1996quickhull} to find the convex hull of the obtained set of points}
\State \textbf{For each of the extreme points, find $n-1$ neighborhood points. Then, the extreme points $(0,v_1^*,\ldots,v_r^*)$ will be the vertices of the corresponding partition.}
\end{algorithmic}
\end{algorithm}

To design the state-feedback control gains for all partitions, without sacrificing the generality of the situation, let's now examine the scenario where $x(t)$ belongs to the convex combination $\operatorname{Co}(v_1^*,\ldots,v_r^*)$, with $v_i^*$ being extreme points of the convex hull located in $\mathcal{E}(P_i,1)$ for $i=1,\ldots,r$, and where $2 \leq r \leq n$. We can express $x(t)$ as a linear combination
\begin{equation}\label{eq.x_CH}
x(t) = \gamma_1(t) v^*_1 + \ldots + \gamma_r(t) v^*_r 
\end{equation}
where $0 \leq \gamma_i(t) \leq 1$ for $i = 1,\ldots,r$. This representation in equation \eqref{eq.x_CH} can be transformed into a vector form as follows
\begin{equation}\label{eq.x_CH_compact}
x(k) = V^* \Gamma(t)
\end{equation}
where $\Gamma(t) = [\gamma_1(t),\gamma_2(t),\ldots,\gamma_r(t)]^T$, and the matrix $V$ is structured as $V^*=[v^*_1,v^*_2,\ldots,v^*_r]$.

Because $v^*_1,v^*_2,\ldots,v^*_r$ are linearly independent, it is apparent that the rank of matrix $V^*$ is $r$. By utilizing the singular value decomposition (SVD), the matrix $V^* \in \mathbb{R}^{n \times r}$ can be rewritten as
\begin{equation}\label{eq.SVD}
V^* = U^*_v S^*_v {V_v^{*^T}}
\end{equation}
where $U^*_v$ is an $n \times r$ matrix, $V^*_v$ is a $r \times r$ matrix satisfying ${U_v^{*^T}}  U^*_v = I$, ${V_v^{*^T}} V^*_v = I$, and $S^*_v$ is a diagonal matrix with dimensions $r \times r$.

Since the rank of $V^*$ is $r$, it implies that the diagonal elements of $S^*_v$ are all positive. Using equations \eqref{eq.x_CH_compact} and \eqref{eq.SVD}, one can deduce
\begin{equation}\label{eq.Gamma}
\Gamma(t) = V^*_v {S_v^{*^{-1}}} {U_v^{*^T}} x(t)
\end{equation}

On the other hand, the control input for the given $x(t)$ in $n_p$-th partition is computed as
\begin{equation}\label{eq.u_CH}
u_{n_p}(t) = \gamma_1(t) K_1 v^*_1 + \ldots + \gamma_r(t) K_r v^*_r
\end{equation}

Hence, with $u_i = K_i v^*_i$ for all $i = 1,\ldots,r$, and utilizing \eqref{eq.Gamma}, one gets
\begin{equation}\label{eq.u_CH_2}
u_{n_p}(t) = [u_1,u_2,\ldots,u_r] V^*_v {S_v^{*^{-1}}} {U_v^{*^T}} x(t)
\end{equation}
or defining $K_{n_p}^p=[u_1,u_2,\ldots,u_r] V^*_v {S_v^{*^{-1}}} {U_v^{*^T}}$, \eqref{eq.u_CH_2} becomes
\begin{equation}\label{eq.u_CH_3}
u_{n_p}(t) = K_{n_p}^p x(t), \,\,\, \forall n_p=1,\ldots,N_p,
\end{equation}
where $N_p$ shows the number of partitions, and the control gains $K_i$ related to each of the ellipsoids are calculated according to previous theorems. Finally, the piecewise-affine control input is calculated as \eqref{eq.general_controller}.

The achieved results are summarized in Algorithm 2 which is executed online to determine the corresponding state-feedback gain.
 
\begin{algorithm}
\caption{State-feedback gain calculation algorithm}\label{Algorithm_2}
\begin{algorithmic}[1]
\Require Current state $x(t)$; Number of partitions $N_p$; Control gains $K_i$.
\Ensure State-feedback gain $K_{n_p}^p$.
\Steps
\State \noindent \textbf{Examine the current state $x(t)$ to determine the partition to which it belongs.}
\State \textbf{Compute the corresponding control gain as follows}
\begin{align}
K_{n_p}^p=[u_1,u_2,\ldots,u_r] V^*_v {S_v^{*^{-1}}} {U_v^{*^T}}, \,\,\, n_p=1,\ldots,N_p \nonumber
\end{align}
\indent \text{$\vartriangleright$ $N_p$ denotes the number of partitions.}
\end{algorithmic}
\end{algorithm}

\section{Data-based Safe Reinforcement Learning Control Design}
The designed direct data-based risk-aware safe controller is utilized as a safeguard to rectify the actions of readily available RL algorithms with minimal intrusion. In other words, the goal is to limit constraints on the RL agent as much as possible and supervise its actions exclusively in situations where they might potentially jeopardize the safety of the system. To ensure the safety of RL algorithms, we establish an intervention guideline that certifies safety, and importantly, this guideline is independent of the specific RL algorithm chosen.

Additionally, this approach offers robust safety assurances both while training and when deploying RL algorithms. The corrective guideline maintains the RL agent's actions if they are deemed safe. It only interpolates with the data-based secure controller when safety needs to be ensured. This method's advantage lies in the fact that safety validation is only required for the probabilistic controller, allowing the RL agent to concentrate on exploration and learning. The safe controller controller is learned initially (which demands significantly less data compared to training an optimal policy through RL) and remains in use by the intervention guideline throughout the learning process and post the RL agent's learning phase to affirm its safety.

Given that $u^{RL}$ represents the present policy of the RL agent, dictating its actions, the interpolation guideline with minimum intervention generates the control action as follows
\begin{align}\label{eq.u_optimal_safe}
u(t) = \left\{ \begin{array}{l}
u^{RL}(t) \,\,\, \mathrm{if} \,\,\, \mathbb{P}\big[x(t+1) \in \mathcal{S}_c] \geq (1-\epsilon), \\
u^s(t) \,\,\,\,\,\,\,\, \text{otherwise},
\end{array} \right.
\end{align} 
where $\epsilon$ is an acceptable risk level and 
\begin{equation}\label{eq.interpolation}
u^s(t)=\varphi(t) u^{safe}(t)+(1-\varphi(t)) u^{RL}(t)  
\end{equation}
interpolates the safe and optimal controllers using the following linear optimization problem
\begin{align}\label{eq.interpolation}
& \min \,\,\, \varphi(t) \\
& \mathrm{s.t.} \,\,\, \mathbb{P}\big[x(t+1) \in \mathcal{S}_c\big|u^s(t)\big] \geq (1-\epsilon) \nonumber 
\end{align} 

Here, $u^{safe}(t)$ represents the control action executed by the safe controller that has been acquired through data-driven learning, and $\varphi(t)$ is the interpolation variable. The condition $\mathbb{P}\big[x(t+1) \in \mathcal{S}_c\big|u^{RL}(t)\big] \geq (1-\epsilon)$ ensures that the system's state will remain within the safe set, meeting an acceptable threshold, at the subsequent time step $t+1$ after applying the RL action $u^{RL}(t)$ to the system. The scalar interpolation approach is designed not only to ensure safety but also to maintain as much of the original RL policy’s optimal performance as possible. By optimizing a scalar interpolation factor \( \varphi(t) \), the method determines the minimal intervention necessary from the safe controller to satisfy a high-probability safety constraint. When the RL action is already safe, the scalar naturally resolves to zero, ensuring full reliance on the RL policy. Conversely, when safety may be violated, \( \varphi(t) \) increases just enough to restore safety. This principled, low-dimensional interpolation technique makes it possible to operate near the RL controller's performance envelope while avoiding overly conservative behavior typical of traditional safe control schemes. This design choice preserves optimality in expectation, which is particularly important in high-reward or exploration-heavy tasks. A challenge is that knowing the next step requires the knowledge of the $B$ dynamics as discussed next. Learning the $B$ dynamics cannot be achieved under Assumption 4 and requires \eqref{eq.rank} to be full rank. The advantage of the presented approach is that a robust optimization can be performed over an uncertain set of $B$ matrices that are available as prior knowledge, as elaborated in the next assumption. In the following we show how the $B$ dynamics are required and its uncertainties can be incorporated. Note that if more data becomes available, then, a more accurate $B$ can be learned to reduce the conservatism. However, our approach does not need to wait until rich data are collected to make safe decisions. 

\begin{assumption}\label{Assumption_6}
Assume that the input matrix $B$ follows a normal distribution, i.e., $B \sim \mathcal{N}(B_n,\Delta B)$ where $B_n$ is the expected value of $B$, and $\Delta B$ represents its covariance.
\end{assumption}

According to Lemma 2 and due to the fact that 
\begin{equation}
x(t+1)=(A+BK)x(t)+B(u^{RL}-u^{safe})+w(t),
\end{equation}
Similar to the proof of Lemma 3, the random variable $\tilde{\upsilon}_j(t+1)$ in the presence of the RL input is given as
\begin{equation}
\tilde{\upsilon}_j(t+1)=-W_0G_{K,j}\upsilon_j(t)+\Delta B(u^{RL}-u^{safe})+w(t),
\end{equation}
and its covariance, using \eqref{eq.ellips_state_variance_2}, is computed as
\begin{align}
& \mathbb{E}\big[\tilde{\upsilon}_j(t+1)\tilde{\upsilon}_j^T(t+1)\big] \leq \nonumber \\
& \operatorname{Tr}(G_{K,j}P_jG_{K,j}^T)\Sigma+\Sigma+\Delta B(u^{RL}-u^{safe})(u^{RL}-u^{safe})^T \Delta B^T \nonumber \\
& = V_R
\end{align}

Hence, the constraint in \eqref{eq.interpolation} is equivalent to
\begin{equation}\label{eq.safety_criteria}
F_{CH,s}\big(X_1G_{K,p}x(t)+B_n(u^{RL}-u^{safe})\big) \leq (g_{CH,s}-\gamma_s),
\end{equation}
where $F_{CH,s}$ and $g_{CH,s}$ denote the $s$th row of $F_{CH}$ and $g_{CH}$, respectively, and $\gamma_s = \kappa_s\sqrt{F_{CH,s}V_RF_{CH,s}^T}$ with $\kappa_s=\sqrt{\frac{1-\epsilon_s}{\epsilon_s}}$.

The inequality \eqref{eq.safety_criteria} will be used as a safety criteria to check if the next state is likely to violate the safe set by applying the RL policy.

Thus, the control input is computed as
\begin{align}\label{eq.u_optimal_safe_final}
u(t) = \left\{ \begin{array}{l}
u^{RL}(t) \,\,\, \mathrm{if} \,\,\, \Psi_s\big(x(t+1)|u^{RL}(t)\big) \leq (g_{CH,s}-\gamma_s), \\
u^s(t) \,\,\,\,\,\,\,\, \text{otherwise},
\end{array} \right.
\end{align}
with
\begin{align}\label{eq.interpolation_final}
& \min \,\,\, \varphi(t) \\
& \mathrm{s.t.} \,\,\, \Psi_s\big(x(t+1)|u^{s}(t)\big) \leq (g_{CH,s}-\gamma_s). \nonumber
\end{align}
where $\Psi_s\big(x(t+1)|u^{RL}(t)\big)=F_{CH,s}\big(X_1G_{K,p}x(t)+B_n(u^{RL}-u^{safe})\big)$ demonstrates the Minkowski function for the convex hull of ellipsoids when the optimal policy is applied to the system.

\begin{theorem}\label{Theorem_5}
Assume the reinforcement learning agent is designed such that it ensures convergence to the optimal control solution without constraints. Applying the control policy \eqref{eq.u_optimal_safe_final} to the system \eqref{eq.LTI} effectively addresses Problem 1.
\end{theorem}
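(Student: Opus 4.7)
The plan is to establish that the switching law in \eqref{eq.u_optimal_safe_final} simultaneously (i) enforces the probabilistic admissibility constraint of Problem~\ref{Problem_1} and (ii) inherits the optimality of the unconstrained RL solution whenever it does not jeopardize safety. Accordingly, I would split the argument into a safety-invariance part and a performance part, handled in that order because the performance claim is meaningful only once feasibility is secured.

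First, I would verify safety by induction on $t$. Assuming $x(0)\in\mathcal{S}_c\subseteq\mathcal{S}$, the inductive step requires showing $\mathbb{P}[x(t+1)\in\mathcal{S}_c]\geq 1-\epsilon$ whenever $x(t)\in\mathcal{S}_c$. The switching rule \eqref{eq.u_optimal_safe_final} partitions the argument into two cases. In the first case, the predicted next state under $u^{RL}(t)$ already satisfies \eqref{eq.safety_criteria}, so the claim is immediate from the tightening that produced \eqref{eq.safety_criteria} via Lemma~\ref{lem_2}. In the second case, the blended action $u^s(t)$ from \eqref{eq.interpolation} is used, and I must argue that the scalar program \eqref{eq.interpolation_final} is feasible. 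Feasibility follows by taking $\varphi(t)=1$, i.e., pure application of $u^{safe}(t)$, which by Theorem~\ref{theorem_4} renders $\mathcal{S}_c$ $\lambda$-contractive in probability at the prescribed risk level. Since \eqref{eq.interpolation_final} is a linear program in a bounded scalar, an optimal $\varphi^*(t)\in[0,1]$ exists and yields a safe $u^s(t)$; combined with $\mathcal{S}_c\subseteq\mathcal{S}$, this delivers $\mathbb{P}[x(t+1)\in\mathcal{S}]\geq 1-\epsilon$ and closes the induction.

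Next, I would handle the performance requirement. Because the switching rule applies $u^{RL}(t)$ verbatim whenever its associated one-step prediction satisfies \eqref{eq.safety_criteria}, and because the interpolation objective in \eqref{eq.interpolation_final} minimizes $\varphi(t)$ so that the safe controller is blended in only to the extent needed to restore feasibility, the executed policy coincides with the unconstrained RL optimizer on the safe portion of the state space and departs from it only by the minimum amount required to repair safety. Under the hypothesis that the RL agent converges to the unconstrained optimal solution of \eqref{eq.cost_RL}, the executed policy therefore approaches the minimum of \eqref{eq.cost_RL} consistent with the probabilistic safety constraint, which is precisely what Problem~\ref{Problem_1} asks for.

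The technical step I expect to require the most care is converting the probabilistic constraint $\mathbb{P}[x(t+1)\in\mathcal{S}_c\mid u^s(t)]\geq 1-\epsilon$ into the deterministic tightened inequality \eqref{eq.safety_criteria}. This combines the covariance bound on $\tilde{\upsilon}_j(t+1)$ derived from Lemma~\ref{lem_1}, augmented with the $\Delta B(u^{RL}-u^{safe})(u^{RL}-u^{safe})^T\Delta B^T$ contribution from Assumption~\ref{Assumption_6}, together with the row-wise Chebyshev-type bound of Lemma~\ref{lem_2} with risk budgets $\epsilon_s$ satisfying $\sum_s\epsilon_s\leq\epsilon$. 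The principal obstacle is ensuring that $\mathcal{S}_c$ admits a polyhedral outer description $F_{CH}x\leq g_{CH}$ on which Lemma~\ref{lem_2} can act; this is supplied by the partitioning procedure of Section~7 (Algorithm~\ref{alg.set}), which represents $\mathcal{S}_c$ as a finite union of polyhedral cells whose facets collectively yield the rows $F_{CH,s}$ and scalars $g_{CH,s}$ appearing in \eqref{eq.safety_criteria}.
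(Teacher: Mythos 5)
Your proposal is correct and follows essentially the same route as the paper's own proof: safety comes from the fact that the blended action can always fall back to the learned safe controller (whose invariance-in-probability is guaranteed by Theorem~\ref{theorem_4}), and optimality is preserved because the switching rule applies $u^{RL}$ verbatim on the safe region and the scalar program intervenes only minimally otherwise. Your version is in fact more explicit than the paper's — the induction structure, the feasibility argument at $\varphi(t)=1$, and the discussion of the Lemma~\ref{lem_2} tightening are all left implicit in the published proof — so no changes are needed.
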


\begin{proof}
 The condition \eqref{eq.u_optimal_safe_final} provided by the interpolation rule is equivalent to
\begin{align}\label{eq.u_optimal_safe_final_2}
u(t) = \left\{ \begin{array}{l}
u^{RL}(t) \,\,\, \mathrm{if} \,\,\, x(t) \in \{\mathcal{S}_c-\Gamma\}, \\
u^s(t) \,\,\,\,\,\,\,\, \text{otherwise},
\end{array} \right.
\end{align}
where $\Gamma=\{x(t):\Psi_s\big(Ax(t)+Bu^{RL}(t)\big) > (g_{CH,s}-\gamma_s) \}$. Given that ${u^s}$ guarantees safety, the condition ${\Psi_s}\big(Ax(t)+Bu^{RL}(t)\big) \leq (g_{CH,s}-\gamma_s)$ preserve the invariant property of the convex hull.
Furthermore, the safe backup policy only comes into play alongside the RL agent when safety becomes compromised. This empowers an RL agent equipped with guaranteed convergence to explore without constraints, facilitating the acquisition of knowledge for a secure optimal controller. Thus, the solution to Problem 1 is effectively achieved.
\end{proof}

Figure~2 illustrates the overall architecture of the proposed framework, where risk-neutral and risk-aware safe controllers are synthesized from data and integrated with an RL policy via scalar optimization. This structure enables flexible fusion of safety and performance objectives under uncertainty.

\begin{figure}[h]
\centering
\resizebox{\textwidth}{!}{  
\begin{tikzpicture}[node distance=1.4cm and 1.4cm]

\tikzstyle{block} = [rectangle, draw=black, minimum width=3.6cm, minimum height=1.2cm, text centered]
\tikzstyle{arrow} = [->, very thick]

\node (data) [draw=black, fill=orange!30, ellipse, minimum width=4.5cm, minimum height=1.4cm] {Data Collection ($X_0$, $U_0$, $X_1$)};

\node (dyn) [block, fill=blue!15, below=of data, yshift=0cm] {Closed-Loop Dynamics Learning};
\node (ellipsoids) [block, fill=blue!15, below=of dyn, yshift=0cm] {Ellipsoid Construction \& Convex Hull};
\node (partition) [block, fill=blue!15, below=of ellipsoids, yshift=0cm] {Partitioning into Regions};

\node (riskneutral) [block, fill=green!20, below left=of partition, xshift=2.4cm] {Risk-Neutral Control (Certainty Equivalence)};
\node (riskaware) [block, fill=green!20, below right=of partition, xshift=-2.4cm] {Risk-Aware Control (Minimum Variance)};

\node (rl) [block, fill=yellow!30, below=of partition, yshift=-2cm] {RL Policy Learning ($u_{RL}(t)$)};

\node (fusion1) [block, fill=red!15, below left=of rl, xshift=2.4cm] {Controller Merging Using Scalar Optimization};
\node (fusion2) [block, fill=red!15, below right=of rl, xshift=-2.4cm] {Controller Merging Using Scalar Optimization};

\node (output1) [block, fill=gray!20, below=0.6cm of fusion1] {Final Risk-Neutral Safe Optimal Control Output};
\node (output2) [block, fill=gray!20, below=0.6cm of fusion2] {Final Risk-Aware Safe Optimal Control Output};

\draw [arrow] (data) -- (dyn);
\draw [arrow] (dyn) -- (ellipsoids);
\draw [arrow] (ellipsoids) -- (partition);

\draw [arrow] (partition) -- (riskneutral);
\draw [arrow] (partition) -- (riskaware);

\draw [arrow] (riskneutral.south) -- (fusion1.north);
\draw [arrow] (riskaware.south) -- (fusion2.north);

\draw [arrow] (rl.south) -- ++(0,-0.5) -| (fusion1.north);
\draw [arrow] (rl.south) -- ++(0,-0.5) -| (fusion2.north);

\draw [arrow] (fusion1) -- (output1);
\draw [arrow] (fusion2) -- (output2);

\end{tikzpicture}
}  
\caption{Flowchart showing risk-neutral and risk-aware safe control strategies with RL integration.}
\label{fig.flowchart}
\end{figure}
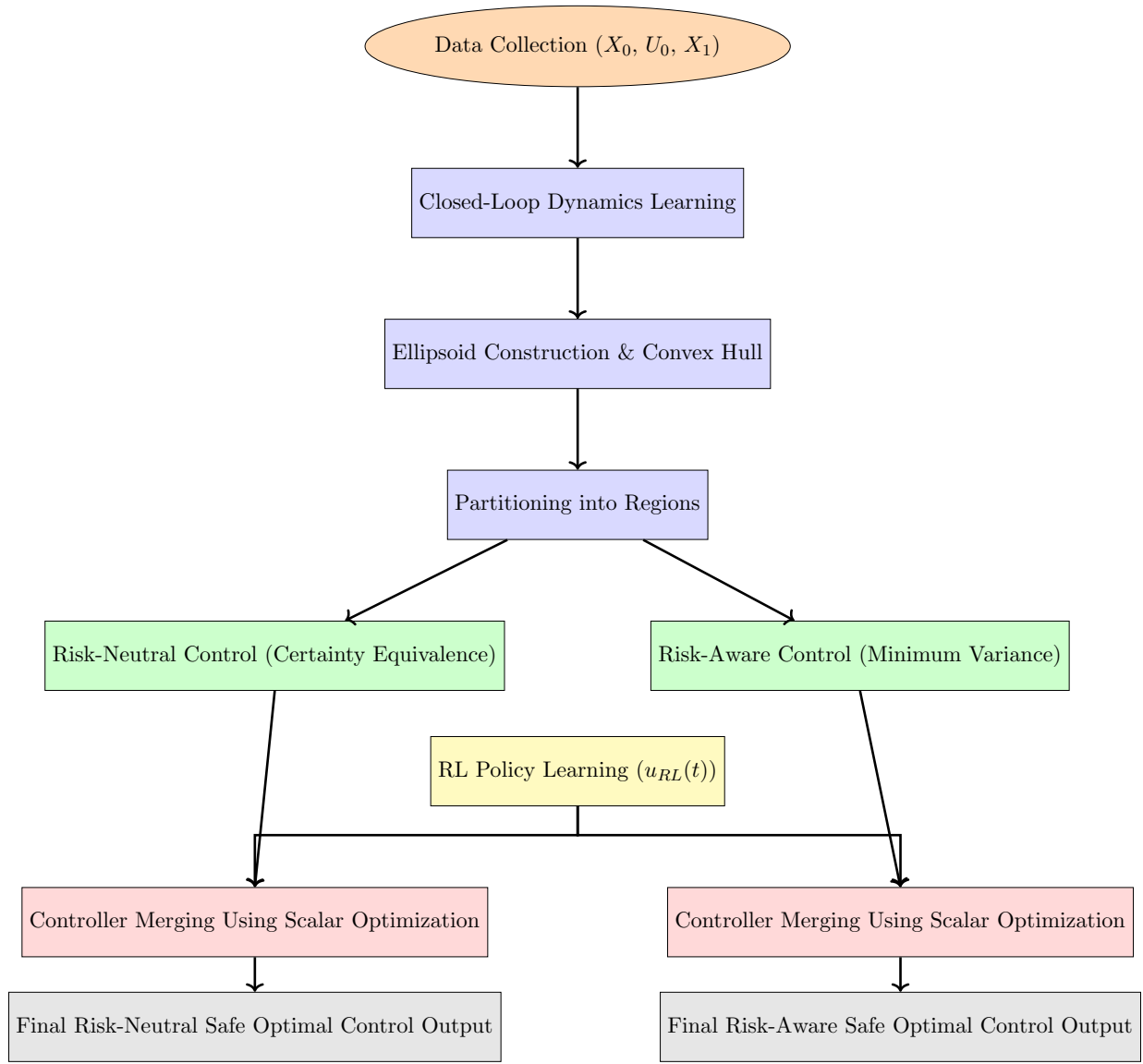

\section{Simulation}
In this section, two simulation examples are provided to evaluate the efficiency of the designed approach in the presence of noise.

\subsection{Numerical example}
Consider the following discrete-time LTI system
\begin{equation}\label{eq.Example}
x(t+1) = \begin{bmatrix}
 0.2895 & -0.0001 \\
 -1.6012 & 0.0295
     \end{bmatrix} x(t) + \begin{bmatrix}
                            0 \\
                            1
                      \end{bmatrix} u(t) + w(t)
\end{equation}

And the admissible set is a polyhedral set defined in \eqref{poly} with
\begin{align}
    F=\begin{bmatrix}
      1/3 & 1/4 \\
      0 & 1/4 \\
     -4/12 & -1/12 \\
     -1/3 & -1/4 \\
      0 & -1/4 \\
      4/12 & 1/12
    \end{bmatrix},
\end{align}
\begin{align}
    g=\begin{bmatrix}
      1 & 1 & 1 & 1 & 1 & 1
    \end{bmatrix}^T.
\end{align}

Using the traditional data-based $\lambda$-contractive approach given in \cite{bisoffi2020data} for this admissible set, the optimization problem becomes infeasible, meaning that there is no state-feedback gain that can stabilize the system \eqref{eq.Example} within this safe set.

Now, to show the efficacy of the convex hull of ellipsoids approach in the open-loop manner, Theorem 1 is applied to the open-loop and deterministic system \eqref{eq.Example}, and the maximum convex hull of ellipsoids is shown in Figure~\ref{fig.openLoop}. The convex hull of ellipsoids obtained in the open-loop form still cannot cover the entire safe set. However, as depicted in Figure~\ref{fig.closedLoop}, by applying the closed-loop form outlined in Theorem 2, using three ellipsoids—or, in other words, three state-feedback control policies—the convex hull of the ellipsoids obtained in closed-loop form almost covers the main polyhedral safe set and becomes $\lambda$-contractive.

To perform the simulation, it is assumed that the noise $w(t)$ follows a Gaussian distribution with a variance of $0.0005I$, where $\lambda=0.8$ and $\delta=0.1$. As the first step of the closed-loop simulation, the performance of the probabilistic safety backup is compared to that of the certainty equivalence method, without taking into account the optimal policy. To maintain fairness in the comparison and highlight the robustness of the minimum-variance method, it is important to emphasize that the safe control approach presented in Theorem 2 is executed without incorporating any measurements of noise.

Figure~\ref{fig.partitioned_set} illustrates the convex hull of ellipsoids and its partitioned form derived using Theorem 4 and Algorithm 1. This partitioning enables the construction of a piecewise-affine safe controller, allowing a broader portion of the admissible set to be covered while accommodating the nonlinearities and stochasticity inherent in the system. Figure~\ref{fig.comparison_1} presents the time evolution of system trajectories over 100 different realizations of Gaussian noise under both the certainty-equivalence safe controller (which disregards variance in its synthesis) and the proposed minimum variance-based probabilistic safe controller. The figure demonstrates that while the certainty-equivalence approach maintains a nominal level of safety in idealized settings, it fails to account for variability, resulting in frequent constraint violations under stochastic disturbances. In contrast, the proposed controller explicitly incorporates noise variance into its synthesis process, thereby minimizing the risk of safety violations and ensuring robustness with high confidence across stochastic realizations.

To further validate the practical utility of the proposed approach, an additional simulation is carried out using the data-based interpolation algorithm described in Theorem 5. In this experiment, we apply the learned unconstrained optimal control policy from \cite{de2021low}—both in isolation and in combination with the proposed safety framework—under Gaussian noise with a covariance of \( \Sigma = 0.01 I \). The cost function weights for the LQR controller are selected as
\begin{align}\label{eq.LQR_params}
Q=\begin{bmatrix}
        100 & 0 \\
        0   & 0.01
\end{bmatrix}, \quad R = 50.
\end{align}

The results, shown in Figure~\ref{fig.comparison_2}, clearly illustrate the limitations of the unconstrained optimal controller, which, in the absence of a safety mechanism, frequently violates state constraints due to its lack of variance-awareness. By integrating this optimal policy with our proposed safety backup controller through a scalar convex combination, the resulting safe optimal policy successfully preserves the performance benefits of the optimal controller while ensuring constraint satisfaction. This integration is achieved through a data-driven scalar optimization framework that minimizes the closed-loop variance, thereby balancing safety and performance in a principled manner.

Furthermore, to isolate and highlight the contribution of the safety controller, Figure~\ref{fig.comparison_2}~(b) also includes the trajectory of the system governed solely by the safety controller (i.e., without RL intervention). This additional trajectory underscores the ability of the safe controller to maintain robust safety guarantees independently, while the merged controller in the safe optimal case further leverages RL-driven optimality with minimal interference. Collectively, these results underscore the effectiveness of the proposed framework in mitigating risk under uncertainty, outperforming traditional methods that either ignore noise variance or impose overly conservative constraints. To quantitatively evaluate performance retention alongside safety, the expected value of the quadratic cost function \( J_s \) is defined as
\begin{equation}\label{eq.cost_LQR}
J_s = \mathbb{E}\left[\sum\limits_{t=0}^\infty x^\top(t) Q x(t) + u^\top(t) R u(t)\right],
\end{equation}
and is computed for different controllers. Since the proposed data-driven safe control framework aims to minimize intervention with the RL agent—unlike traditional CBF-based Safe RL methods that often override actions—\( J_s \) provides a direct measure of how much of the RL policy’s optimality is preserved.

In addition to the cost metric, we also report a safety-compliance score, defined as the number of simulation runs (out of 100 realizations of Gaussian noise) in which the system remained within the admissible set throughout the simulation. As summarized in Table~\ref{table_1}, the purely optimal controller achieves the lowest cost but fails to satisfy safety in all runs, resulting in 0 out of 100 safety-compliant trials. In contrast, the proposed minimum variance-based probabilistic safe optimal controller maintains full safety compliance while incurring only a slight increase in cost, thereby achieving an effective trade-off between safety and performance.

To further demonstrate the superiority of the proposed minimum variance-based probabilistic safe optimal controller, we conduct a comparative analysis with the certainty-equivalence safe control strategy presented in \cite{de2021low}. In this comparison, the certainty-equivalence controller of \cite{de2021low} is first merged with the optimal controller using the same scalar optimization framework described in our method to ensure a fair comparison. Both approaches are evaluated under the same stochastic setup, using Gaussian noise with covariance \( \Sigma = 0.03 I \), an initial state of \( x(0) = [3.30, -1.25]^\top \), and tested across 100 different independent realizations. As illustrated in Figure~\ref{fig.comparison_3}, subfigure (a) shows that the method in \cite{de2021low} fails to ensure safety, resulting in constraint violations due to the absence of variance-aware synthesis. In contrast, subfigure (b) demonstrates that the proposed minimum variance-based probabilistic controller maintains safety while significantly reducing the variance of the closed-loop trajectories. This outcome reflects the core strength of our approach—explicitly accounting for stochastic uncertainty to minimize the probability of constraint violations. Additionally, while the method in \cite{de2021low} constructs only a single ellipsoidal invariant set (depicted as the blue ellipsoid in Figure~\ref{fig.comparison_3}), which is insufficient to fully cover the admissible set, our method employs a convex hull of multiple ellipsoids, providing broader, less conservative, and more robust safe set coverage.

\begin{figure}
    \centering
    \includegraphics[width=0.8\columnwidth]{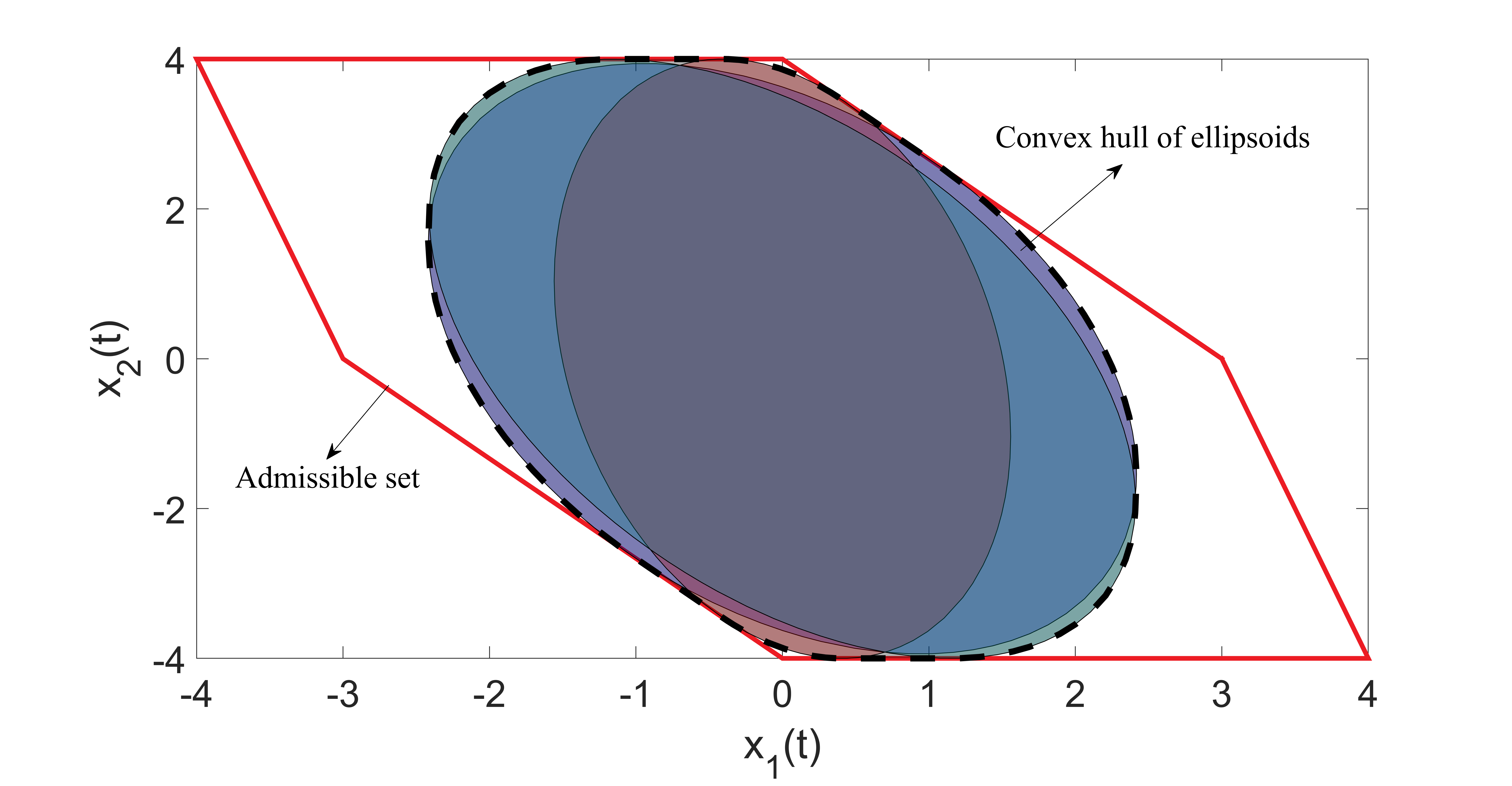}
    \caption{Admissible set containing the ellipsoids and their convex hull obtained using the open-loop method.}
    \label{fig.openLoop}
\end{figure}

\begin{figure}
    \centering
    \includegraphics[width=0.8\columnwidth]{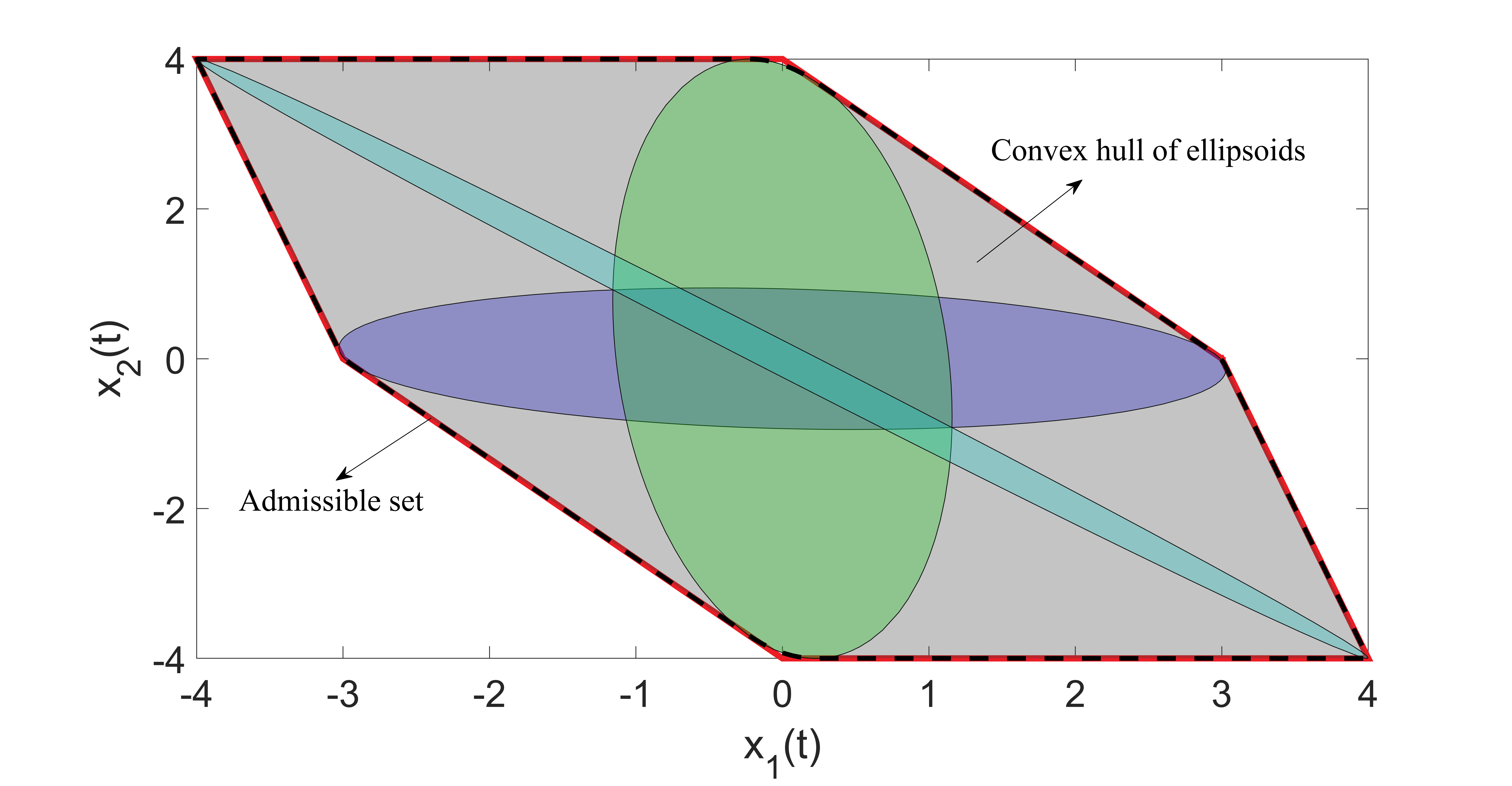}
    \caption{Admissible set containing the ellipsoids and their convex hull obtained using the closed-loop method.}
    \label{fig.closedLoop}
\end{figure}

\begin{figure}
    \centering
    \includegraphics[width=0.8\columnwidth]{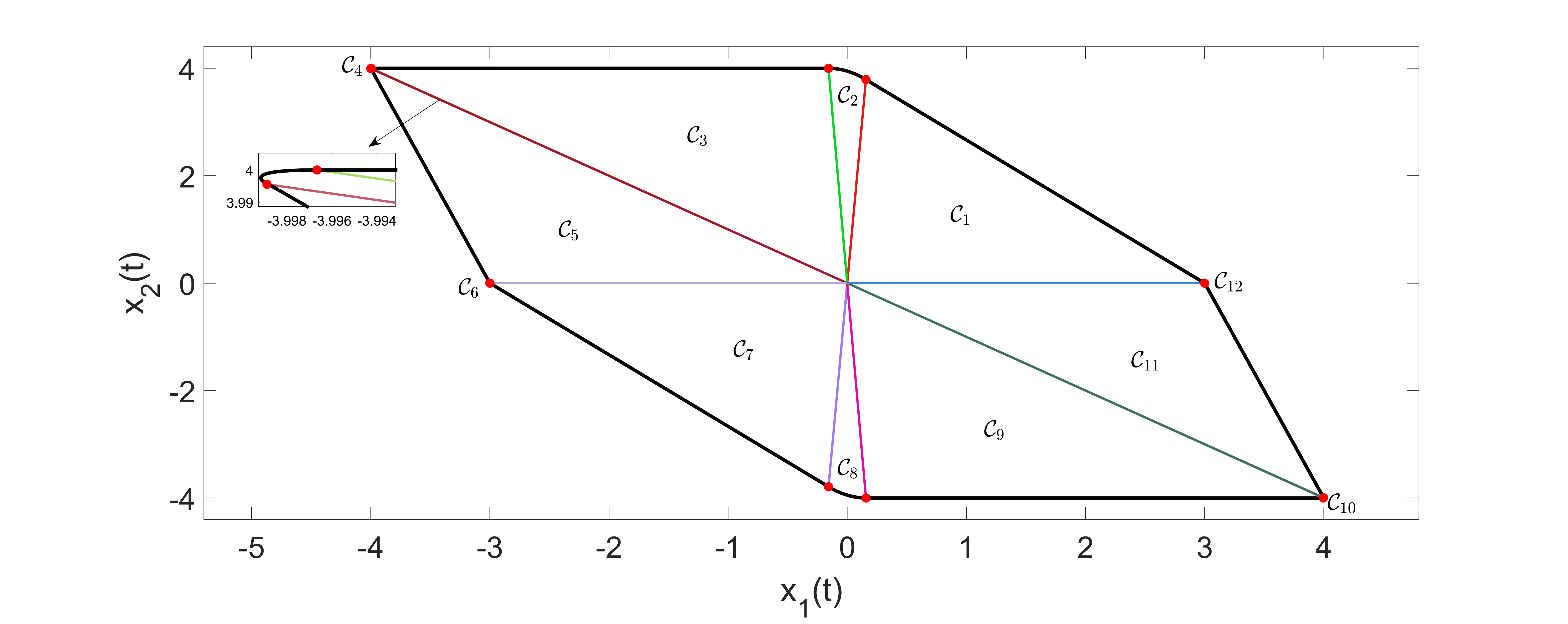}
    \caption{Partitioned convex hull of ellipsoids using Algorithm 1.}
    \label{fig.partitioned_set}
\end{figure}

\begin{figure}
\vspace{-10pt}
         \centering
        {\includegraphics[width=0.8\columnwidth]{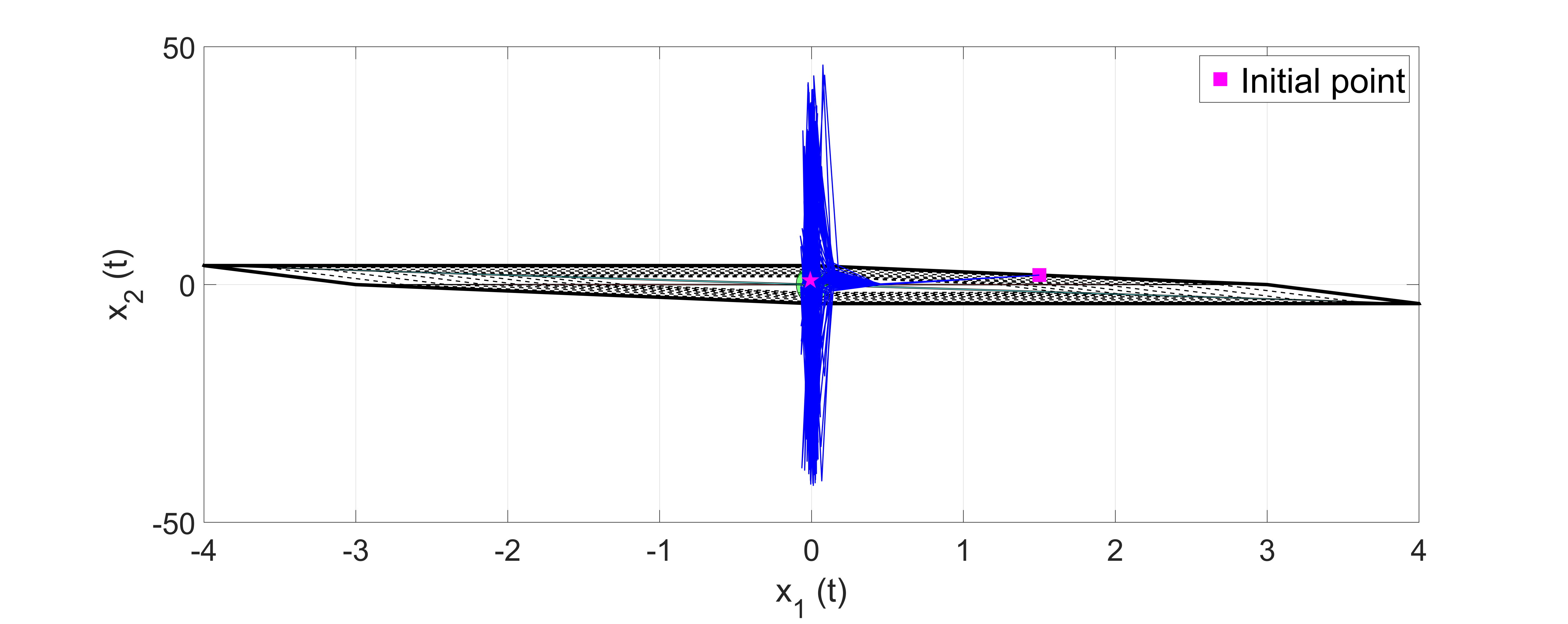}
        \subcaption{}
        \label{suba}}\par  \vspace{-2pt}
        {\includegraphics[width=0.8\columnwidth]{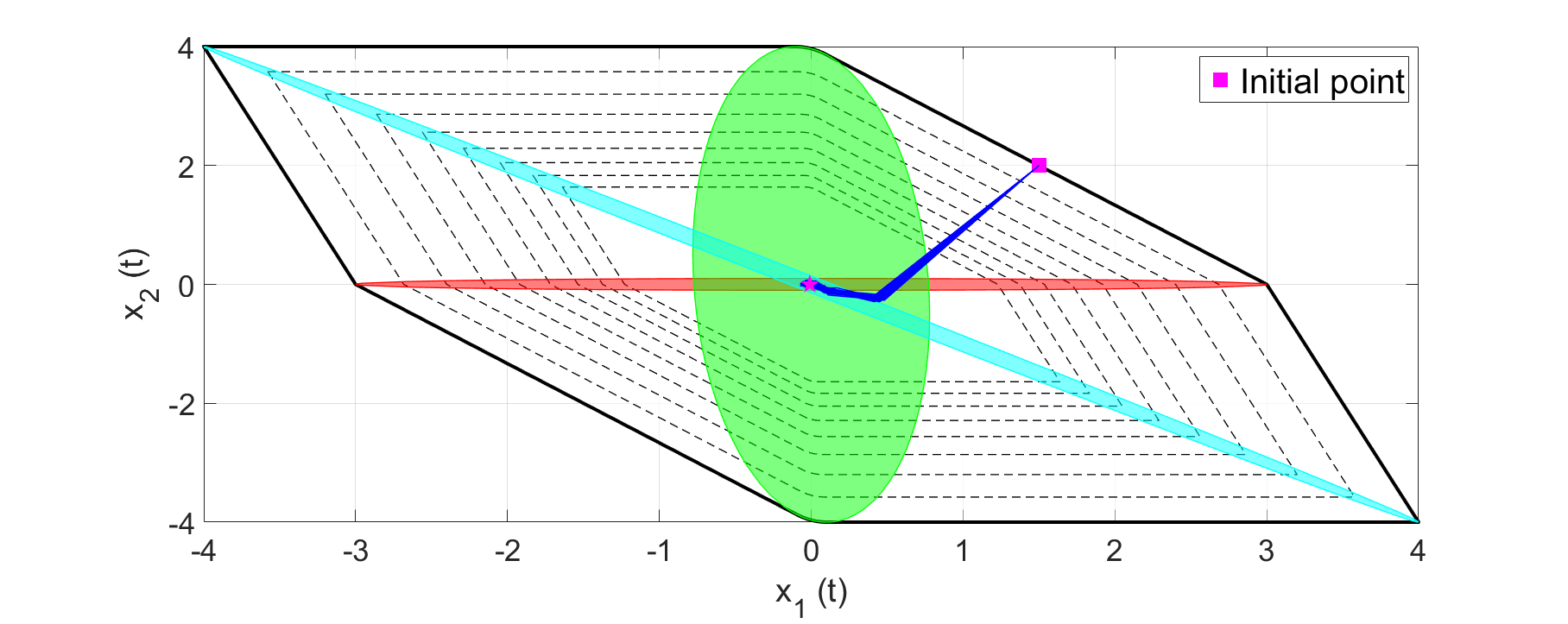}
        \subcaption{}
        \label{subb}} \vspace{-4pt}
                       \caption{Time evolution of the system trajectories under 100 realizations of Gaussian noise with \( \Sigma = 0.0005 I \). Subfigure (a) corresponds to the certainty-equivalence safe controller, which does not account for variance in its synthesis and thus exhibits frequent constraint violations under stochastic disturbances. Subfigure (b) shows the performance of the proposed minimum variance-based probabilistic safe controller, which explicitly incorporates noise variance to ensure robust constraint satisfaction and significantly reduce the risk of safety violations.}
        \label{fig.comparison_1}
\end{figure}

\begin{figure}
\vspace{-10pt}
         \centering
        {\includegraphics[width=0.8\columnwidth]{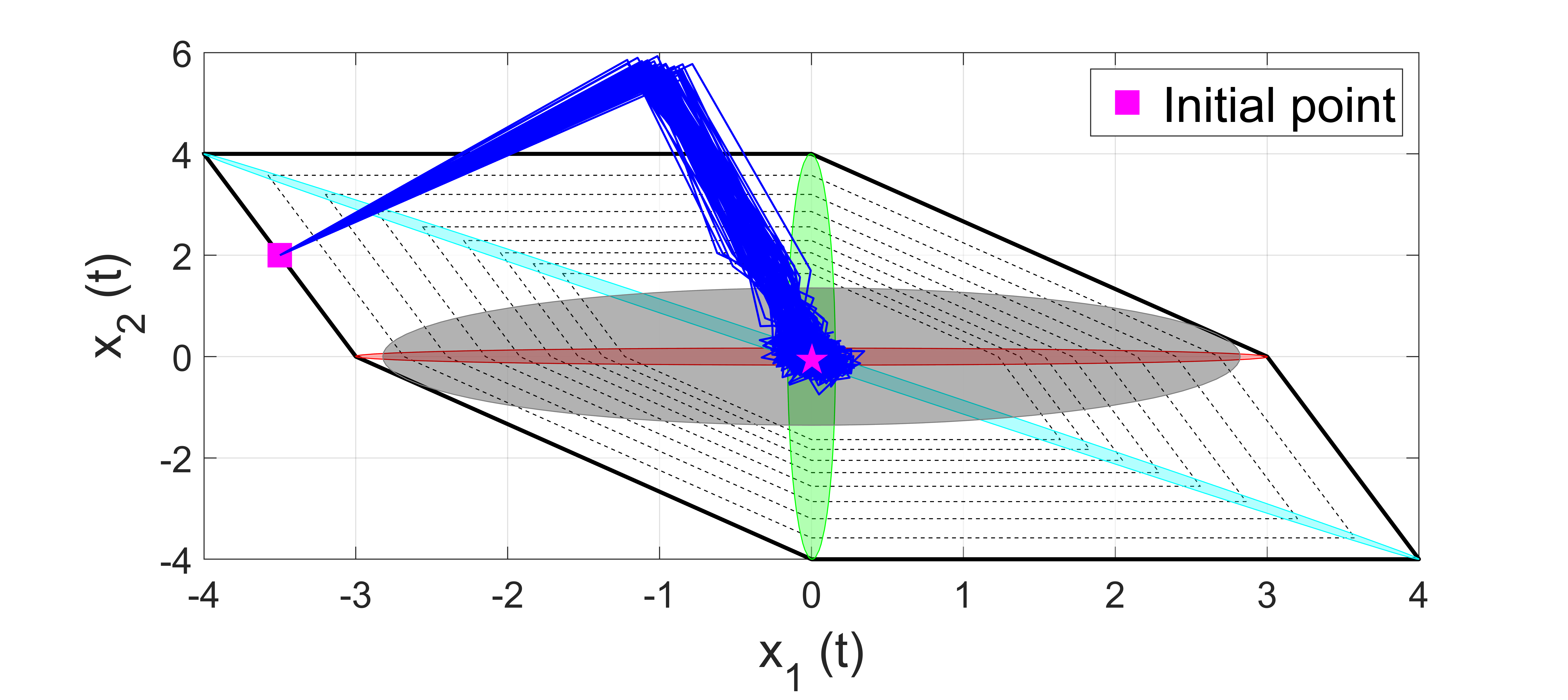}
        \subcaption{}
        \label{suba}}\par  \vspace{-2pt}
        {\includegraphics[width=0.8\columnwidth]{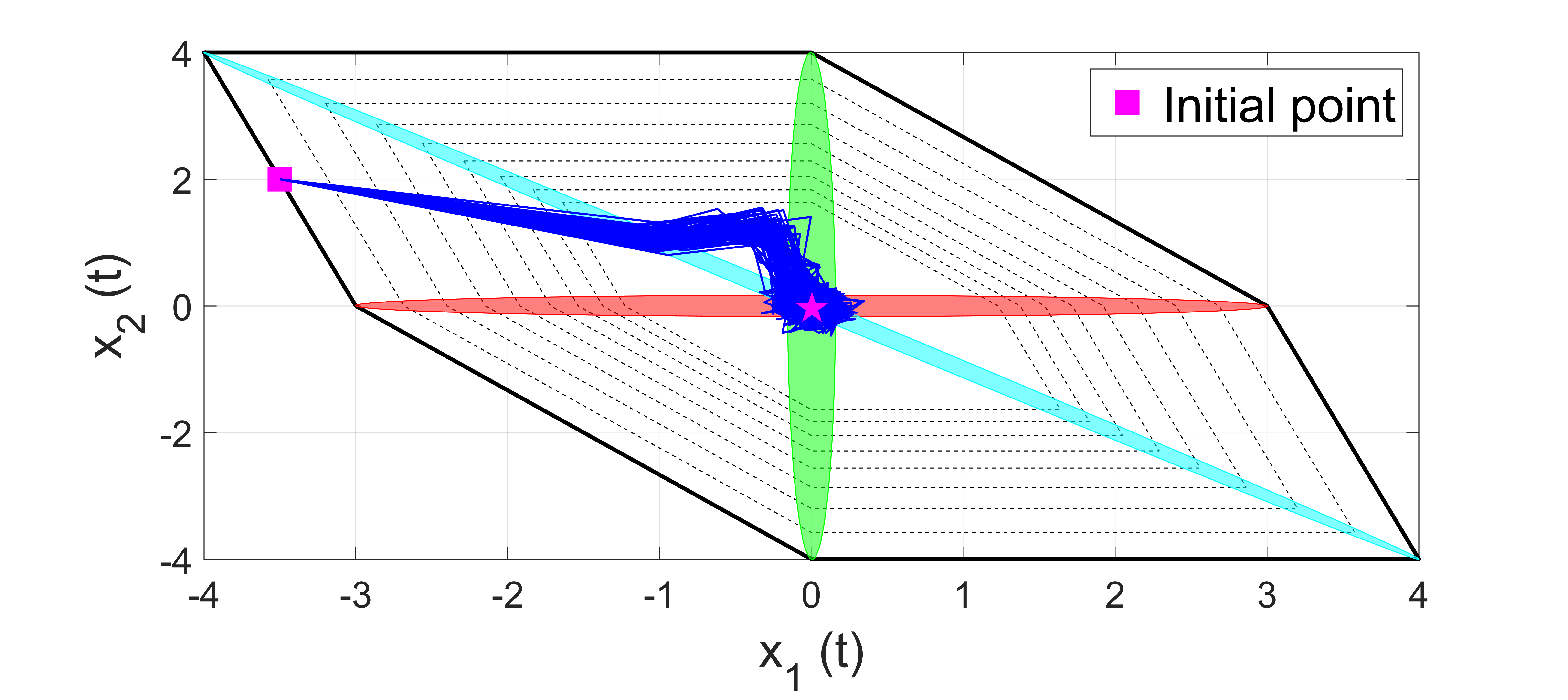}
        \subcaption{}
        \label{suba}}\par  \vspace{-2pt}
        {\includegraphics[width=0.8\columnwidth]{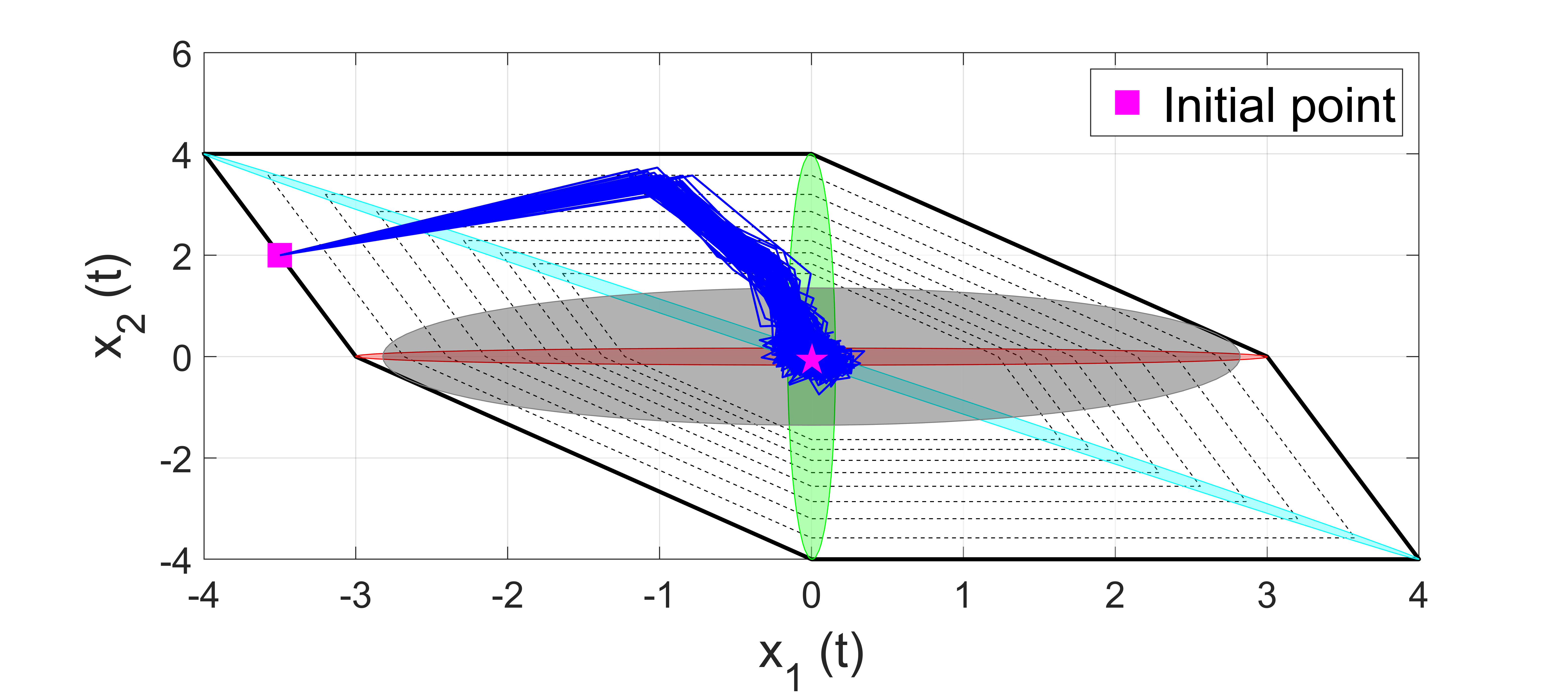}
        \subcaption{}
        \label{subb}} \vspace{-4pt}
                       \caption{Time history of the system states for 100 realizations of Gaussian noise with \( \Sigma = 0.01 I \), illustrating the performance of three controllers: (a) the unconstrained optimal controller, which frequently violates constraints due to the absence of variance-awareness; (b) the minimum variance-based probabilistic safe controller, which ensures constraint satisfaction by minimizing safety violation variance; and (c) the proposed minimum variance-based probabilistic safe optimal controller, which integrates the optimal policy with the safety controller using a data-driven scalar optimization. This integration balances performance and safety by preserving the benefits of the optimal controller while robustly satisfying safety constraints. The gray ellipsoid represents the largest optimal invariant set, and the remaining ellipsoids depict those forming the convex hull.}

        \label{fig.comparison_2}
\end{figure}

\begin{figure}
\vspace{-10pt}
         \centering
        {\includegraphics[width=0.8\columnwidth]{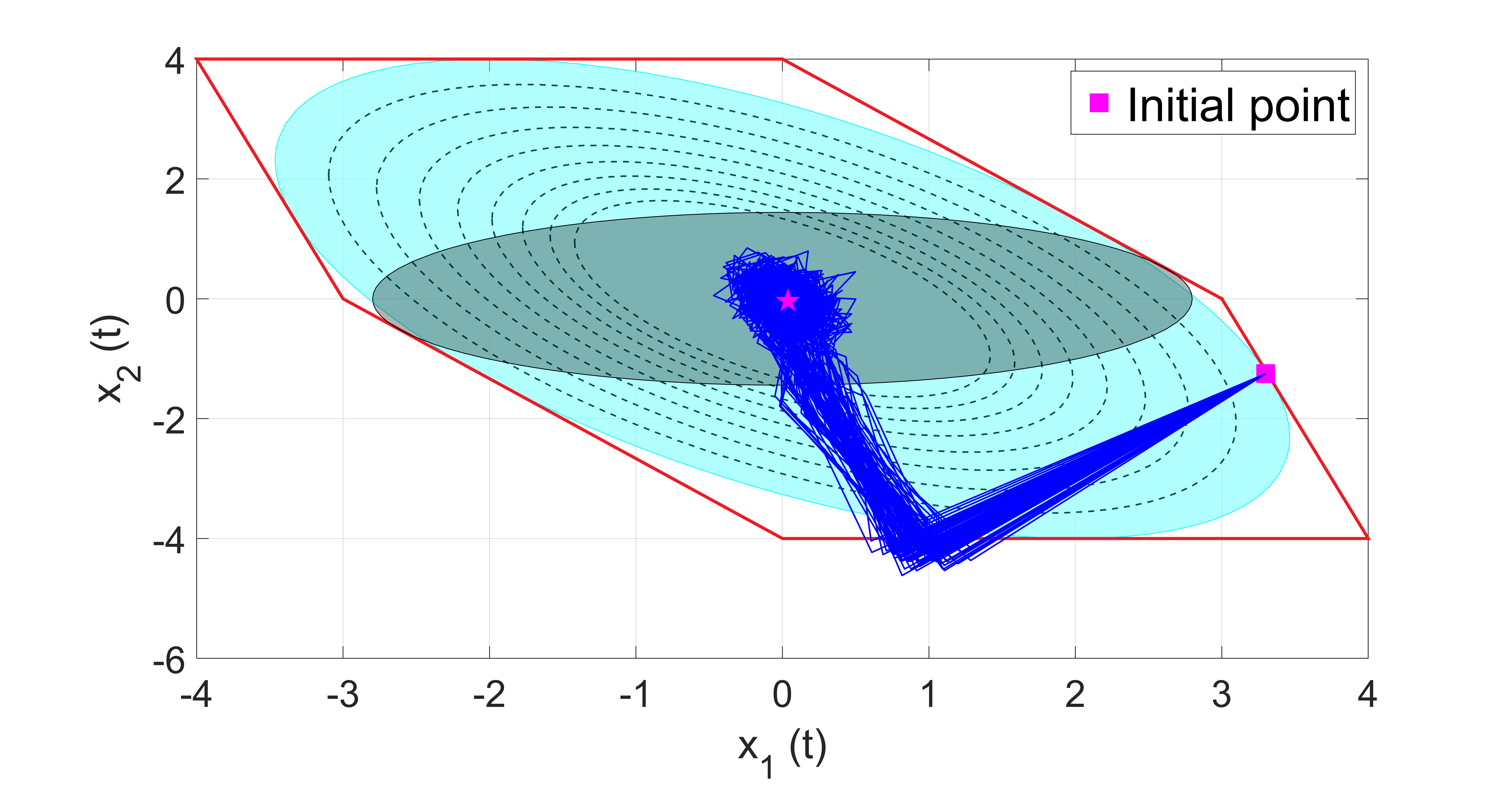}
        \subcaption{}
        \label{suba}}\par  \vspace{-2pt}
        {\includegraphics[width=0.8\columnwidth]{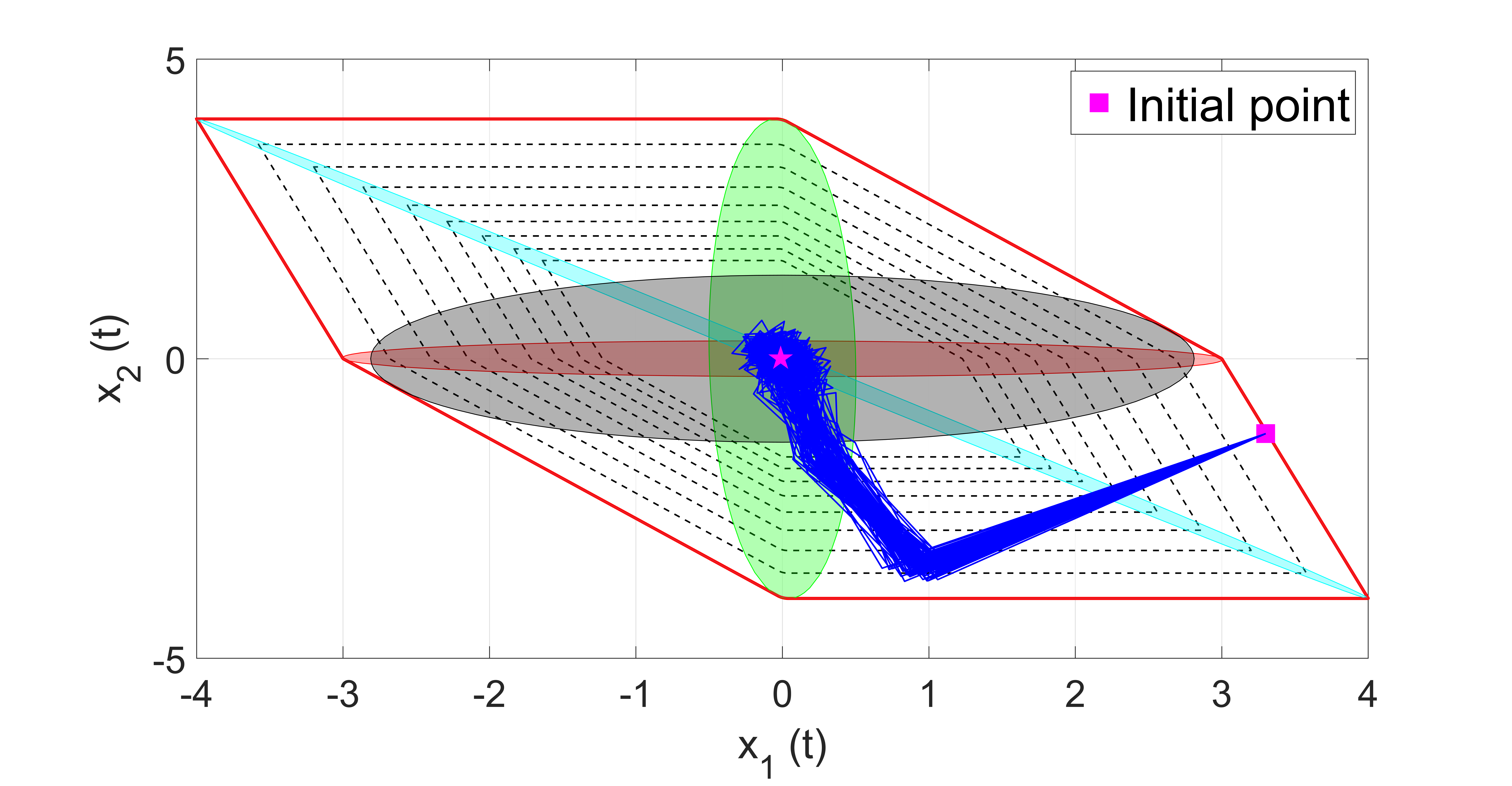}
        \subcaption{}
        \label{suba}}\par  \vspace{-2pt}
                       \caption{Comparison between (a) the certainty-equivalence safe control method of \cite{de2021low} and (b) the proposed minimum variance-based probabilistic safe controller, under 100 different realizations of Gaussian noise with \( \Sigma = 0.03 I \). The proposed method maintains safety by constructing a convex hull of multiple ellipsoids that collectively approximate the admissible set and reduce variance. In contrast, the method in \cite{de2021low} generates only a single ellipsoid (shown in blue), which fails to fully cover the admissible set and results in safety violations under stochastic disturbances. This comparison highlights the improved robustness and safety of the proposed approach.}
        \label{fig.comparison_3}
\end{figure}



\begin{table}[h]
\caption{Quantitative Comparison of Controllers Based on Performance and Safety Compliance}
\label{table_1}
\centering
\begin{tabular}{|c|c|c|}
    \hline
    \textbf{Controller} & \textbf{Expected Cost \( J_s \)} & \textbf{Safety-Compliant Trials (out of 100)} \\
    \hline
    Optimal Controller      & 136270  & 0   \\
    Safe Controller         & 137360  & 100 \\
    Safe Optimal Controller & 136520  & 100 \\
    \hline
\end{tabular}
\end{table}
\vspace{-6pt}

\subsection{Practical Example: Car Lane Keeping Problem}
The lateral dynamics of an autonomous vehicle for a lane-keeping task are modeled by the following discrete-time system \cite{ames2016control}

\[
\begin{aligned}
\begin{bmatrix}
y(t+1) \\
v(t+1) \\
\phi(t+1) \\
\psi(t+1)
\end{bmatrix}
&=
\begin{bmatrix}
1 & T_s & V_0 T_s & 0 \\
0 & 1 + \left( \frac{-C_f + C_r}{M V_0} \right) T_s & 0 & \left( \frac{b C_r - a C_f}{M V_0} - V_0 \right) T_s \\
0 & 0 & 1 & T_s \\
0 & \left( \frac{b C_r - a C_f}{I_z V_0} \right) T_s & 0 & 1
\end{bmatrix}
\begin{bmatrix}
y(t) \\
v(t) \\
\phi(t) \\
\psi(t)
\end{bmatrix}
+
\begin{bmatrix}
0 \\
\frac{C_f}{M} \\
0 \\ 
\frac{a C_f}{I_z}
\end{bmatrix}
T_s u(t)
+
w(t),
\end{aligned}
\]
where \( y(t) \) denotes the lateral displacement, \( v(t) \) is the lateral velocity, \( \phi(t) \) is the yaw angle, and \( \psi(t) \) is the yaw rate at time step \( t \). The control input \( u(t) \) represents the steering angle, and \( w(t) \in \mathbb{R}^4 \) is an exogenous noise accounting for the road curvature. The model parameters are given by \( V_0 = 27.7 \, \mathrm{m/s} \) (longitudinal velocity), \( C_f = 133000 \, \mathrm{N/rad} \) (front cornering stiffness), \( C_r = 98800 \, \mathrm{N/rad} \) (rear cornering stiffness), \( M = 1650 \, \mathrm{kg} \) (vehicle mass), \( I_z = 2315.3 \, \mathrm{kg \cdot m^2} \) (yaw moment of inertia), \( a = 1.11 \, \mathrm{m} \) (distance from the center of gravity to the front axle), \( b = 1.59 \, \mathrm{m} \) (distance from the center of gravity to the rear axle), and \( T_s \) is the sampling time. This model captures the essential lateral and yaw dynamics of the vehicle required for designing a lane-keeping controller under road curvature disturbances.

The objective of the control problem is to maintain the vehicle’s position close to the centerline of the lane while accounting for lateral dynamics and disturbances. By defining the state vector as \( x(t) = [x_1(t), x_2(t), x_3(t), x_4(t)]^\top = [y(t), v(t), \phi(t), \psi(t)]^\top \), the admissible set is specified by safety constraints \( -1.5 < x_1 < 1.5 \) for the lateral displacement and \( -8 < x_2 < 8 \) for the lateral velocity. The controller parameters are also set to $\lambda = 0.84$ and $\delta = 0.1$, and the sampling time is $T_s=0.01 \, \mathrm{s}$.

Unlike the previous 2D example, where the ellipsoidal safe sets and their convex hulls could be directly visualized in the state space, the lane-keeping system considered here is four-dimensional, involving lateral displacement, lateral velocity, yaw angle, and yaw rate. Due to the high-dimensional nature of the system, visualizing the ellipsoids and safe sets in the full state space is not feasible. Therefore, all geometric comparisons and visual representations were restricted to the 2D example, while this 4D example serves as a practical and realistic scenario to evaluate the effectiveness of the proposed method in a high-dimensional setting.

\begin{figure}
\vspace{-10pt}
         \centering
        {\includegraphics[width=0.8\columnwidth]{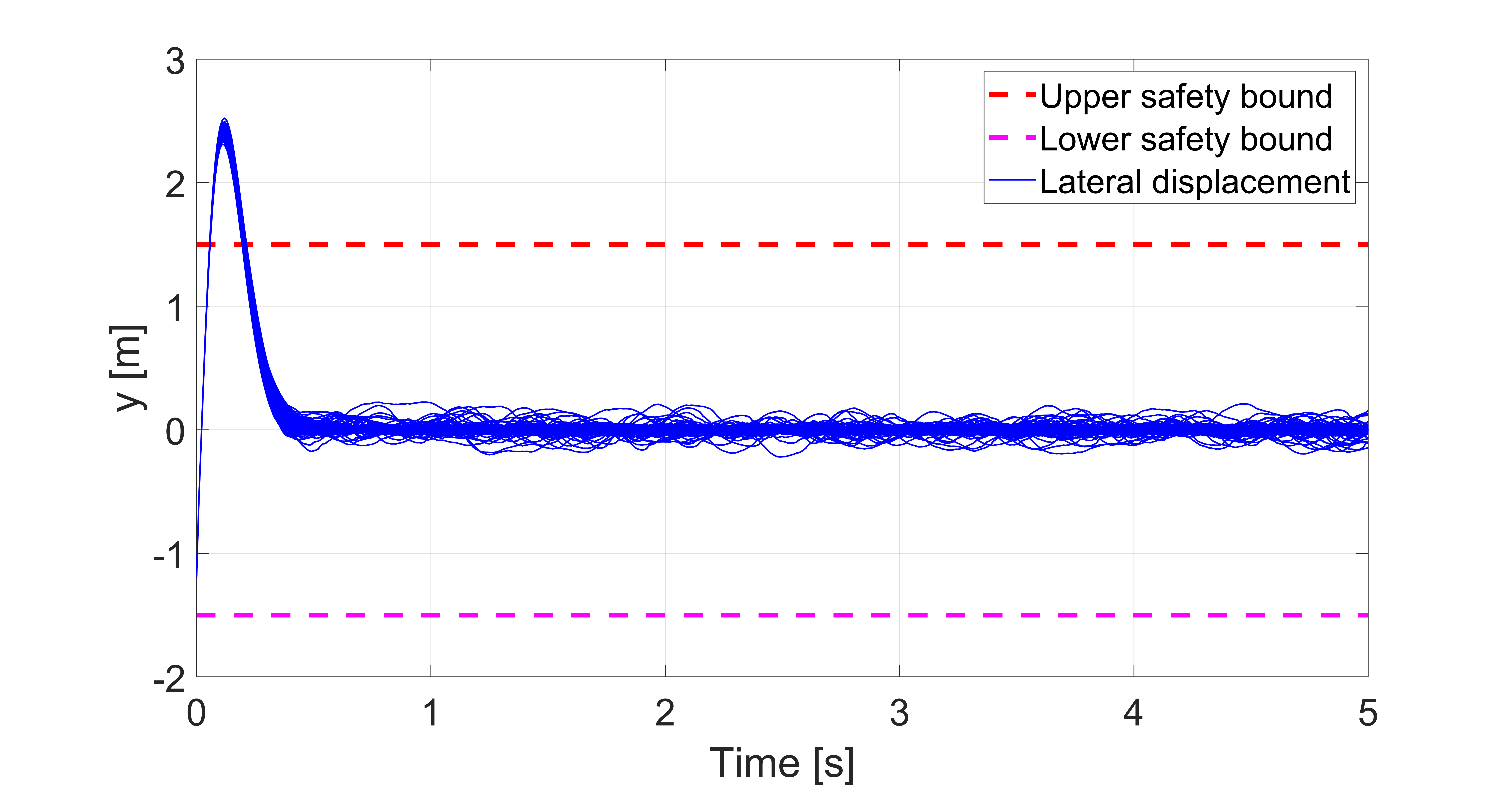}
        \subcaption{}
        \label{suba}}\par  \vspace{-2pt}
        {\includegraphics[width=0.8\columnwidth]{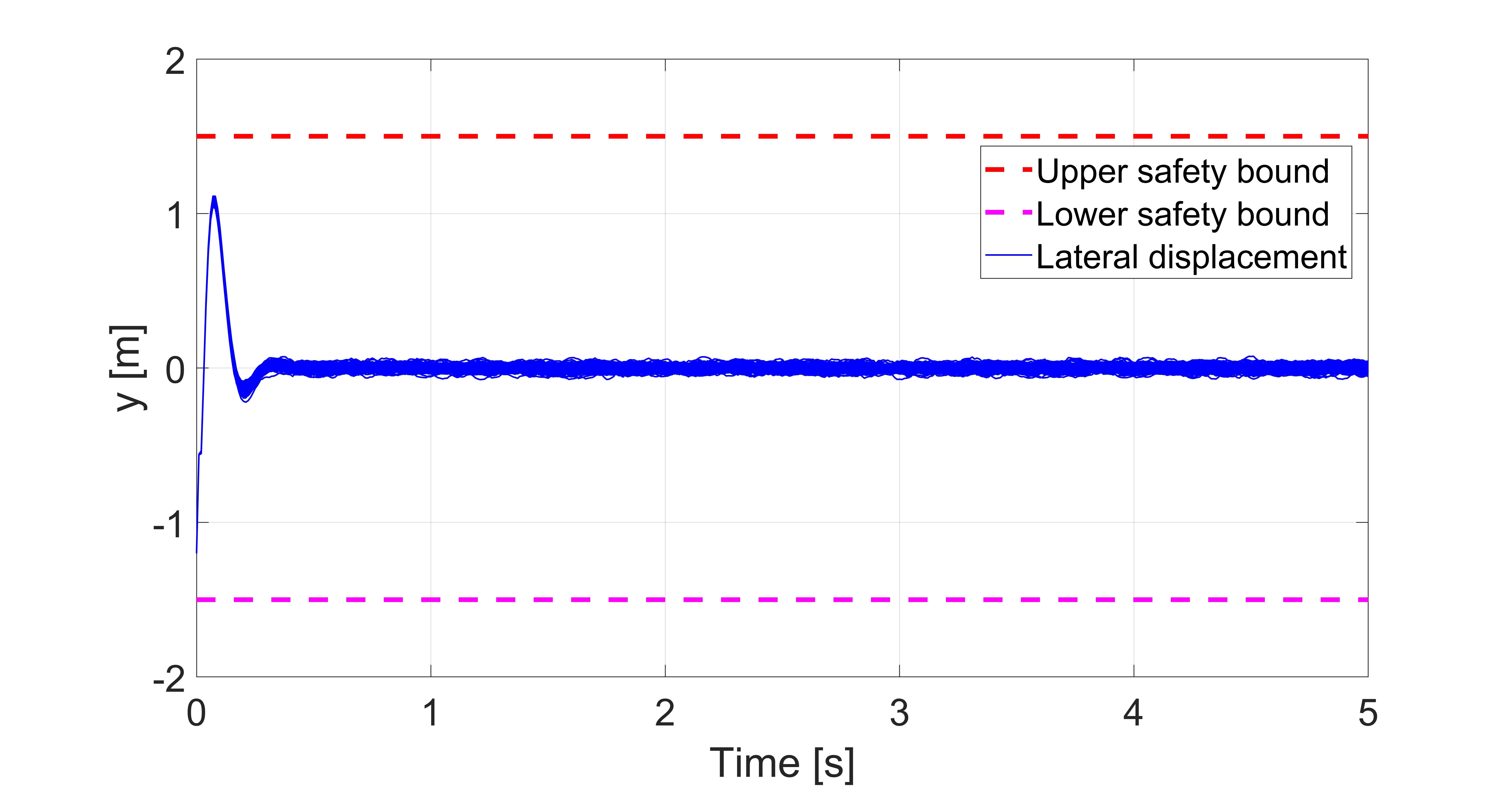}
        \subcaption{}
        \label{suba}}\par  \vspace{-2pt}
                       \caption{Lateral displacement of the vehicle under different control strategies over 100 realizations of Gaussian noise with \( \Sigma = 0.0005I \). Subfigure (a) shows the result of using the purely optimal controller, which violates the lateral safety constraint (\( -1.5 < y < 1.5 \)). Subfigure (b) illustrates the proposed minimum variance-based probabilistic safe optimal controller, which successfully maintains the vehicle's lateral displacement within the admissible safety bounds.}
        \label{fig.lateral_displacement}
\end{figure}

\begin{figure}
\vspace{-10pt}
         \centering
        {\includegraphics[width=0.8\columnwidth]{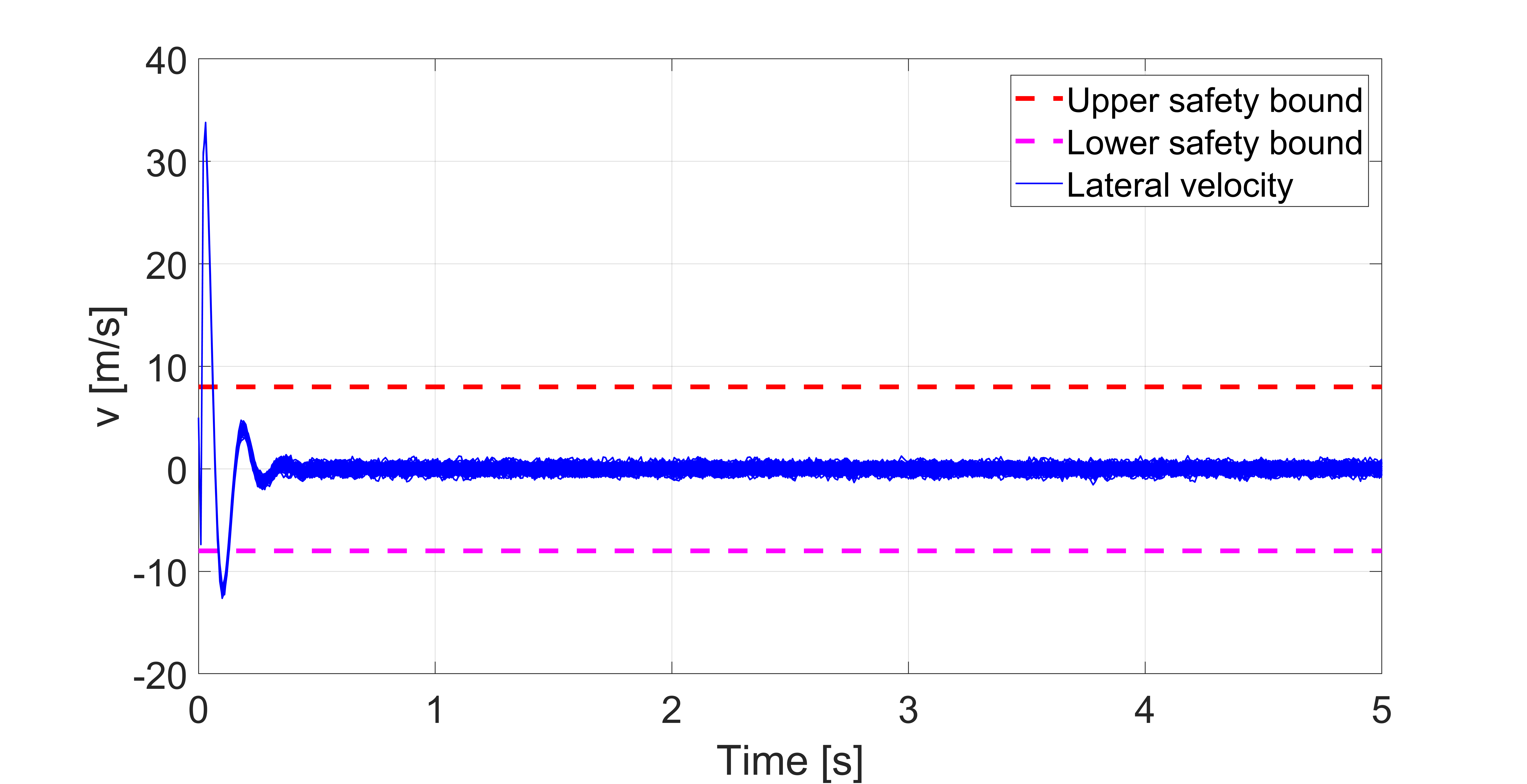}
        \subcaption{}
        \label{suba}}\par  \vspace{-2pt}
        {\includegraphics[width=0.8\columnwidth]{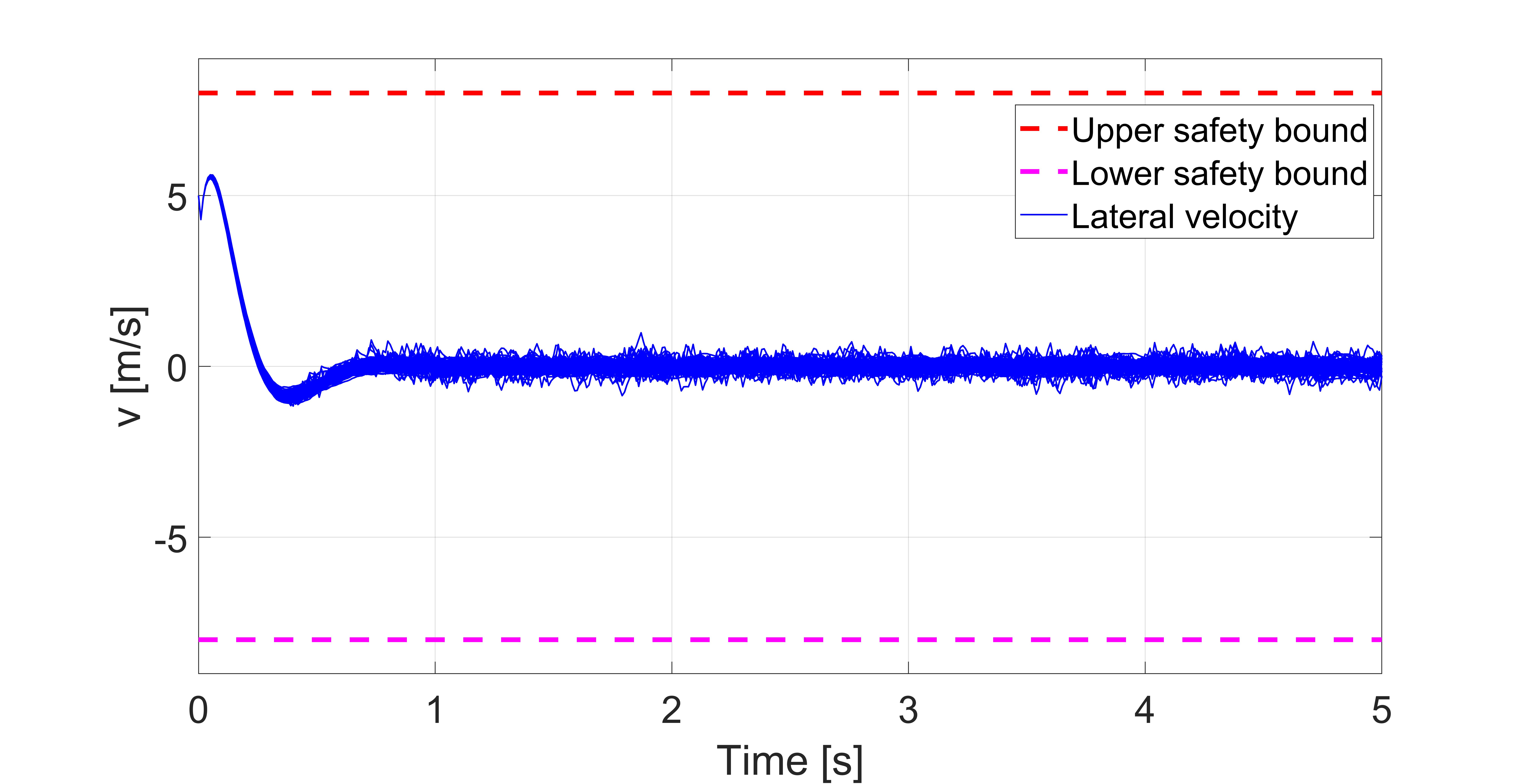}
        \subcaption{}
        \label{suba}}\par  \vspace{-2pt}
                       \caption{Lateral velocity of the vehicle under different control strategies over 100 realizations of Gaussian noise with \( \Sigma = 0.0005I \). Subfigure (a) shows the response under the purely optimal controller, which results in unsafe high lateral velocities. Subfigure (b) shows the performance of the proposed minimum variance-based probabilistic safe optimal controller, which successfully limits the lateral velocity within safe operational bounds.}
        \label{fig.lateral_velocity}
\end{figure}

Figure~\ref{fig.lateral_displacement} illustrates the evolution of the vehicle's lateral displacement under different control strategies over 100 different realizations of Gaussian noise with \( \Sigma = 0.0005I \). As shown in subfigure (a), the purely optimal controller—designed without considering safety—causes the vehicle to drift beyond the safety bounds (e.g., \( -1.5 < y_k < 1.5 \)). In contrast, subfigure (b) demonstrates that the proposed minimum variance-based probabilistic safe optimal controller successfully keeps the lateral displacement within the admissible limits. Figure~\ref{fig.lateral_velocity} presents the corresponding lateral velocity profiles. Subfigure (a) reveals that the purely optimal controller produces unsafe high lateral velocities, whereas subfigure (b) confirms that the safe optimal controller effectively regulates the velocity within a safe range. These results underscore the efficacy of the proposed data-driven safety framework in enforcing probabilistic safety guarantees while maintaining system performance under realistic noisy conditions.

\section{conclusion} \vspace{-2pt}
This paper presents a risk-aware safe reinforcement learning control strategy for stochastic discrete-time linear time-invariant systems. Using the convex hull of ellipsoids, a large portion of the complex admissible sets becomes $\lambda$-contractive in probability, leading to a model-free risk-informed safety backup for RL agents without requiring system model identification. By emphasizing risk-averse control design, minimizing state variance within the closed-loop system, and introducing a data-driven interpolation technique, this approach offers a more robust and efficient solution compared to traditional methods. Unlike conventional myopic safe RL approaches, the proposed framework minimizes intervention with the RL agent to preserve optimal action behavior. Simulation results validate its effectiveness, promising improved safety and performance for reinforcement learning-based control systems in practical, noisy environments. 

Future work will focus on extending the proposed control scheme to accommodate asymmetric admissible sets around the origin, multi-agent systems with coupled constraints, and general nonlinear stochastic dynamics. The latter may involve the use of local linearization techniques or Koopman operator-based modeling to preserve risk-aware safety guarantees in complex environments.

\section*{Acknowledgment}
This work is supported in part by the Department of Navy award N00014-22-1-2159 and in part by the National Science Foundation under award ECCS-2227311.

\bibliographystyle{ieeetr}
\bibliography{Refs}

\end{document}